\documentclass[11pt]{article}
\usepackage{extras}
\usepackage{tabularx,hyperref,amssymb,nicefrac,amsmath,multirow}
\usepackage{comment,amsmath,amssymb,amsfonts} 
\usepackage{epsfig} \usepackage{latexsym,nicefrac,bbm}
\usepackage{xspace}
\usepackage{color,fancybox,graphicx,url,subfigure,fullpage}
\usepackage{tabularx} 
\usepackage{hyperref} 

\usepackage{enumitem}
\usepackage{booktabs}

\newcommand{\nfrac}{\nicefrac}

\long\def\symbolfootnote[#1]#2{\begingroup
\def\thefootnote{\fnsymbol{footnote}}\footnote[#1]{#2}\endgroup}

\setlength{\textheight}{9.2in}
\setlength{\textwidth}{6.7in}
\setlength{\topmargin}{-10pt}

\pagestyle{plain}

\newcommand{\CG}{\mbox{${\mathcal G}$}}
\newcommand{\CF}{\mbox{${\mathcal F}$}}
\newcommand{\CP}{\mbox{${\mathcal P}$}}

\newcommand{\CM}{\mbox{${\mathcal M}$}}

\newcommand{\CE}{\mbox{${\mathcal E}$}}

\newcommand{\ppj}{\mbox{${p'_j}$}}
\newcommand{\cj}{\mbox{${\mathfrak{e}_j}$}}

\newcommand{\desire}{\mbox{\rm desire}}

\newcommand{\bpb}{\mbox{\rm bpb}}

\newcommand{\MM}{\mbox{\boldmath $M$}}
\newcommand{\one}{\mbox{\boldmath $1$}}
\newcommand{\zero}{\mbox{\boldmath $0$}}

\newcommand{\ra}{\rightarrow}
\newcommand{\R}{\mathbb{R}}
\newcommand{\Rplus}{\R_+}

\newcommand{\li}{{\lambda_i}}

\newcommand{\yy}{\mbox{\boldmath $y$}}

\newcommand{\CA}{\mbox{${\mathcal A}$}}

\newcommand{\ppp}{\mbox{\boldmath $p'$}}

\newcommand{\pp}{\mbox{\boldmath $p$}}

\newcommand{\pv}{\mbox{\boldmath $v$}}
\newcommand{\py}{\mbox{\boldmath $y$}}
\newcommand{\pq}{\mbox{\boldmath $q$}}
\newcommand{\pr}{\mbox{\boldmath $r$}}
\newcommand{\pa}{\mbox{\boldmath $a$}}

\newcommand{\ps}{\mbox{\boldmath $s$}}

\newcommand{\pgamma}{\mbox{\boldmath $\gamma$}}
\newcommand{\plambda}{\mbox{\boldmath $\lambda$}}

\newcommand{\xx}{\mbox{\boldmath $x$}}

\pdfpagewidth 8.5in
\pdfpageheight 11.0in

\newcommand{\dl}{\mbox{${\delta}$}}
\newcommand{\gijk}{\mbox{${\gamma^i_{jk}}$}}
\newcommand{\uijk}{\mbox{${u^i_{jk}}$}}
\newcommand{\lijk}{\mbox{${l^i_{jk}}$}}
\newcommand{\qijk}{\mbox{${q^i_{jk}}$}}
\renewcommand{\ppj}{\mbox{${p'_j}$}}

\renewcommand{\cj}{\mbox{${c_j}$}}

\renewcommand{\CA}{\mbox{${\cal A}$}}
\renewcommand{\CP}{\mbox{${\cal P}$}}
\newcommand{\CY}{\mbox{${\cal Y}$}}
\renewcommand{\CE}{\mbox{${\cal E}$}}

\newcommand{\pc}{\mbox{\boldmath $c$}}
\newcommand{\wi}{\mbox{\boldmath $w^i$}}

\newcommand{\pE}{\mbox{\boldmath $\CE$}}

\renewcommand{\pv}{\mbox{\boldmath $v$}}
\renewcommand{\py}{\mbox{\boldmath $y$}}
\newcommand{\cc}{\mbox{\boldmath $c$}}
\renewcommand{\pq}{\mbox{\boldmath $q$}}
\renewcommand{\pr}{\mbox{\boldmath $r$}}

\newcommand{\pbeta}{\mbox{\boldmath $\beta$}}
\renewcommand{\pa}{\mbox{\boldmath $a$}}

\renewcommand{\ps}{\mbox{\boldmath $s$}}

\renewcommand{\one}{\mbox{\boldmath $1$}}
\renewcommand{\zero}{\mbox{\boldmath $0$}}

\newcommand{\zb}{\mbox{$z_\circ$}}
\newcommand{\zs}{\mbox{$z_*$}}
\newcommand{\pppb}{\mbox{\boldmath $p'_\circ$}}
\newcommand{\ppps}{\mbox{\boldmath $p'_*$}}
\newcommand{\qqb}{\mbox{\boldmath $q_\circ$}}
\newcommand{\qqs}{\mbox{\boldmath $q_*$}}
\newcommand{\rrb}{\mbox{\boldmath $r_\circ$}}
\newcommand{\rrs}{\mbox{\boldmath $r_*$}}
\newcommand{\yys}{\mbox{\boldmath $y_*$}}
\newcommand{\yyb}{\mbox{\boldmath $y_\circ$}}
\newcommand{\bbb}{\mbox{\boldmath ${\beta_\circ}$}}
\newcommand{\bbs}{\mbox{\boldmath ${\beta_*}$}}
\newcommand{\llb}{\mbox{\boldmath ${\lambda_\circ}$}}
\newcommand{\lls}{\mbox{\boldmath ${\lambda_*}$}}
\newcommand{\ggb}{\mbox{\boldmath ${\gamma_\circ}$}}
\newcommand{\ggs}{\mbox{\boldmath ${\gamma_*}$}}

\def\CA{{\mathcal A}}
\def\CP{{\mathcal P}}
\def\CC{{\mathcal C}}
\def\CG{{\mathcal G}}
\def\CM{{\mathcal M}}
\def\CP{{\mathcal P}}
\def\CF{{\mathcal F}}

\date{}
\title{On Computability of Equilibria in Markets with Production}
\author{\Large\em Jugal Garg\thanks{College of Computing, Georgia Institute of Technology, Atlanta, GA 30332-0280. Email:
{\sf jgarg@cc.gatech.edu}. \newline Supported by Algorithms and Randomness Center (ARC) Postdoctoral Fellowship.} \and {\Large\em
Vijay V.  Vazirani}\thanks{College of Computing, Georgia Institute of Technology, Atlanta, GA 30332--0280.  Email: {\sf
vazirani@cc.gatech.edu}. Research supported by NSF Grants CCF-0914732 and CCF-1216019, and a Guggenheim Fellowship.}}

\begin{document}
\maketitle

\thispagestyle{empty}
\begin{abstract}
Although production is an integral part of the Arrow-Debreu market model, most of the work in theoretical computer
science has so far concentrated on markets without production, i.e., the {\em exchange economy}. This paper takes a
significant step towards understanding computational aspects of markets with production.  

We first define the notion of separable, piecewise-linear concave (SPLC) production by analogy with SPLC utility functions.
We then obtain a linear complementarity problem (LCP) formulation that captures exactly the set of equilibria for
Arrow-Debreu markets with SPLC utilities and SPLC production, and we give a complementary pivot algorithm for finding an
equilibrium.  This settles a question asked by Eaves in 1975 \cite{eaves} of extending his complementary pivot algorithm to
markets with production.

Since this is a path-following algorithm, we obtain a proof of membership of this problem in PPAD, using Todd, 1976. We also
obtain an elementary proof of existence of equilibrium (i.e., without using a fixed point theorem), rationality, and oddness
of the number of equilibria. We further give a proof of PPAD-hardness for this problem and also for its restriction to
markets with linear utilities and SPLC production.  Experiments show that our algorithm runs fast on randomly chosen
examples, and unlike previous approaches, it does not suffer from issues of numerical instability. Additionally, it is
strongly polynomial when the number of goods or the number of agents and firms is constant. This extends the result of
Devanur and Kannan (2008) to markets with production.

Finally, we show that an LCP-based approach cannot be extended to PLC (non-separable) production, by constructing an example
which has only irrational equilibria.
\end{abstract}

\newpage
\setcounter{page}{1}

\section{Introduction} \label{sec.intro}
Among the most novel and significant additions to the theory of algorithms and computational complexity over the last decade
are deep insights into the computability of equilibria, both Nash and market. However, within  the study of market
equilibria, most of this work was concentrated on the exchange economy,
\cite{DPSV,JainAD,CMV,CPV,CSVY,Chen.plc,ChenTeng,Ye-Ex,DK08,VY,orlin,GMSV,vegh1,CPY,DM}\footnote{This list is by no means
complete.}, i.e., the Arrow-Debreu (AD) model \cite{AD} without
production; as described in Section \ref{sec:rel-wor}, the results obtained so far for markets with production are quite
rudimentary. Production is, of course, central to the Arrow-Debreu model and to most economies, and this represents a
crucial gap in the current theory.  The purpose of this paper is to address this gap. 

For the exchange economy, once the case of linear utilities was settled with polynomial time algorithms
\cite{DPSV,DV1,JMS,GK,JainAD,Ye-AD}, the next case was understanding the computability of equilibria in markets with
separable, piecewise-linear concave (SPLC) utilities. A series of remarkable works resolved this long-standing open problem
and showed it is PPAD-complete \cite{Chen.plc,ChenTeng,VY}, not only for Arrow-Debreu markets but also for Fisher
markets\footnote{Fisher market is a special case of exchange economy.} 
\cite{scarf}.  However, the proof of membership in PPAD, given in \cite{VY}, was indirect -- using the characterization of
PPAD via the class FIXP \cite{EY07}. 

\cite{GMSV} gave a direct proof of membership in PPAD by giving a path-following algorithm. This is a complementary pivot 
algorithm using Lemke's scheme \cite{lemke}, and it builds on a classic result of Eaves \cite{eaves} giving a similar
algorithm for the linear exchange market. 
Another importance of the result of \cite{GMSV} was that experimental results conducted on randomly generated instances 
showed that their algorithm is practical; in particular, it does not suffer from issues of numerical instability that are
inherent in previous approaches, e.g., \cite{scarf.fp, smale}.
In the face of PPAD-completeness of the problem, and the current status of the 
P = PPAD question, this is the best one can hope for.

For the case of 2-Nash, the problem of finding a Nash equilibrium in a 2-player bimatrix game, a path-following algorithm and
membership in PPAD are provided by the classic Lemke-Howson algorithm \cite{lh}; this is also a complementary pivot 
algorithm. We note that whereas several complementary pivot algorithms have since been given for Nash equilibrium
\cite{lemke,ET,HP,koller1996,stengel-econ},
for market equilibrium, \cite{GMSV} and ours are the only such algorithms since Eaves' work.

An example of Mas-Colell, having only irrational equilibria for Leontief utilities (stated in \cite{eaves}), shows that a
similar approach is not feasible for arbitrary piecewise-linear, concave utilities. The latter case lies in the class FIXP
and proving it FIXP-hard remains an open problem.

Our work on markets with production is inspired by this development and brings it on par with the current status of exchange
markets. On the one hand, this development made our task easier. On the other hand, unlike the case of exchange markets,
where a demarcation between rational and irrational equilibria was already well known before \cite{GMSV} embarked
on their work (see \cite{GMSV} for a detailed discussion of this issue),
for markets with production, no such results were known, making our task harder.

We first define the notion of separable, piecewise-linear concave (SPLC) production by analogy with SPLC utility functions. 
We then obtain a linear complementarity problem (LCP) formulation that captures exactly the set of equilibria for Arrow-Debreu
markets with SPLC utilities and SPLC production, and we further give a complementary pivot algorithm for finding an equilibrium.
This settles a question asked by Eaves in 1975 \cite{eaves} of extending his complementary pivot algorithm to markets with production.

Since this is a path-following algorithm, we obtain a proof of membership of this problem in PPAD, using Todd, 1976. 
We also obtain an elementary proof of existence of 
equilibrium (i.e., without using a fixed point theorem), rationality, and oddness of the number
of equilibria. We further give a proof of PPAD-hardness for this problem and also for its restriction to 
Arrow-Debreu markets with linear utilities and SPLC production.

Experiments show that our algorithm is practical. 
We note that market equilibrium algorithms, especially those involving production, are important in practice
\cite{Shoven,ellickson}.  Because of the PPAD-completeness of the problem, ours is among the best available avenues to
designing
equilibrium computing algorithms that can have an impact in practice. We further show that our algorithm is strongly polynomial 
when the number of goods or the number of agents and firms is constant. This extends the result of 
\cite{DK08} to markets with production.

Finally, we show that an LCP-based approach cannot be extended to PLC (non-separable) production, by constructing an example which 
has only irrational equilibria. We expect this case to be FIXP-complete, even if utilities are linear.

Besides being practical, complementary pivot algorithms have the additional advantage that they have provided deep insights into the problems studied in the past.
A case in point is the classic Lemke-Howson algorithm \cite{lh} for computing a Nash equilibrium of a 2-person bimatrix game,
where besides oddness of the number of equilibria, it yielded properties such as index, degree, and stability \cite{arndt,shapley,g2,AS}.
As stated above, we have already established that our problem has an odd number of equilibria. We expect our algorithm
to yield additional insights as well.

\subsection{Salient features}
Our result involves two main steps. The first is deriving an LCP whose solutions are exactly the set of equilibria
of our market with production. The second is ensuring that Lemke's scheme is guaranteed to converge to a solution. 

Lemke's scheme involves following the unique path that starts with the primary ray on the one-skeleton of the associated
polyhedron (see Appendix \ref{sec.LCP} for detailed explanation of these terms).
Such a path can end in two ways, either a solution to the LCP or a secondary ray. In the latter case, the scheme provides
no recourse and simply aborts without finding a solution. We show that the associated polyhedron of our LCP has no
secondary rays and therefore Lemke's scheme is guaranteed to give a solution. 

Several classes of LCPs have been identified for which Lemke's scheme converges to a solution \cite{murty,cottle}. However,
none of these classes captures our LCP, or the LCPs of Eaves \cite{eaves} for linear exchange markets or \cite{GMSV} for
SPLC exchange markets, even though they resort to a similar approach. In the progression of these three works, the LCPs have
become more involved and proving the lack of secondary rays has become increasingly harder. Clearly, this calls for further
work to understand the underlying structure in these LCPs.

Some new ideas were needed for deriving the LCP. The LCP has to capture: (i) optimal production plans for each firm,
(ii) optimal bundle for each agent, and (iii) market clearing conditions. 
Given prices, the optimal production plan of each firm can be obtained through a linear program (LP) using variables
capturing amount of raw and produced goods. Then, using complementary slackness and feasibility conditions of these LPs, we
obtain an LCP to capture the production. Next, an LCP for consumption and market clearing is sought using
variables capturing amount of goods consumed in order to merge it with the production LCP, however it turned out to be unlikely (see
Remark \ref{arem.amtvar} in Section \ref{asec.charNlcp} for details) and the correct way is to use variables capturing value of goods. The only
way out was to somehow convert amount variables in production LCP to value variables, which fortunately was doable.

The resulting LCP captures market equilibria, however it has non-equilibrium solutions as well and more importantly, it is
homogeneous -- the corresponding polyhedron forms a cone, with origin being the only vertex. Similar issues arise in \cite{eaves,GMSV}
too, and they deal with these by simply imposing a lower bound of $1$ on every price variable\footnote{This is without loss
of generality as market equilibrium prices are scale invariant \cite{AD}}. 
It turns out that such a trivial lower bound does not work for our LCP; prices where no firm can make positive profit are 
needed (see Section \ref{asec.nhlcp} for details). We recourse to the sufficiency condition {\em no production out of
nothing} \cite{AD,maxfield} for the existence of equilibria and show that such prices exist (using Farkas' lemma) and can be
obtained. After imposing such a lower bound, the resulting LCP exactly captures market equilibria (up to scaling). 

Next we show that our algorithm is strongly polynomial when either the number of goods or the number of agents and firms are
constant, extending the result of Devanur and Kannan \cite{DK08} to markets with production. 
For this, we decompose a constant dimensional space into polynomially many regions and show that every region can contain at
most one vertex traversed by our algorithm. 

Since our algorithm follows a complementary path, it together with Todd's result \cite{todd} on locally orienting such paths, proves
that the problem is in PPAD.
Since every LCP has a vertex solution
(if a solution exists) in the polyhedron associated with it, there is an equilibrium with rational prices. In the absence of secondary rays, all
but one of the equilibria get paired up through complementary paths; the remaining one with the primary ray. Therefore 
there are odd number of equilibria. 

Market equilibrium computation in exchange markets with SPLC utilities is known to be PPAD-hard \cite{Chen.plc,VY}.  We reduce such a
market to a market with linear utilities and SPLC production, thereby proving its hardness too.  This reduction is general
enough, in a sense that an exchange market with concave utilities can be reduced to an equivalent market with linear
utilities and concave production. Since linear is a special case of SPLC functions, we obtain PPAD-completeness for markets
with SPLC utilities and SPLC production. 

\subsection{Related work}\label{sec:rel-wor}
Jain and Varadarajan \cite{JV} studied the Arrow-Debreu markets with production, and gave a polynomial time algorithm for
production and utility functions coming from a subclass of CES (constant elasticity of substitution) functions; i.e.,
constant returns to scale (CRS) production.  They also gave a reduction from the exchange market with CES utilities to a
linear utilities market in which firms have CES production. Our reduction from the exchange market to a linear utilities
market with arbitrary production is inspired by their reduction but is more general.

We note that CRS production is relatively easy to deal with, since there is no positive profit to any firm at an
equilibrium. To the best of our knowledge, no computational work has been done for the original Arrow-Debreu market with decreasing
returns to scale production. 

\subsection{Organization of the paper}
In Section \ref{sec.model}, we define Arrow-Debreu market model and present an example of markets with linear utilities and
PLC production having only irrational equilibria. Markets with SPLC utilities and SPLC production, and the sufficiency
conditions for the existence of equilibrium are defined in Section \ref{asec.model}. Equilibrium characterization as
complementarity conditions is presented in Section \ref{asec.charNlcp}. In Section \ref{asec.nhlcp}, we derive an LCP
formulation which captures exactly the set of equilibria of markets with SPLC utilities and SPLC production. The algorithm
and the proof of convergence appears in Section \ref{asec.alg}.  Section \ref{asec.constant} presents a strongly polynomial
bound for our algorithm when either the number of goods or the number of agents plus firms is constant. For the hardness
result, we derive a general reduction from an exchange market to a market with production with linear utilities in Section
\ref{asec.exctoprod}.  Finally, experimental results are presented in Section \ref{asec.exp}, demonstrating that our
algorithm is practical. 

\section{The Arrow-Debreu Market Model}\label{sec.model}
The market model defined by Arrow and Debreu \cite{AD} consists of the following: A set $\CG$ of divisible goods, a set
$\CA$ of agents and a set $\CF$ of firms. Let $n$ denote the number of goods in the market. 

The production capabilities of firm $f$ is defined by a set of production possibility vectors (PVVs) $\CY^f$; in a vector
negative coordinates represent inputs and positive coordinates represents output. The set of input and output goods for each
firm is disjoint. Standard assumptions on set $\CY^f$ are (see \cite{AD}):

\begin{enumerate}
\item Set $\CY^f$ is closed and convex; convexity captures law of diminishing returns.
\item {\em Downward close} - additional raw goods do not decrease the production.
\item {\em No production out of nothing} - firms together can not produce something out of nothing, i.e., $\oplus_{f \in \CF} \CY^f
\cap \mathbb R^n_+ = \zero$.  
\end{enumerate}

The goal of a firm is to produce as per a profit maximizing (optimal) schedule.  Firms are owned by agents: $\theta^i_f$ is
the profit share of agent $i$ in firm $f$ such that $\forall f\in \CF,\ \sum_{i\in\CA} \theta^i_f=1$.

Each agent $i$ comes with an initial endowment of goods; $w^i_j$ is amount of good $j$ with agent $i$.
The preference of an agent $i$ over bundles of goods is captured by a
non-negative, non-decreasing and concave utility function $U^i:\R_+^n\rightarrow \R_+$. $U^i$ can be assumed to be non-decreasing is due to free
disposal property, and concavity captures the law of diminishing marginal returns. 
Each agent wants to buy a (optimal) bundle of goods that maximizes her utility to the extent allowed by her earned money --
from initial endowment and profit shares in the firms.  Without loss of generality, we assume that total initial endowment
of every good is 1, i.e, $\sum_{i\in\CA} w^i_j = 1, \forall j\in \CG$\footnote{This is like redefining the unit of goods by
appropriately scaling utility and production parameters.}.

Given prices of goods, if there is an assignment of optimal production schedule to each firm and optimal affordable bundle
to each agent so that there is neither deficiency nor surplus of any good, then such prices are called {\em market
clearing} or {\em market equilibrium} prices. The market equilibrium problem is to find such prices when they exist.
In a celebrated result, Arrow and Debreu \cite{AD} proved that market equilibrium always exists under some mild conditions,
however the proof is non-constructive and uses heavy machinery of Kakutani fixed point theorem. 

Note that operating point of a firm, at any given prices, is on the boundary of $\CY^f$, which can be defined by a concave
function/correspondence. To work under finite precision model concave is generally approximated with piecewise-linear concave. 

A well studied restriction of Arrow-Debreu model is {\em exchange economy}, i.e., markets without production firms.

\subsection{PLC production and irrationality}\label{asec.eg}
In this section we demonstrate an example of a market with (non-separable) PLC production and the simplest utility
functions, namely {\em linear}, having only irrational equilibrium prices and allocations.

Consider a market with three goods, three agents and one firm. The initial endowments of agents are $w^1=w^2=w^3=(1,1,0)$. Each utility
function has one linear segment; $U_1=x^1_1$, $U_2=x^2_2$ and $U_3=x^3_3$. The firm is owned by agent $3$, i.e.,
$\theta^3_1=1$. It has exactly one production segment without any upper
limit on the raw material used, and needs two units of good $1$ and a unit of good $2$ to produce a unit of good
$3$. This is a Leontief (PLC) production function where the quantities of raw goods are needed in a fixed proportion. Let
$r_j$'s and $s_j$'s respectively be the amount of goods used and produced by the firm on its only segment, then they should
satisfy: $2 \cdot s_3\le r_1$ and $s_3\le r_2$. 

Let $\pp=(p_1,p_2,p_3)$ denote the price of goods. Note that at an equilibrium of this market all prices must be positive,
otherwise the demand of zero priced goods will be infinite. Since equilibrium prices are scale invariant \cite{AD}, we set
$p_1=1$. The firm will produce at equilibrium due to positive demand of good $3$ from agent $3$, however its profit will
be zero otherwise it will want to produce infinite amount. Hence we have $p_3=2+p_2$. From the market clearing conditions,
we get $p_2^2+2p_2-2=0$. Thus the only equilibrium prices of this market are $p_1=1, p_2=\sqrt{3}-1$ and
$p_3=\nfrac{(1+\sqrt{3})}{2}$. At equilibrium the allocation and production variables are: $x^1_1=\sqrt{3}$,
$x^2_2=\nfrac{\sqrt{3}}{(\sqrt{3}-1)}$ and $x^3_3=s_3=\nfrac{r_1}{2}=r_2=\nfrac{\sqrt{3}}{(\sqrt{3}+1)}$.

This rules out the possibility of linear complementarity problem (LCP) formulation for PLC production, because LCPs always
have a rational solution (if one exists) given that all input parameters are rational numbers. The next logical step is to
consider separable PLC (SPLC) production instead, together with SPLC utilities, as PLC utility functions are already known
to have irrationality \cite{eaves1}.

\section{Markets with SPLC Utility \& SPLC Production}\label{asec.model}
In this section, we define parameters representing separable piecewise-linear concave (SPLC) utility and SPLC
production functions. All the parameters are assumed to be rational numbers.

For each pair of agent $i$ and good $j$ we are specified a non-decreasing, piecewise-linear and concave (PLC) function $U^i_j:
\Rplus \ra \Rplus$, which gives the utility that $i$ derives as a function of the amount of good $j$ that she receives. Her
overall utility, $U^i(\xx)$, for a bundle $\xx=(x_1,\ldots,x_n)$ of goods is additively separable over the goods, i.e.,
$U^i(\xx) = \sum_{j\in\CG} U^i_j(x_j)$. The number of segments in function $U^i_j$ is denoted by $|U^i_j|$, and the $k^{th}$
segment of $U^i_j$ by $(i, j, k)$. The slope of a segment specifies the rate at which the agent derives
utility per unit of additional good received. Suppose segment $(i, j, k)$ has domain $[a, b] \subseteq \Rplus$, and slope
$c$. Then, we define $u^i_{jk} = c$ and $l^i_{jk} = b-a$. The length of the last segment is infinity. Since $U^i_j$ is concave,
$u^i_{j(k-1)} > u^i_{jk},\ \forall k\ge 2$.

In this paper we consider the case where every firm produces a good using a set of goods as raw material and the production
function is additively separable over goods.  For simplicity, we assume that each firm produces exactly one
good\footnote{This is without loss of generality since production is separable. For a firm producing multiple goods we can
create as many firms as number of produced goods with agent's shares being duplicated.}. Let firm $f$ produces good $j_f$.

For each pair of firm $f$ and good $j$ we are specified a production function $P^f_{j}:\Rplus \ra \Rplus$ which is non-negative,
non-decreasing, PLC, and defines $f$'s ability to produce good $j_f$ as a function of the amount of good
$j$. The overall production of the firm $f$ from a bundle $\xx=(x_1,\ldots,x_n)$ of goods is: $P^f(\xx)=\sum_{j \in \CG}
P^{f}_{j}(x_{j})$. The number of segments in $P^f_{j}$ is denoted by $|P^f_{j}|$, and the $k^{th}$ segment by $(f,j,k)$.
The slope of a segment specifies the rate at which good $j_f$ can be produced from a unit of additional good $j$. 
Suppose segment $(f,j,k)$ has domain $[a,b]\subseteq \Rplus$, and slope $c$, then we define $\alpha^f_{jk} = c$ and $o^f_{jk}=b-a$. 
This implies that on segment $(f,j,k)$, firm $f$ can produce $\alpha^f_{jk}$ amount of good $j_f$ from a unit amount of good
$j$ and at this rate it can use up to $o^f_{jk}$ units of good $j$.  The length of the last segment is infinity. 
Since $P^{f}_{j}$ is concave, $\alpha^f_{j(k-1)} > \alpha^f_{jk}\ \forall k\ge 2$. 

Given prices $\pp = (p_1, \ldots, p_n)$ for the goods, each firm operates on a production
schedule that maximizes its profit $-$ money earned from the production minus the money spent on the raw material. 
Let $\CE^f$ denote the
profit of firm $j$. Agent $i$ earns $\sum_{j \in \CG} w^i_{j}p_j$ from the initial endowment and $\sum_{f \in \CF}
\theta^i_{f} \CE^f$ from the profit shares in firms, and buys a bundle of goods that maximizes her utility.  Prices $\pp$
gives an equilibrium if market clears when each firm produces at an optimal plan and each agent buys an optimal bundle.
We will denote this market, with SPLC production and SPLC utilities, by $\CM$.

\subsection{Sufficiency conditions} \label{asec.strong}
In general there may not exist market equilibrium prices; in fact, for the special case of exchange market with SPLC
utilities, it is NP-hard to determine if they exist \cite{VY}. However, an equilibrium is guaranteed to exist under certain
sufficient conditions.  The nine conditions given by Maxfield \cite{maxfield} are the weakest known sufficiency conditions for
the existence. Out of these, seven are trivially satisfied by our model. The remaining two are described next.

One is the condition (3) of Section \ref{sec.model}, which translates to the following for the SPLC production:
It says that firms together can not produce something out of nothing
\cite{maxfield,AD}.  For example, suppose firm $1$ can produce a unit amount of good $1$ from a unit of good $2$ and from
this unit amount of good $1$ suppose firm $2$ can produce $2$ units of good $2$. At the end of such a production we have no
decrease in any good, and good $2$ is increased by one unit. In other words good $2$ can be produced from nothing, which is
unnatural. 
In addition we disallow {\em vacuous productions} as well\footnote{This was assumed by Arrow-Debreu, but not by Maxfield. We
enforce this to avoid unnecessary complications in the proofs later. Dropping this assumption will introduce degeneracy,
which can be handled.}, and assume that in every production cycle quantity of at least one good reduces.

In case of SPLC production, since the production functions are concave, the returns
are decreasing. Hence, checking these conditions using the first segments of $P^f_{j}$'s suffices for all the combinations
of production possibility vectors.
Firm $f$ needs at least $\nfrac{1}{\alpha^f_{j1}}$ units of good $j$ to produce one unit of good
$j_f$. Let $G_\CF(\CM)$ be a weighted directed graph, where goods are nodes and the weight of an edge from $j$ to $j'$ is
$\max_{f, j_f=j'} \nfrac{1}{\alpha^f_{j1}}$. 

\begin{definition}[No production out of nothing \cite{AD,maxfield}]\label{adef.np}
We say that market $\CM$ satisfies no production out of nothing if weights of edges in every cycle of $G_\CF(\CM)$ multiply
to strictly less than one.
\end{definition}

For the second condition, we say that {\em agent $i$ is non-satiated by good $j$} if the last
segment of $U_j^i$ has positive slope, i.e., $u^{i}_{j|U^i_j|}>0$. Similarly, firm $f$ is non-satiated by good $j$ if the last segment
of $P^f_{j}$ has positive slope, i.e., $\alpha^f_{j|P^f_{j}|}>0$.

\begin{definition}[Strong connectivity]\label{adef.sc} 
Construct a directed graph $G(\CM)$ whose nodes correspond to agents and firms of
market $\CM$ and there is an edge from node $a$ to node $b$ if there is a good possessed/produced by node $a$ for which node $b$
is non-satiated.  Market $\CM$ satisfies strong connectivity if this graph has a strongly connected component containing all
the agent nodes.  
\end{definition}

This condition also makes sure that every agent has a positive amount of at least one good in her initial endowment.
Henceforth we will assume that market $\CM$ satisfies {\em no production out of nothing} and {\em strong connectivity} conditions.

\section{Equilibrium Characterization and LCP Formulation}\label{asec.charNlcp}
For equilibrium characterization, we need to capture $(i)$ optimal production plans for each firm, $(ii)$ optimal bundles
for each agent, and $(iii)$ market clearing conditions. 
\medskip

\noindent{\bf Optimal production.}
Recall that on segment $(f,j,k)$, $\alpha^f_{jk}$ units of good $j_f$ can be produced using a unit of good $j$. Given prices
$\pp$, the optimal production plan of firm $f$ is given by the following linear program (LP), where $x^f_{jk}$ denote the
amount of raw good $j$ used by firm $f$ on $(f,j,k)$: 

\begin{eqnarray}\label{alp}
\hspace{1cm}\mbox{maximize} \ \ \ \ \sum_{j,k} x^f_{jk}(\alpha^f_{jk} p_{j_f} - p_{j})\ \ \  \mbox{ subject to } \ \ \  
\ \ 0 \le x^f_{jk} \le o^f_{jk},\ \ \forall(j,k) 
\end{eqnarray}

Note that since $\alpha^f_{jk} > \alpha^f_{j(k+1)}, \forall k$, an optimal solution of this LP will have $x^f_{jk}=o^f_{jk}$
whenever $x^f_{j(k+1)}>0$. This is required for an optimal production plan as $x^f_{jk}$'s have to be allocated in order. 
Let $\beta^f_{jk}$ be the dual variable corresponding to inequality $x^f_{jk}\le o^f_{jk}$. From the
optimality conditions,
we get the following linear constraints and complementarity conditions (We will refer to these as follows: the
equation number will refer to the constraint and the equation number with a prime will refer to the complementarity
condition, e.g., (\ref{aeq01}) refers to the first constraint below and (\ref{aeq01}') refers to the corresponding
complementarity condition.). All variables introduced will have a non-negativity constraint; for the sake of brevity, we
will not write them explicitly.

\begin{equation}\label{aeq01}
\forall(j,k): \ \ \ \alpha^f_{jk}p_{j_f} - p_{j} \le \beta^f_{jk}\ \ \ \ \ \  \mbox{and} \ \ \ \ \ x^f_{jk}(\alpha^f_{jk}p_{j_f} - p_{j} -
\beta^f_{jk}) = 0
\end{equation}

\begin{equation}\label{aeq02}
\forall(j,k): \ \ \ x^f_{jk} \le o^f_{jk}\ \ \ \ \ \ \ \ \mbox{and} \ \ \ \ \ \ \beta^f_{jk}(x^f_{jk} - o^f_{j'k}) = 0 
\end{equation}

Note that (\ref{aeq01}) and \ref{aeq02}) are equivalent to the above LP due to strong duality. Combining these for all the
firms gives us a linear complementarity problem (LCP) formulation that capture optimal production. 
The amount produced on segment $(f,j,k)$ is \[y^f_{jk}=\alpha^f_{jk} x^f_{jk}.\]

Next we need to characterize and derive an LCP to capture optimal bundles of each agent, and market clearing conditions.
Suppose $x^i_{jk}$ denote the amount of good $j$ obtained by agent $i$ on segment $(i,j,k)$, then $i$ spends $\sum_{j,k}
x^i_{jk}p_j$, which is a quadratic term. It turned out to be unlikely to capture this through an LCP in amount variables;
see Remark \ref{arem.amtvar} for details. Therefore we need to use $q^i_{jk}$ in place of
$x^i_{jk}p_j$, representing money spent by
agent $i$ on segment $(i,j,k)$. Further, we need same type of variables in production LCP to tie everything in goods side
market clearing condition.

\begin{remark}\label{arem.amtvar}
Suppose there is an LCP in amount variables $\xx$ and price variables $\pp$ for the linear exchange market, where $x_{ij}$
denotes the amount of good $j$ allocated to agent $i$. 
The idea is to write optimal production plans for firms with Leontief production as linear complementarity constraints
in $(\xx,\pp)$ and then incorporate them with the LCP for the linear exchange market. For example, the
optimal production plan for a firm which produces $1$ unit of good $3$ using $b$ units of good $1$ and $c$ units of good $2$
can be written as follows, where $r_1$ and $r_2$ respectively captures the amount of goods $1$ and $2$ used as raw
materials and $s_3$ captures the amount of good $3$ produced:

\begin{center}
$p_3 \le bp_1 + cp_2$, $\ \ \ \ $   $s_3 \ge 0$, $\ \ \ \ $  $s_3(p_3-bp_1-cp_2)=0$\\
$r_1 = b s_3$, $\ \ \ \ $  $r_2 = c s_3$
\end{center}

Since production of this firm is constant returns to scale, it can not make positive profit at an equilibrium. Further,
it produces if and only if the profit is zero.
The first set of constraints capture the same. The second set of equalities capture the amount of raw material needed to
produce $s_3$ units of good $3$.
Similarly we can write linear complementarity constraints for all the firms. They together with the LCP for the linear
exchange model, with variables $s_i$'s and $r_i$'s added at appropriate places, give us an LCP for markets with linear
utilities and Leontief production, contradicting its irrationality (see Section \ref{asec.eg}). 
\end{remark}

The only recourse is to convert amount variables in production LCP to money variables. It turns out to be doable,
though not immediately clear, using complementarity and change of variables. We multiply both the equations in (\ref{aeq02}) by $p_{j}$
and replace the expression $x^f_{jk}p_j$ by $r^f_{jk}$ denoting the money spent on raw material $j$ on segment $(f, j,k)$.  Assuming
that $p_j >0, \forall j\in\CG$ at equilibrium, conditions (\ref{aeq01}) and (\ref{aeq02}) can be replaced with the following.

{\small
\begin{equation}\label{aeq1}
\forall(f,j,k): \ \ \alpha^f_{jk}p_{j_f} - p_{j} \le \beta^f_{jk}\ \  \ \ \ \mbox{and} \ \ \ \ \ \
r^f_{jk}(\alpha^f_{jk}p_{j_f} - p_{j} - \beta^f_{jk}) = 0
\end{equation}
\begin{equation}\label{aeq2}
\forall(f,j,k): \ \ r^f_{jk} \le o^f_{jk}p_{j}\ \ \ \  \ \ \mbox{and} \ \ \ \ \ \ \beta^f_{jk}(r^f_{jk} - o^f_{jk}p_{j}) = 0 
\end{equation}
}

To capture produced amount let $s^f_{jk}$ denote the revenue of firm $f$ on segment $(f,j,k)$, namely $y^f_{jk}p_{j_f}$. Directly
replacing $y^f_{jk}$ with $\nfrac{s^f_{jk}}{p_{j_f}}$ and $x^f_{jk}$ with $\nfrac{r^f_{jk}}{p_j}$ in $y^f_{jk}=\alpha^f_{jk}x^f_{jk}$ 
will give quadratic equality.  Instead, we observe that $\beta^f_{jk}$ captures the profit per unit of raw material on segment
$(f,j,k)$ when $r^f_{jk}>0$.  Further, if $\beta^f_{jk}>0$, then the firm utilizes segment $(f,j,k)$ completely. 
Putting these together we include the following equalities, where $\CE^f$ captures the profit of firm $f$.

{\small
\begin{equation}\label{aeq3}
\forall (f,j,k): \ \ s^f_{jk}=r^f_{jk} + o^f_{jk}\beta^f_{jk}\ \ \ \   \mbox{ and } \ \ \ \ \forall f:\ \  \CE^f = \sum_{j,k}
o^f_{jk}\beta^f_{jk}
\end{equation}
}

Clearly, according to $f$'s optimal plan it produces the positive profit segments completely, and does 
not produce the segments giving negative profit at all. It is indifferent between producing on zero profit segments. 

\begin{remark}
We note that, even though one can write a linear program to compute optimal production plan in case of (non-separable) PLC
production as well, there can not exists an LCP formulation to capture equilibria for markets with PLC production. This
follows from the example of such a market in Section \ref{asec.eg} with only irrational equilibrium prices and allocations.  
\end{remark}

\noindent{\bf Market clearing.}
The market clearing constraints are easier now. 
Let $q_{jk}^i$ be a variable that denotes the amount of money spent by agent $i$ for buying good $j$ corresponding
to segment $(i, j, k)$. The following constraints capture the market clearing, where $\CF(j)$ denote the set of firms
producing good $j$, and $\li$ is related to optimal bundle of agent $i$ and is defined below; 
we have included the corresponding complementarity conditions in order to obtain an LCP in the standard form. 

{\small
\begin{equation} \label{aeq4} 
\begin{array}{c}
\forall j \in \CG:  \ \ \  \displaystyle\sum_{i,k} {q_{jk}^i} + \sum_{f,k} r^f_{jk} \leq
p_j + \sum_{f\in \CF(j),j',k} s^f_{j'k} \ \ \  \mbox{and} \ \ \ p_j(\sum_{i,k} {q_{jk}^i} + \sum_{f,k} r^f_{jk} - 
p_j - \sum_{f\in \CF(j),j',k} s^f_{j'k}) = 0 
\end{array}
\end{equation}

\begin{equation} \label{aeq5} 
\begin{array}{c}
\forall i \in \CA:  \ \ \ \displaystyle\sum_{j} {w_j^i p_j} + \sum_f{\theta^i_{f}\CE^f} \leq \sum_{j,k} {q_{jk}^i}  \ \ \  \mbox{and}  \
\ \ \li(\sum_{j} w_j^i p_j + \sum_f \theta^i_{f}\CE^f - \sum_{j,k} {q_{jk}^i}) = 0 
\end{array}
\end{equation}
}

\noindent{\bf Optimal bundle.}
Given prices $\pp$, earnings of each agent is fixed, i.e., for agent $i$ it is $\sum_j w^i_j p_j+\sum_f \theta^i_f \CE^f$.
Therefore, she will try to spend her money where utility per unit of money, also called bang-per-buck, is maximum; on segment $(i,j,k)$ it is
$\bpb^i_{jk}=\nfrac{u^i_{jk}}{p_j}$. We take $\nfrac{0}{0}$ as $0$.
Hence, optimal bundle of agent $i$ can be computed as follows: sort all her segments by decreasing bang per buck and partition them by
equality, i.e., each equivalence class will consist of all segments having equal bang-per-buck. Let the classes be: $Q_1, Q_2, \ldots$.  At
prices $\pp$, the segments in $Q_l$ make $i$ strictly happier than those in $Q_{l+1}, Q_{l+2}, \ldots$.
Hence, she would start buying partitions in order, until all her money is
exhausted. Suppose she exhausts all her money at $k_i^{th}$ partition. The segments in partitions 1 to $k_{i-1}$ will be
called {\em forced}, those in partition $k_i$ will be called {\em flexible} and those in partitions $k_{i+1}$ and higher
will be called {\em undesirable}. Indeed, every optimal bundle is obtained in this manner: it must fully allocate all
segments in the forced partitions; the money left over after this allocation is spent on segments in the flexible partition
in any manner, since all these segments have equal bang per buck; and no allocation is made corresponding to segments in
undesirable partitions.

Introduce a variable $\lambda_i$ that captures inverse of the bang-per-buck of the flexible partition. 
In order to capture segments of forced partition, variable $\gamma^i_{jk}$ is introduced so that if $(i,j,k)$ is forced, then
$\nfrac{1}{\lambda_i} = \nfrac{u^i_{jk}}{(p_j+\gamma^i_{jk})}$; supplement prices. 
The following constraints from \cite{GMSV} ensure optimal bundle to each agents.

{\small
\begin{equation} \label{aeq6} 
\forall (i,j,k):  \ \ \ \  u^i_{jk} \li \leq p_j + \gijk \ \ \ \  \mbox{and}  \ \ \ \ q^i_{jk}(u^i_{jk}\li -  p_j - \gijk) =
0 
\end{equation}

\begin{equation} \label{aeq7} 
\forall (i,j,k):  \ \ \ \  q^i_{jk} \leq l^i_{jk} p_j  \ \ \ \  \mbox{and}  \ \ \ \  \gijk ( q^i_{jk} - l^i_{jk} p_j) = 0
\end{equation}
}

Let us denote the LCP defined by the sets of constraints and complementarity conditions given in (\ref{aeq1}) through (\ref{aeq7}),
together with non-negativity on all variables, as {\bf AD-LCP}.
 
\begin{lemma} \label{alem.opt1} 
Any equilibrium of market $\CM$ yields a solution to AD-LCP.  
\end{lemma}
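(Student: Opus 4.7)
The plan is to start from an equilibrium $(\pp, \{x^f_{jk}\}, \{x^i_{jk}\})$ of $\CM$ and exhibit values for every LCP variable that satisfy all the constraints and all the complementarity conditions in (\ref{aeq1})--(\ref{aeq7}). The first step is to argue that $p_j > 0$ for every good $j$: since $\CM$ satisfies strong connectivity (Definition \ref{adef.sc}), there is a chain of non-satiation relations reaching every good from some agent, so a zero price would create an unbounded demand by some non-satiated agent, contradicting market clearing. With $p_j>0$ in hand, I can legitimately define the money variables $r^f_{jk} \defeq x^f_{jk} p_j$, $q^i_{jk} \defeq x^i_{jk} p_j$, the revenues $s^f_{jk} \defeq \alpha^f_{jk} x^f_{jk} p_{j_f}$, and the firm profits $\CE^f \defeq \sum_{j,k}(s^f_{jk}-r^f_{jk})$.

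Next, I would handle the production side. Since each firm's plan is profit-maximizing, the $x^f_{jk}$ form an optimal solution to the LP (\ref{alp}). Strong LP duality then produces non-negative duals $\beta^f_{jk}$ satisfying (\ref{aeq01}) and (\ref{aeq02}). Multiplying both the inequality and the complementarity in (\ref{aeq02}) by $p_j>0$ and substituting $r^f_{jk}=x^f_{jk} p_j$ yields (\ref{aeq2}); the same substitution inside the complementarity of (\ref{aeq01}) yields (\ref{aeq1}) (the dual-feasibility inequality in (\ref{aeq1}) is identical to that in (\ref{aeq01})). The key identity (\ref{aeq3}) then follows segment-wise: expanding $s^f_{jk}-r^f_{jk}=x^f_{jk}(\alpha^f_{jk}p_{j_f}-p_j)$ and using (\ref{aeq01}')'s complementarity gives $s^f_{jk}-r^f_{jk}=x^f_{jk}\beta^f_{jk}$, while (\ref{aeq02}')'s complementarity forces either $\beta^f_{jk}=0$ or $x^f_{jk}=o^f_{jk}$; in either case the right-hand side equals $o^f_{jk}\beta^f_{jk}$. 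Summing over $(j,k)$ recovers $\CE^f=\sum_{j,k} o^f_{jk}\beta^f_{jk}$.

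For the consumption side, following the partition analysis preceding (\ref{aeq6}), I set $\lambda_i$ equal to the reciprocal of agent $i$'s maximum bang-per-buck at prices $\pp$ (positive by strong connectivity). For each segment $(i,j,k)$, I set $\gamma^i_{jk} \defeq \max\{u^i_{jk}\lambda_i - p_j,\,0\}$. Segments in forced partitions are fully allocated ($q^i_{jk}=l^i_{jk}p_j$) and have $u^i_{jk}\lambda_i \ge p_j$, so $\gamma^i_{jk}=u^i_{jk}\lambda_i-p_j$ and both (\ref{aeq6}), (\ref{aeq7}) hold with the complementarity terms vanishing on the correct side; flexible segments have $u^i_{jk}\lambda_i=p_j$, $\gamma^i_{jk}=0$, and either complementarity term vanishes trivially; undesirable segments have $q^i_{jk}=0$ and $u^i_{jk}\lambda_i<p_j$, again consistent with (\ref{aeq6}), (\ref{aeq7}).

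Finally, market clearing in amount form at equilibrium says $\sum_i x^i_j + \sum_f x^f_j \le 1 + \sum_{f\in\CF(j)} y^f$, with equality whenever $p_j>0$; multiplying by $p_j$ and regrouping using the definitions of $q^i_{jk}$, $r^f_{jk}$, $s^f_{jk}$ yields (\ref{aeq4}), and since we have $p_j>0$ the complementarity is automatic. Similarly, each agent's budget identity $\sum_j w^i_j p_j + \sum_f \theta^i_f \CE^f = \sum_{j,k} q^i_{jk}$ (true at equilibrium because $\lambda_i>0$ and every rupee is spent) gives (\ref{aeq5}). I expect the step requiring the most care to be the verification of (\ref{aeq3}) segment-by-segment, because it requires the two complementarities of (\ref{aeq01}) and (\ref{aeq02}) to interact in just the right way to eliminate the $x^f_{jk}$ variables in favor of $o^f_{jk}$; this is the precise place where the conversion from amount-based to money-based production constraints is forced, and the argument hinges on $p_j>0$ everywhere, justifying the emphasis on strong connectivity at the outset.
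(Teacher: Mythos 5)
Your proposal is correct and follows essentially the same route as the paper's own proof: set the money variables $r^f_{jk}=x^f_{jk}p_j$, $q^i_{jk}=x^i_{jk}p_j$, take the LP duals as $\beta^f_{jk}$, and define $\lambda_i$, $\gamma^i_{jk}$ exactly as the paper does via the forced/flexible/undesirable classification; your segment-by-segment verification of (\ref{aeq3}) is a helpful expansion of the paper's terser assertion, and your explicit use of LP duality for $\beta^f_{jk}$ is actually a touch cleaner than the paper's formula $\beta^f_{jk}=\alpha^f_{jk}p_{j_f}-p_j$, which omits the truncation at zero. One small caveat: the opening claim that strong connectivity alone forces $p_j>0$ at equilibrium is not justified (the paper derives price positivity from the separate \emph{enough demand} condition in Lemma \ref{ale.desire}), but this does not damage the proof, since for this direction of the equivalence none of the constructed LCP constraints actually requires $p_j>0$ — the positivity assumption is only needed for the converse, where one must divide by $p_j$.
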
 
\begin{proof}
Let $\pp$ be a market equilibrium prices.
Let $x^f_{jk}$'s be the bundle of raw material used by firms at the equilibrium. For segment $(f,j,k)$, set variables $r^f_{jk}=x^f_{jk}p_j$,
$\beta^f_{jk}=\alpha^f_{jk} p_{j_f}-p_j$, and $s^f_{jk}$ and $\CE^f$ accordingly. Since $x^f_{jk}$'s and $\beta^f_{jk}$'s satisfies
(\ref{aeq01}) and (\ref{aeq02}), it follows that for the set values conditions (\ref{aeq1}), (\ref{aeq1}'), (\ref{aeq2}) and
(\ref{aeq2}') are satisfied. Set $s^f_{jk}$ and $\CE^f$ as per (\ref{aeq3}) and (\ref{aeq3}') respectively; they are revenue and profit
at optimal
production. 

Let $x^i_{jk}$'s be the bundle of goods obtained by the agents at the equilibrium. Set $q^i_{jk}=x^i_{jk}p_j$. Agent $i$ earns 
$\sum_j w^i_jp_j + \sum_f \theta^i_f \CE^f$ and spends the same amount. Therefore (\ref{aeq4}) holds with equality, consequently
(\ref{aeq4}') is also satisfied. Similarly due to market clearing from good side we get that (\ref{aeq5}) too holds with equality.

For agent $i$ set variable $\lambda_i$ to the inverse of the bang-per-buck of her last bought segment.
substitute for the variables $\gijk$ as follows: if segment $(i, j, k)$ is flexible or
undesirable, set it to zero, and if it is forced, set it so that the following equality is satisfied \[ {1 \over {\li}} =
{u^i_{jk} \over {p_j + \gijk}} .\] Clearly, all the $\gijk$'s satisfy non-negativity. Now, it is easy to verify that in each
of the three cases -- that the segment $(i, j, k)$ is forced, flexible or undesirable -- the constraints (\ref{aeq6}) and
(\ref{aeq7}), and complementarity conditions (\ref{aeq6}') and (\ref{aeq7}') are all satisfied.
\end{proof}

Since LCPs always have rational solutions (if one exists), next corollary follows from Lemma \ref{alem.opt1}.

\begin{corollary}\label{acor.rational} In a market with SPLC utilities and SPLC production functions, equilibrium prices and
allocations are rational up to scaling.  \end{corollary}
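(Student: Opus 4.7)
The plan is to combine Lemma \ref{alem.opt1} with the classical rationality result for LCPs. Every input parameter of $\CM$ -- the slopes $\alpha^f_{jk}, u^i_{jk}$, segment lengths $o^f_{jk}, l^i_{jk}$, endowments $w^i_j$, and profit shares $\theta^i_f$ -- is rational by assumption, so the coefficient matrix and right-hand side of AD-LCP have rational entries. Under the sufficiency conditions of Section \ref{asec.strong}, an equilibrium of $\CM$ exists; by Lemma \ref{alem.opt1} it maps to a feasible point of AD-LCP, so AD-LCP is consistent.

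I would then invoke the following classical fact: the solution set of an LCP decomposes into a finite union of polyhedra, each cut out by linear (in)equalities over the rational LCP data -- one polyhedron per choice of which side of each complementarity pair is tight. A consistent LCP therefore admits a rational vertex solution. Reading off this vertex recovers rational values for the prices $p_j$, the expenditures $r^f_{jk}$ and $q^i_{jk}$, the revenues $s^f_{jk}$, the profits $\CE^f$, and the multipliers $\beta^f_{jk}, \gamma^i_{jk}, \lambda_i$; dividing the expenditures by the (positive) prices gives rational amount allocations $x^f_{jk} = r^f_{jk}/p_j$ and $x^i_{jk} = q^i_{jk}/p_j$. The ``up to scaling'' qualifier reflects the homogeneity of AD-LCP noted in the introduction together with the scale invariance of Arrow-Debreu equilibrium prices.

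The main delicacy I foresee is that AD-LCP, as the paper acknowledges, can have non-equilibrium solutions, so an arbitrary rational vertex of AD-LCP need not correspond to an equilibrium of $\CM$. To bridge this gap I would restrict to the non-empty sub-polyhedron of AD-LCP solutions lying in the image of Lemma \ref{alem.opt1}; since this sub-polyhedron is cut out of the ambient rational polyhedron by further rational linear conditions, its vertices are again rational, and at least one of them is a genuine equilibrium. The cleanest alternative is to defer the final step to the refined LCP of Section \ref{asec.nhlcp}, which is designed to capture exactly the set of equilibria (up to scaling); the same LCP rationality argument then applies directly, with no spurious solutions to filter out.
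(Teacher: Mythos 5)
The proposal takes essentially the same route as the paper -- Lemma~\ref{alem.opt1} plus the classical rationality of vertex solutions to LCPs with rational data -- but you are visibly more careful than the paper's one-line justification, which reads in full ``Since LCPs always have rational solutions (if one exists), next corollary follows from Lemma~\ref{alem.opt1}.'' The ``delicacy'' you flag is a real gap in that sentence: Lemma~\ref{alem.opt1} gives only a map from equilibria \emph{into} AD-LCP solutions, not a characterization, so a rational vertex of AD-LCP need not correspond to an equilibrium at all (the paper itself exhibits such spurious solutions at the end of Section~\ref{asec.charNlcp}). Your second fix -- defer to NHAD-LCP, which by Theorem~\ref{athm.lcp} captures equilibria exactly up to scaling -- is the right repair and is what the paper implicitly relies on when it uses this corollary later. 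Your first fix (cut out the sub-polyhedron of equilibrium solutions inside AD-LCP) is also salvageable, but as phrased it is slightly imprecise: the natural restriction to equilibria is $p_j > 0$, an open condition, which does not yield a closed polyhedron with rational vertices. To make it rigorous you need the observation of Section~\ref{asec.nhlcp} that one may instead impose the closed rational constraints $p_j \geq c_j$ with $\pc$ in the interior of $\CC$ from Lemma~\ref{ale.lb} -- at which point you have effectively reconstructed NHAD-LCP, so the second route is cleaner.
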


\begin{corollary}\label{acor.splcnpc} Checking existence of an equilibrium in AD market with SPLC utilities and SPLC
production is NP-complete.  
\end{corollary}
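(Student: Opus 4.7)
The claim has two parts which I would prove independently.

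For NP-hardness, I would observe that the exchange economy is exactly the special case of the SPLC-production market $\CM$ in which the firm set $\CF$ is empty (equivalently, every firm has only one production segment with slope zero). Since checking existence of an equilibrium in an exchange market with SPLC utilities is already NP-hard by the result of \cite{VY}, the same hardness lifts verbatim to our setting: any instance used for the exchange hardness reduction is, with $\CF=\emptyset$, also an instance of the market-with-production problem, and the two notions of equilibrium coincide on such instances. No further reduction work is needed.

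For NP-membership, I would use Lemma \ref{alem.opt1} together with standard bit-complexity bounds for LCPs. The AD-LCP given by (\ref{aeq1})--(\ref{aeq7}) has polynomially many variables and constraints, each with coefficients that are either the rational input parameters or linear expressions in the price vector $\pp$. By the same reasoning that underlies Corollary \ref{acor.rational}, if an equilibrium exists then AD-LCP has a basic feasible solution whose entries are rationals with bit-complexity polynomial in the input size (this is the standard polynomial bound on vertices of rational polyhedra together with the fact that binding a rational subset of the complementarity inequalities turns AD-LCP into a linear system of polynomial size). The nondeterministic verifier guesses this rational solution $(\pp,\beta^f_{jk},\lambda_i,\gamma^i_{jk},r^f_{jk},s^f_{jk},\CE^f,q^i_{jk})$ and checks in polynomial time: the non-negativity constraints, the linear inequalities in (\ref{aeq1})--(\ref{aeq7}), the equalities in (\ref{aeq3}), and the complementary slackness conditions. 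Together with $p_j>0$ for all $j$ (which may also be verified), these conditions are exactly what was used in the proof of Lemma \ref{alem.opt1} to identify equilibria with AD-LCP solutions, and a direct reading of that proof shows the reverse implication: any AD-LCP solution with positive prices yields an equilibrium by setting $x^f_{jk}=r^f_{jk}/p_j$ and $x^i_{jk}=q^i_{jk}/p_j$.

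The main technical nuisance, and the only step that is not immediate, is pinning down the polynomial bit-complexity of the guessed solution. I would handle this by arguing that the subset of complementarity conditions active at an equilibrium solution determines a linear system whose coefficient matrix has entries of polynomial bit-length in the input, so Cramer's rule gives the desired bound. Everything else (forward implication via Lemma \ref{alem.opt1}, reverse implication, and NP-hardness via \cite{VY}) is either already in the paper or an immediate specialization.
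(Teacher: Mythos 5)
Your NP-hardness argument coincides exactly with the paper's: exchange markets with SPLC utilities are the $\CF=\emptyset$ special case, and hardness lifts verbatim from \cite{VY}. For NP-membership, however, you and the paper take genuinely different routes. The paper's certificate is just the price vector $\pp$ (rational by Corollary \ref{acor.rational}); the verifier then \emph{computes} optimal production plans and optimal bundles from $\pp$ (via the characterization in Section \ref{asec.charNlcp}, essentially an LP-feasibility check over the flexible segments) and checks market clearing. Your certificate is the full AD-LCP solution vector, and your verifier checks the LCP constraints and complementarity directly. Both are valid; the paper's version sidesteps the bit-complexity accounting for the auxiliary variables $(r,s,\beta,\gamma,\lambda,q,\CE)$, whereas yours carries it along via Cramer's rule. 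Your version is arguably cleaner in that the verification is a purely syntactic check on a linear system, with no secondary optimization.

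One point needs tightening: you attribute the reverse implication (AD-LCP solution with $\pp>0$ $\Rightarrow$ equilibrium) to ``a direct reading of'' the proof of Lemma \ref{alem.opt1}. That lemma only proves the forward direction (equilibrium $\Rightarrow$ AD-LCP solution). The reverse direction is the content of Lemmas \ref{alem.optprod}, \ref{alem.optbund}, and \ref{alem.clear}, stated there for NHAD-LCP; the arguments transfer to AD-LCP under the assumption $p_j>0$ for all $j$ because positivity of prices plays exactly the role that the lower bounds $c_j$ play in NHAD-LCP (and, for market clearing, the complementarity (\ref{aeq4}') with $p_j>0$ forces every good to be fully sold, after which the budget-balance argument of Lemma \ref{alem.clear} applies). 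The claim is true, and including the positivity check in the verifier --- as you do --- is essential precisely because the paper exhibits non-equilibrium AD-LCP solutions, all of which set some $p_j=0$. But the citation should point to the reverse-direction lemmas, not to Lemma \ref{alem.opt1}.
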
 
\begin{proof} Since exchange market with SPLC utilities is a special case, the NP-hardness follows from the result of \cite{VY}.
Given prices it can be checked in polynomial time if corresponding optimal production plan and optimal bundles yield market
clearing using their characterization, since all the values are rational (Corollary \ref{acor.rational}). Thus containment
in NP follows. 
\end{proof}

\noindent AD-LCP suffers from two shortcomings. First, since the rhs vector of the constraints, denoted by $\pq$ in Appendix 
\ref{sec.LCP}, is zero (homogeneous system), the polyhedron is highly degenerate -- in fact, it is a cone with its vertex at
the origin. Second, AD-LCP may admit solutions that do not correspond to equilibria; suppose there is a subset $\CG'
\subset \CG$ of goods, for which every firm producing a good in $\CG \setminus \CG'$ are satiated. Let $\overline{\CG'}=\CG
\setminus \CG'$. Suppose, for firm $f$, $j_f \in \overline{\CG'}$ can produce $d$ amount from a good $j' \in \CG'$, i.e., $d=\sum_{k\le
|P^f_{j}|} \alpha^f_{jk} o^f_{jk}$, then remove $P^f_{j}$ and change endowments of agents as $w^i_{j_f} =w^i_{j_f}
+\theta^i_f d$. Do this for every pair of $j_f \in \overline{\CG'}$ and $j \in \CG'$. Now, find an equilibrium for the market
consisting of agents of $\CA$, goods of $\overline{\CG'}$ and firms producing $\overline{\CG'}$.  Set the corresponding
variables in accordance with this equilibrium. For each good $j \in \CG'$, set $p_j = 0$ and for each segment $(i, j, k)$ of
this good corresponding to agents, set $q^i_{jk} = 0$ and $\gijk = \uijk \li$. Further, each segment $(f,j',k)$, where $j_f=j$,
set $r^f_{j'k}=\beta^f_{j'k}=0$ and accordingly $s^f_{j'k}$ and $\CE^f$. For a firm $f$ with $j_f \in
\overline{\CG'}$ and good $j \in \CG'$ if $\alpha^f_{jk}>0$, then set $\beta^f_{jk}=\alpha^f_{jk}p_j$ and accordingly
$s^f_{jk}$ and $\CE^f$. One can verify that this is a solution to AD-LCP, but may not be a market equilibrium.

The next section circumvents both these shortcomings by constructing a non-homogeneous LCP, where the rhs vector is non-zero
and has negative entries.

\section{Non-Homogeneous LCP}\label{asec.nhlcp} 
In this section first we ensure that prices are positive at every equilibrium, and then dehomogenize AD-LCP by imposing appropriate lower
bounds on price variables and show that it exactly captures the market equilibria. 

Recall that the lengths $o^f_{j|P^f_{j}|}$ and $l^i_{j|U^i_j|}$ of last segments, for all pairs of $(f,j)$'s and $(i,j)$'s, are
infinite.  Using the sufficiency conditions of Section \ref{asec.strong}, next we calculate their values that are never
reached at an equilibrium. Since no firm operates at loss at an equilibrium, we get the following:  

\begin{claim}
\label{acl.noprodcycle} Given prices $\pp \neq 0$, consider a graph where firms are nodes and there is an edge from $j_f$ to
$j$ if the profit on segment $(f,j,1)$ is non-negative. This graph is acyclic.  \end{claim}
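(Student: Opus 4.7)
My plan is a proof by contradiction. Suppose this graph contains a directed cycle; identifying each firm with the good it produces, the cycle has the form $j_{f_1}\to j_{f_2}\to\cdots\to j_{f_k}\to j_{f_1}$, and the edge condition on every segment $(f_m,j_{f_{m+1}},1)$ gives
\[
\alpha^{f_m}_{j_{f_{m+1}}1}\,p_{j_{f_m}} \;\ge\; p_{j_{f_{m+1}}}\qquad (m=1,\dots,k,\ \text{indices mod }k).
\]
The strategy is to multiply these $k$ inequalities around the cycle and combine the resulting multiplicative relation with the no-production-out-of-nothing and no-vacuous-production assumptions of Section \ref{asec.strong}.

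First I would argue that every cycle price $p_{j_{f_m}}$ is strictly positive. If any $p_{j_{f_m}}=0$, the profit inequality forces $p_{j_{f_{m+1}}}=0$, and iterating around the cycle collapses every cycle price to zero. This is incompatible with the surrounding context: at any equilibrium strong connectivity (Section \ref{asec.strong}) forces every price to be positive, since an unpriced good would attract unbounded demand (a point already noted in Section \ref{asec.eg}), and in any case the dehomogenized LCP built in Section \ref{asec.nhlcp} imposes strictly positive lower bounds on all prices. Given $p_{j_{f_m}}>0$, multiplying the $k$ inequalities and cancelling the cyclic product $\prod_m p_{j_{f_m}}$ yields
\[
\prod_{m=1}^k \alpha^{f_m}_{j_{f_{m+1}}1}\;\ge\;1.
\]

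To close the contradiction I would convert this multiplicative inequality into a self-contained production plan. Let firm $f_m$ operate on its first segment at scale $x_m:=x_0/\prod_{\ell=1}^m \alpha^{f_\ell}_{j_{f_{\ell+1}}1}$ for an arbitrary $x_0>0$; this uses $x_m$ units of $j_{f_{m+1}}$ and produces $\alpha^{f_m}_{j_{f_{m+1}}1}\,x_m$ units of $j_{f_m}$. A direct substitution shows the net amount of $j_{f_m}$ produced is zero for $m=2,\dots,k$, while the net amount of $j_{f_1}$ equals $x_0\bigl(1-1/\prod_m\alpha^{f_m}_{j_{f_{m+1}}1}\bigr)\ge 0$. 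If it is strictly positive, the firms in the cycle together produce $j_{f_1}$ out of nothing, contradicting Definition \ref{adef.np}. If it is zero, then $\prod_m\alpha^{f_m}_{j_{f_{m+1}}1}=1$ and the plan is a non-trivial cyclic production whose net change on every good is zero — a vacuous production cycle explicitly disallowed in Section \ref{asec.strong}. In either case we reach a contradiction, so the graph must be acyclic.

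The main obstacle is justifying the positivity step: the hypothesis is only $\pp\neq 0$, so the argument cannot avoid borrowing from the equilibrium/LCP framework of the surrounding sections to rule out a cycle living entirely among zero-priced goods; without that context the statement as literally written would have to be restricted to cycles whose goods have positive price.
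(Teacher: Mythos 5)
Your proof follows the same core argument as the paper's: multiply the profit inequalities $\alpha^{f_m}_{j_{f_{m+1}}1}\,p_{j_{f_m}}\ge p_{j_{f_{m+1}}}$ around the cycle, cancel the prices, and get $\prod_m \alpha^{f_m}_{j_{f_{m+1}}1}\ge 1$, which contradicts no-production-out-of-nothing. The only methodological difference is in how the contradiction is delivered: the paper cites Definition \ref{adef.np} directly (the weight condition on $G_\CF(\CM)$), while you re-derive the violation from scratch by constructing the explicit cyclic production plan at scales $x_m$ and observing it either creates a net surplus of $j_{f_1}$ or is a vacuous cycle. That re-derivation is correct and arguably cleaner to read, since it goes back to the Arrow-Debreu form of the assumption rather than the weighted-graph restatement, but it is doing the same work Definition \ref{adef.np} already encapsulates — it is a re-proof of the hypothesis, not an extra idea.

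Your final paragraph puts a finger on a real imprecision. The cancellation step needs every $p_{j_{f_m}}>0$; under only $\pp\ne 0$ a cycle of goods with all-zero prices satisfies every profit inequality trivially (profit $=0\ge 0$) even when $\prod\alpha<1$, so the graph genuinely can contain such a cycle and the claim as literally stated fails. The paper's one-line proof silently makes the same positivity assumption. You should be careful, though, about how you justify positivity: appealing to strong connectivity or the LCP lower bounds is circular for at least one use of the claim — the proof of Lemma \ref{ale.desire} invokes this claim precisely in order to establish $\pp>0$, so you cannot already assume it there. The cleanest fix is to state the claim for prices with $p_{j_f}>0$ for all producing firms $f$ (which holds in every place the claim is actually used, since the LCP sets $p_j=\ppj+c_j\ge c_j>0$), and to note separately that Lemma \ref{ale.desire} only needs acyclicity among positive-priced goods. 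Aside from that caveat, your argument is sound.
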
 
\begin{proof} If there is a cycle, then non-negative profit on each of those edges 
implies that the weights on the same cycle in $G_\CF(\CM)$ of {\em no production out of nothing} condition 
will multiply to at least one, contradicting the same condition.
\end{proof}

Claim \ref{acl.noprodcycle} implies that there cannot be a cycle of productions at any equilibrium, however there may be
chains. Total endowment of each good brought by agents is one. Let $\alpha_{max}=\max_{(f,j,k)}
\alpha^f_{jk}$ and $\alpha_{min}=\min_{(f,j,k),\alpha^f_{jk}\neq0} \alpha^f_{jk}$. It is easy to see that the maximum
production a firm can do at the end of a chain is $L=n^n(\alpha_{max}+1)^n$.  Therefore, we set
$o^f_{j|P^f_{j}|}=\nfrac{L}{\max\{1,\alpha_{min}\}}, \forall (f,j)$ and $l^i_{j|U^i_j|}=L+1,\ \forall(i,j)$, which are safe limits. 

For a good $j$, define $\desire(j)$ to be the total amount represented by its non-zero utility segments, {\em i.e.,}
$\desire(j)=\sum_{(i,k): u^i_{jk}>0} l^i_{jk}$. 

\begin{lemma}\label{ale.desire}
If $\desire(j)>1,\ \forall j \in \CG$ then $\pp>0$ in every equilibrium.
\end{lemma}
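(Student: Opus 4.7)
The plan is to argue by contradiction: assume $p_j = 0$ at some market equilibrium and set $S = \{j' \in \CG : p_{j'} = 0\}$, so that $j \in S$. The goal is to derive a demand--supply imbalance on $S$.

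On the demand side, for every $j' \in S$ and every positive-utility segment $(i,j',k)$ the bang-per-buck $u^i_{j'k}/p_{j'}$ is infinite, so in any optimal bundle agent $i$ allocates the full length $l^i_{j'k}$ of every such segment (obtaining strictly positive utility at zero expense). Hence consumer demand $C_{j'} \geq \desire(j') > 1$ for every $j' \in S$. On the supply side, any firm with $j_f \in S$ has revenue $p_{j_f}\, y = 0$, so for non-negative profit it must use raw material only from $S$. The per-good market-clearing identity $1 + P_{j'} = C_{j'} + R_{j'}$ then splits into two cases: either some $j^\star \in S$ receives no production input from inside $S$, in which case $P_{j^\star}=0$ and $1 = 1 + P_{j^\star} = C_{j^\star} + R_{j^\star} \geq \desire(j^\star) > 1$ is an immediate contradiction; or every $j' \in S$ has a production input from $S$, so the production subgraph restricted to $S$ has no sources and therefore contains a directed cycle.

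The main obstacle is the second case. To close it, I would fix a directed cycle $j_1 \to \cdots \to j_k \to j_1$ inside $S$ (with specific firm-segments along it), write the per-good mass-balance identities around the cycle, and iterate to obtain a linear recursion whose coefficient is the amplification product $\pi = \prod_i \alpha^{f_i}_{j_i 1}$; the no-production-out-of-nothing condition of Definition~\ref{adef.np} forces $\pi$ to be bounded strictly away from $1$ in the direction making the induced segment flows finite and explicitly computable from the excesses $\desire(j_i) - 1$. Combined with the bounded segment capacities $o^f_{jk} \leq L/\alpha_{\min}$ chosen at the start of Section~\ref{asec.nhlcp}, the required cycle flow either exceeds some segment's capacity or fails to cover the consumer demand at some node of the cycle, contradicting market clearing. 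A slicker alternative is to treat the entire mass-balance system on $S$ as a linear program and apply Farkas' lemma, with the cycle-weight condition supplying the dual infeasibility certificate directly.
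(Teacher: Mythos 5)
Your overall strategy --- locate a zero-priced good, exploit that agents then demand more than the unit endowment, and turn this into a supply shortfall --- is the paper's. The setup through your Case~A is sound and matches the paper: if a zero-priced good is not produced, its supply is $1$ while its demand is at least $\desire(j)>1$, a contradiction. Case~B, however, is where you go astray. Having correctly deduced that any firm producing a good in $S$ must draw its raw material entirely from $S$ (else it would run a loss), and hence that if every good of $S$ is produced then the production subgraph on $S$ contains a directed cycle, you embark on a quantitative argument around the cycle (a ``linear recursion'' with coefficient $\pi$, segment-capacity bounds, a Farkas alternative) that is only sketched and has real gaps: the recursion is never written down, there is no argument that the cycle flow must overflow a segment cap rather than settle at a finite fixed point, and the Farkas route is not developed.

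None of that machinery is needed, and the detour misses the key structural fact. A production cycle among the zero-priced goods is already ruled out by Claim~\ref{acl.noprodcycle}: along any such cycle each first-segment profit $\alpha^f_{j1}p_{j_f}-p_j$ equals $0$, hence is non-negative, so every edge of the cycle lies in the graph of that claim --- contradicting its acyclicity. The paper's proof leans on exactly this: Claim~\ref{acl.noprodcycle} implies active production at equilibrium forms chains, one follows the chain containing the zero-priced good to its non-produced terminal good $g$, concludes $p_g=0$ because a zero-priced output forces zero-priced inputs, and then applies $\desire(g)>1$ to get the shortfall. If you replace your entire Case~B analysis with a one-line appeal to Claim~\ref{acl.noprodcycle}, your argument collapses into the paper's.
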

\begin{proof}
Let $\pp$ be an equilibrium prices. If no firm is producing at these prices, then agents' demand for a good with zero price will be
more than one, hence the lemma holds. Suppose firms are producing, 
then due to Claim \ref{acl.noprodcycle} there can be only paths of production among goods; last good, say $g$, on this path is not
getting produced. Further, if the price of any good is zero on this path, the price of good $g$ has to be zero, since a zero
priced good can be produced using zero priced goods only (or else there will be losses). 
In that case, demand of good will be more than supply, a contradiction.
\end{proof}

Henceforth we assume that $\desire(j)>1, \ \forall j \in \CG$, and call this condition {\em enough demand}\footnote{A stronger
condition may be derived to ensure non-zero prices at equilibrium, however we stick to this one 
for simplicity.}. In that case since the
equilibrium prices are positive, and they are known to be scale invariant, we can lower bound them with positive numbers. This will
lead to non-zero rhs in the LCP. We will want that negative rhs appears only in the agent side market clearing condition (\ref{aeq5}).
This is needed to ensure that all the equilibrium conditions except market clearing are satisfied on the path followed by the
algorithm, which is crucial to prove convergence of the algorithm.

Suppose, we lower bound $p_j$ by a positive number $c_j$. We do this by replacing $p_j$ with $p'_j+c_j$ in AD-LCP.
Note that, rhs of (\ref{aeq5}) will be surely negative, nothing can be said about (\ref{aeq1}) and the rest will be positive.
Now to keep the rhs of (\ref{aeq1}) non-negative, we need $c_j - \alpha^f_{jk} c_{j_f} \ge 0, \ \forall (f,j,k)$, i.e., no positive
profit on segment $(f,j,k)$ at prices $\pc$. We need a feasible point of the following polyhedron to compute such a vector $\pc$, 

\begin{equation}\label{aeq.lblp}
\begin{array}{ll}
\forall (f,j,k):& \alpha^f_{jk} c_{j_f} \le c_j \\
\forall j:&  c_j \ge 1
\end{array}
\end{equation}

Since the first condition of (\ref{aeq.lblp}) is homogeneous, setting all $c_j$'s to zeros is a solution, hence the second condition. 
Denote the polyhedron of (\ref{aeq.lblp}) by $\CC$. 

\begin{lemma}\label{ale.lb}
Polyhedron $\CC$ is non-empty and has a non-empty interior. 
\end{lemma}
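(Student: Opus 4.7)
My plan is to prove both non-emptiness and non-empty interior of $\CC$ in a single Farkas'-lemma argument, with the strict cycle condition in Definition~\ref{adef.np} blocking every possible infeasibility certificate.

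Concavity of each $P^f_j$ gives $\alpha^f_{jk}\le\alpha^f_{j1}$ for every $k\ge 1$, so $\CC$ coincides with the first-segment system $\alpha^f_{j1}c_{j_f}\le c_j$ (only the $(f,j)$ with $\alpha^f_{j1}>0$ matter) together with $c_j\ge 1$. If this reduced system is infeasible, Farkas' lemma produces nonnegative multipliers $\{y_{f,j}\}$ on the production constraints and $\{z_j\}$ on the lower bounds, not all zero, such that
\[
\sum_{(f,j):\, j_f=j^*}\alpha^f_{j1}\,y_{f,j}\;=\;\sum_{f}y_{f,j^*}\;+\;z_{j^*}\ \ \text{for every good }j^*,\qquad \sum_{j}z_j>0.
\]
Reading $y_{f,j}$ as the amount of raw good $j$ that firm $f$ processes on its first (most efficient) segment -- yielding $\alpha^f_{j1}y_{f,j}$ units of $j_f$ -- these balance equations say that the induced collective production plan has nonnegative net output in every good and strict surplus in at least one good. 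That is a ``production out of nothing,'' which Definition~\ref{adef.np} forbids.

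The bridge from a general Farkas certificate to a statement about directed cycles in $G_\CF(\CM)$ is a generalized-flow decomposition: because $z\ge \mathbf 0$ prohibits any node of strictly negative net output, no acyclic component can appear in the decomposition of $\{y_{f,j}\}$, so the entire certificate is supported on directed cycles. On any such cycle, balance forces the production multipliers $\alpha^{f_i}_{j_{i-1},1}$ to multiply to at least $1$; taking reciprocals and bounding each $1/\alpha^{f_i}_{j_{i-1},1}$ by the $G_\CF(\CM)$-weight $\max_{f:j_f=j_i}1/\alpha^f_{j_{i-1},1}$ yields a cycle in $G_\CF(\CM)$ whose weight product contradicts the strict bound of Definition~\ref{adef.np}. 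Hence $\CC\neq\emptyset$. For the interior, I would rerun the Farkas step on the perturbed system $\alpha^f_{j1}c_{j_f}\le c_j-\epsilon$, $c_j\ge 1+\epsilon$, $\epsilon>0$; the only new certificates that become viable are pure circulations (those with $\sum_j z_j=0$), which force some cycle to have $\prod\alpha^{f_i}_{j_{i-1},1}=1$ exactly -- also excluded by the strict cycle inequality of Definition~\ref{adef.np}. So for every sufficiently small $\epsilon>0$ the perturbed system is feasible, producing a point strictly inside every defining inequality of $\CC$.

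The main obstacle is the generalized-flow decomposition step: one must carefully match the cycle multipliers arising from the Farkas certificate against the ``$\max_f 1/\alpha^f_{\cdot,1}$'' weights used in $G_\CF(\CM)$, and verify that \emph{any} nonnegative certificate (not only a cycle-shaped one) can be pushed onto cycles because nonnegative net output at every node rules out acyclic residuals. Once this decomposition is in hand, both the existence claim and the interior claim follow from the same strict cycle-product inequality hypothesized in Definition~\ref{adef.np}.
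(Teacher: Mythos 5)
The high-level idea -- Farkas' lemma plus a cycle argument against \emph{no production out of nothing} -- is exactly what the paper does, but you work in the multiplicative domain and the paper works in the logarithmic domain, and this is not a cosmetic difference: your version lands on a genuinely harder flow-decomposition problem that you identify as ``the main obstacle'' but do not actually resolve.

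Here is the gap. After you reduce to the first-segment system and apply Farkas, the balance equation at good $j^*$ is
\[
\sum_{(f,j):\,j_f=j^*}\alpha^f_{j1}\,y_{f,j}\;=\;\sum_f y_{f,j^*}+z_{j^*},
\]
which is a \emph{generalized} flow balance: the gain factors $\alpha^f_{j1}$ sit in the matrix, not in the right-hand side. Generalized flows do not admit the clean ``decompose into directed cycles'' theorem that ordinary flows do (their decomposition involves flow-generating and flow-absorbing cycles joined by paths, i.e.\ bicycles), and nonnegativity of the $z_j$ does not by itself force the support onto cycles in the way your sketch asserts. Moreover, even once you locate a cycle in the support, it is not true that balance forces the gain product on \emph{that} cycle to be at least $1$; the flow on a single support cycle may be padded by $z$'s and by edges of other cycles, and a generic cycle in the support can perfectly well have gain less than one. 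Your appeal to ``acyclic residuals being ruled out'' is not a substitute for this missing step, and your perturbation argument for the interior has the same problem (the perturbed $y^{T}b<0$ is actually \emph{easier} to satisfy, so more certificates become viable, not only pure circulations as you claim).

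The paper sidesteps all of this with one move you don't make: take logarithms first. Writing $e_j=\log c_j$ turns $\alpha^f_{j1}c_{j_f}\le c_j$ into the difference constraint $e_{j_f}-e_j\le -\log\alpha^f_{j1}$, so the Farkas matrix becomes an ordinary node--arc incidence matrix (all $\alpha$'s are relocated to the right-hand side $b$). Then $y^{T}A=0$ with $y\ge 0$ is an ordinary circulation -- indeed the $z_j$ multipliers on $c_j\ge 1$ are forced to vanish because the corresponding entries of $b$ are $0$ while conservation makes $\sum_j z_j=0$ -- and a circulation decomposes into directed cycles by the standard flow-decomposition theorem. The inequality $y^{T}b<0$ then hands you a cycle whose $\log\alpha$'s sum to something positive, i.e.\ a cycle in $G_{\CF}(\CM)$ violating Definition~\ref{adef.np}, and the strictness of that definition rules out zero-sum cycles as well, giving the interior. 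So the log transform is not a stylistic choice: it is what makes the cycle decomposition routine. To repair your proof you would either need to prove the generalized-flow decomposition lemma you are implicitly invoking, or -- much more simply -- take logs up front as the paper does.
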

\begin{proof}
Taking logarithms, the first condition of (\ref{aeq.lblp}) transforms to $\log(c_{j_f})-\log(c_j) \le -\log\alpha^f_{jk}$.
and the second condition to $\log(c_j)\ge 0$. Rename $\log(c_j)$ by $e_j$; 
this gives a system analogues to $Ax\le b$. By Farkas' lemma this does not have a solution if and only if there is a $y \ge0,\
y^TA=0,\ y^Tb=-1$ \cite{BSS}. It is easy to check that for our system of equations, existence of such a $y$ implies a cycle of weight
at least zero in the graph between goods, where there is an edge from $j$ to $j_f$ with weight $\alpha^f_{jk}$. This
contradicts {\em no production out of nothing} assumption. Further, this condition also implies that there is no cycle with
$\log\alpha^f_{jk}$'s adding to zero, hence $\CC$ has a non-empty interior.
\end{proof}

Take a vector $\cc$ from the interior of $\CC$; such a vector exists by Lemma \ref{ale.lb}.
Replace $p_j$ with $\ppj+c_j$ in AD-LCP, and the resulting LCP, call it {\bf NHAD-LCP}, is as follows. There are non-negativity
constraints on all the variables, however for brevity we omit them. 

{\small
\begin{equation}\label{aeq.01} 
\begin{array}{c} 
\forall (f,j,k):  \ \ \ \ \alpha^f_{jk}p'_{j_f}-\ppj -\beta^f_{jk} \le
c_{j}-\alpha^f_{jk}c_{j_f} \ \  \ \ \mbox{and}\ \ \ \ 
r^f_{jk}\left(\alpha^f_{jk}(p'_{j_f}+c_{j_f}) - (\ppj +c_j) 
-\beta^f_{jk}\right) = 0 
\end{array}
\end{equation}

\begin{equation}\label{aeq.02} 
\begin{array}{c} 
\forall (f,j,k): \ \ \ \ r^f_{jk} - o^f_{jk} \ppj\le o^f_{jk}c_{j} \ \ \ \ \mbox{and}\
\ \ \ \beta^f_{jk}(r^f_{jk} - o^f_{jk}(p'_{j}+c_{j}))=0 
\end{array}
\end{equation}

\begin{equation}\label{aeq.03} 
\begin{array}{c} 
\forall (f,j,k):\ \ \ \ s^f_{jk} = r^f_{jk} + o^f_{jk}\beta^f_{jk} \ \ \ \ \mbox{and} \ \
\ \ \forall f \in \CF: \CE^f= \sum_{j,k} o^f_{jk} \beta^f_{jk} 
\end{array}
\end{equation}

\begin{equation}\label{aeq.1} 
\begin{array}{c} 
\forall j \in \CG:  \ \ \  \displaystyle\sum_{i,k} {q_{jk}^i} + \sum_{f,k} r^f_{jk} - \ppj -
\hspace{-0.5cm}\sum_{ f \in \CF(j),j',k}\hspace{-0.5cm} s^f_{j'k} \le \cj \ \ \ \mbox{and}\ \ \  
\ppj (\sum_{i,k} {q_{jk}^i} +
\displaystyle\sum_{f,k} r^f_{jk} - (\ppj+\cj) - \hspace{-0.5cm} \sum_{f\in \CF(j),j',k}\hspace{-0.5cm} s^f_{j'k} ) = 0 
\end{array}
\end{equation}

\begin{equation}\label{aeq.2} 
\begin{array}{c} 
\forall i \in \CA:  \ \ \ \displaystyle\sum_j {w_j^i \ppj} + \sum_f {\theta^i_f \CE^f} - \sum_{j,k}
{q_{jk}^i} \leq - \sum_j {w_j^i}\cj  \ \ \  \mbox{and} 
\ \ \  \li (\displaystyle\sum_{j} {w_j^i (\ppj +
\cj)} + \sum_f {\theta^i_f \CE^f} - \sum_{j,k} {q_{jk}^i} ) = 0 
\end{array}
\end{equation}

\begin{equation}\label{aeq.3} 
\begin{array}{c} 
\forall (i,j,k):  \ \ \ \  u^i_{jk} \li - \ppj - \gijk \leq \cj \ \ \ \  \mbox{and}\ \ \ \
q^i_{jk}(u^i_{jk} \li -  (\ppj + \cj) - \gijk) = 0 
\end{array}
\end{equation}

\begin{equation}\label{aeq.4} 
\begin{array}{c} 
\forall (i,j,k):  \ \ \ \  q^i_{jk} - l^i_{jk} \ppj \leq l^i_{jk}\cj \ \ \ \  \mbox{and}  \ \
\ \  \gijk ( q^i_{jk} - l^i_{jk} (\ppj +\cj)) = 0 
\end{array}
\end{equation}
}

The following lemmas establish a strong connection between solutions of NHAD-LCP and the equilibrium conditions for market $\CM$.

\begin{lemma} \label{alem.optprod} In any solution to NHAD-LCP, each firm produces as per profit maximizing production plan
w.r.t. the prices of goods given by this solution.  \end{lemma}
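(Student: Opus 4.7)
The plan is to reverse the change of variables $p_j = p'_j + c_j$ used to dehomogenize the LCP, and then verify that the resulting system on the original price variables is exactly the KKT optimality system of the firm's production LP~(\ref{alp}). First I would observe that in any solution to NHAD-LCP, every substituted price $p_j := p'_j + c_j$ satisfies $p_j \ge c_j \ge 1 > 0$, because $p'_j \ge 0$ by non-negativity and $c_j \ge 1$ by construction of $\cc$ from polyhedron $\CC$ in (\ref{aeq.lblp}). This positivity is what lets the subsequent argument go through: it permits division by $p_j$, and it ensures that the right-hand sides $c_j - \alpha^f_{jk} c_{j_f}$ and $o^f_{jk} c_j$ in (\ref{aeq.01}) and (\ref{aeq.02}) combine with the shifted variables to reproduce the homogeneous constraints (\ref{aeq1}) and (\ref{aeq2}) in the variable $p_j$.

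Next I would change back from money variables to amount variables, defining $x^f_{jk} := r^f_{jk}/p_j$ for each firm--raw-good--segment triple. Rewriting (\ref{aeq.01}) and (\ref{aeq.02}) under this substitution yields precisely the pair of feasibility-plus-complementarity conditions (\ref{aeq01}) and (\ref{aeq02}) at prices $\pp$: namely $\alpha^f_{jk} p_{j_f} - p_j \le \beta^f_{jk}$ with $x^f_{jk}(\alpha^f_{jk} p_{j_f} - p_j - \beta^f_{jk}) = 0$, and $x^f_{jk} \le o^f_{jk}$ with $\beta^f_{jk}(x^f_{jk} - o^f_{jk}) = 0$. These are exactly the primal feasibility, dual feasibility, and complementary slackness conditions for the LP in (\ref{alp}), with $\beta^f_{jk}$ playing the role of the dual variable on $x^f_{jk} \le o^f_{jk}$. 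By strong LP duality, $\{x^f_{jk}\}$ is therefore an optimal production plan for firm $f$ at prices $\pp$.

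It remains to check that the variables $s^f_{jk}$ and $\CE^f$ defined by (\ref{aeq.03}) correctly encode the revenue and profit of this optimal plan. I would do a short case analysis on each segment. If $\beta^f_{jk} = 0$ and $r^f_{jk} > 0$, then (\ref{aeq.01}$'$) forces $p_j = \alpha^f_{jk} p_{j_f}$, so $s^f_{jk} = r^f_{jk} = x^f_{jk} p_j = \alpha^f_{jk} x^f_{jk} p_{j_f}$, matching the revenue $y^f_{jk} p_{j_f}$. If $\beta^f_{jk} > 0$, then (\ref{aeq.02}$'$) forces $x^f_{jk} = o^f_{jk}$ and (\ref{aeq.01}$'$) forces $p_j + \beta^f_{jk} = \alpha^f_{jk} p_{j_f}$, so again $s^f_{jk} = o^f_{jk}(p_j + \beta^f_{jk}) = \alpha^f_{jk} o^f_{jk} p_{j_f}$. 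The remaining case $\beta^f_{jk} = r^f_{jk} = 0$ is trivial. Summing over segments, the segment-wise profit $(\alpha^f_{jk} p_{j_f} - p_j) x^f_{jk}$ equals $o^f_{jk} \beta^f_{jk}$ in every case, so $\CE^f$ is indeed the total profit.

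The main obstacle I anticipate is not any single calculation but the careful bookkeeping in the third step: the identification $s^f_{jk} = y^f_{jk} p_{j_f}$ holds only because of the complementarity between $r^f_{jk}, \beta^f_{jk}$ and the two inequalities of (\ref{aeq.01}) and (\ref{aeq.02}), which in turn relies on $p_j > 0$. Everything hinges on the fact that the lower bound $c_j \ge 1$, combined with $c_j \ge \alpha^f_{jk} c_{j_f}$ that guarantees non-negative right-hand sides in (\ref{aeq.01}), lets us cleanly unwind the value-to-amount translation without creating spurious degenerate solutions. Once this is set up, the rest is a mechanical verification of LP optimality.
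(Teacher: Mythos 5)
Your proposal is correct and follows essentially the same route as the paper's own proof: undo the shift $p_j = p'_j + c_j$, convert back to amount variables via $x^f_{jk} = r^f_{jk}/p_j$ using positivity of $p_j$, observe that (\ref{aeq.01})--(\ref{aeq.02}) then become exactly the LP optimality conditions (\ref{aeq01})--(\ref{aeq02}), and run the same two-case analysis on $\beta^f_{jk}$ to confirm that $s^f_{jk}$ and $\CE^f$ encode revenue and profit. The only cosmetic difference is that you invoke strong LP duality by name while the paper verifies the KKT system directly; the substance is identical.
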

\begin{proof}
Given a solution, consider prices $\pp=\ppp+\cc$. 
Let $x^f_{jk}=\nfrac{r^f_{jk}}{p_j}$ and
$y^f_{jk}=\nfrac{s^f_{jk}}{p_{j_f}}$ be the amount of used and produced goods on segment $(f,j,k)$.
These are well defined since $p_j$'s are positive. 
Now we will show that $x^f_{jk}$'s, $\beta^f_{jk}$'s and $\pp$ satisfy the optimality conditions (\ref{aeq01}) and (\ref{aeq02}), and
also $y^f_{jk}=\alpha^f_{jk}x^f_{jk}$, for all $(f,j,k)$. 

For a firm $f$ observe that $x^f_{jk}$'s and $\beta^f_{jk}$'s are non-negative and satisfy conditions (\ref{aeq01}) and
(\ref{aeq02}) since (\ref{aeq.01}) and (\ref{aeq.02}) are satisfied at the given solution. Therefore they form a solution of
LP (\ref{alp}) at prices $p_j$ and hence $x^f_{jk}$'s give the amounts to be used at optimal production plan.

Next we show that $y^f_{jk}=\alpha^f_{jk}x^f_{jk}$ indeed holds.
If $x^f_{jk}>0$ then from (\ref{aeq.01}) we have $\beta^f_{jk}=\alpha^f_{jk}p_{j_f}-p_j$. If $\beta^f_{jk}=0$ then from
(\ref{aeq.03}) we have $s^f_{jk}=r^f_{jk}$. In this case,
\[
y^f_{jk} = \frac{s^f_{jk}}{p_{j_f}} = \alpha^f_{jk}\frac{r^f_{jk}}{p_j} = \alpha^f_{jk} x^f_{jk}
\]

If $\beta^f_{jk}>0$ then using (\ref{aeq.02}') we have $r^f_{jk}=o^f_{jk}p_j$ and $x^f_{jk}=o^f_{jk}$, and using (\ref{aeq.03})
$s^f_{jk}=r^f_{jk}+o^f_{jk}\beta^f_{jk}$.
\[
y^f_{jk}=\frac{s^f_{jk}}{p_{j_f}} = \frac{r^f_{jk}+o^f_{jk}\beta^f_{jk}}{p_{j_f}} = \frac{o^f_{jk}p_j + o^f_{jk} (\alpha^f_{jk}p_{j_f}
- p_j)}{p_{j_f}} = \alpha^f_{jk}x^f_{jk}
\]

Further, since $\beta^f_{jk}$ captures profit on segment $(f,j,k)$, if non-negative, variable $\CE^f$ captures the total profit of firm
$f$ (due to (\ref{aeq.03}')).
\end{proof}

\begin{lemma} \label{alem.optbund} 
Each agent receives an optimal bundle of goods at a solution of NHAD-LCP, w.r.t. the prices
of goods given by this solution.  
\end{lemma}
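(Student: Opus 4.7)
The plan is to fix any solution of NHAD-LCP, set prices $p_j \defeq p'_j + c_j > 0$, and for each agent $i$ define the candidate bundle $x^i_{jk} \defeq q^i_{jk}/p_j$. I then need to verify that $\{x^i_{jk}\}$ maximizes $U^i$ subject to the budget $E_i \defeq \sum_j w^i_j p_j + \sum_f \theta^i_f \CE^f$. The strong connectivity assumption together with Lemma \ref{ale.desire} ensures $\pp > 0$ and $E_i > 0$ for every agent, which will be used repeatedly.

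The step I expect to be the main obstacle is ruling out $\lambda_i = 0$. If $\lambda_i = 0$, the complementarity condition (\ref{aeq.3}') collapses to $q^i_{jk}(p_j + \gamma^i_{jk}) = 0$, and since $p_j > 0$ and $\gamma^i_{jk} \ge 0$ this forces $q^i_{jk} = 0$ for every $(j,k)$. Substituting $q^i_{jk} \equiv 0$ into (\ref{aeq.2}) then yields $E_i \le 0$, contradicting $E_i > 0$. Once $\lambda_i > 0$ is in hand, the complementarity condition (\ref{aeq.2}') immediately gives $\sum_{j,k} q^i_{jk} = E_i$, so agent $i$ exactly exhausts her income.

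With $\lambda_i > 0$, the bang-per-buck characterization recalled in Section \ref{asec.charNlcp} can be read off (\ref{aeq.3})--(\ref{aeq.4}) via complementarity. The implication $q^i_{jk} > 0 \Rightarrow u^i_{jk}\lambda_i = p_j + \gamma^i_{jk} \ge p_j$ gives $u^i_{jk}/p_j \ge 1/\lambda_i$, while $q^i_{jk} < l^i_{jk} p_j$ forces $\gamma^i_{jk} = 0$ through (\ref{aeq.4}'), whence (\ref{aeq.3}) yields $u^i_{jk}/p_j \le 1/\lambda_i$. Taking contrapositives: segments with $u^i_{jk}/p_j > 1/\lambda_i$ are filled to the cap $x^i_{jk} = l^i_{jk}$, segments with $u^i_{jk}/p_j < 1/\lambda_i$ get $x^i_{jk} = 0$, and segments with $u^i_{jk}/p_j = 1/\lambda_i$ may be partially purchased. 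Because the slopes $u^i_{jk}$ strictly decrease in $k$ within each good, this ordering is automatically respected inside every $U^i_j$.

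Combined with the budget equality from the previous step, this matches the description of an optimal bundle at prices $\pp$ exactly: forced (strictly higher-bpb) segments are filled in full, leftover income is distributed arbitrarily among flexible equi-bpb segments, and undesired segments are untouched. Therefore $\{x^i_{jk}\}$ is an optimal consumption plan for agent $i$, which is the content of the lemma. Aside from the $\lambda_i = 0$ argument, the rest is routine unpacking of the LCP complementarity.
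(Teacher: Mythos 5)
Your proof is correct and follows essentially the same route as the paper's: establish $\lambda_i>0$, then read the forced / flexible / undesirable partition structure directly off the complementarity conditions (\ref{aeq.3})--(\ref{aeq.4}), and invoke the characterization of optimal bundles from Section~\ref{asec.charNlcp}.

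Two small remarks. First, your appeal to Lemma~\ref{ale.desire} to get $\pp>0$ is misplaced: that lemma concerns \emph{equilibrium} prices, which is precisely what you have not yet established. But the conclusion is trivially available anyway, since in NHAD-LCP $p_j = p'_j + c_j \ge c_j \ge 1 > 0$ by construction of the vector $\cc$; no appeal to any lemma is needed. Second, your derivation of $\lambda_i>0$ (via plugging $q^i_{jk}\equiv 0$ into the constraint~(\ref{aeq.2}) itself and contradicting $E_i>0$, the latter from strong connectivity) is a little cleaner than the paper's, which says ``contradicting market clearing'' and thus implicitly leans on Lemma~\ref{alem.clear}, a later result. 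Likewise, working directly with the threshold $1/\lambda_i$ rather than first introducing $\sigma_i$ (the inverse bang-per-buck of the least-preferred positively allocated segment, as in the paper) avoids having to argue $\lambda_i\ge\sigma_i$ and then split into cases $\lambda_i>\sigma_i$ versus $\lambda_i=\sigma_i$. These are streamlinings of the same argument, not a different approach.
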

\begin{proof} 
Given a solution let the prices be $\pp=\ppp+\cc$.
Consider an agent $i$. First observe that $\li > 0$, for otherwise (\ref{aeq.3}) will be satisfied as a strict
inequality hence forcing, via (\ref{aeq.3}'), $\qijk = 0$ for each segment of $i$ and hence contradicting market clearing.

Among all segments of $i$ on which a positive allocation has been made, consider one having the lowest bang-per-buck, say it
is $(i, j, k)$.  Let $Q$ be the partition it belongs to and let its bang-per-buck be \[ {\uijk \over p_j} = {1 \over
\sigma_i}  .\] 
\indent Now, by the constraint (\ref{aeq.3}) for this segment, and the non-negativity of $\gijk$, we get that $\li
\geq \sigma_i$.

Define $Q$ to be the flexible partition, all partitions having bang-per-buck strictly higher than $1/\sigma_i$ to be forced
partitions, and all partitions having bang-per-buck strictly lower than $1/\sigma_i$ to be undesirable partitions. Next, we
will prove that the names given are in accordance with those in Section \ref{asec.charNlcp}.

Consider an arbitrary segment of $i$, say $(i, j, k)$. If it is in a forced partition, it must have $\gijk > 0$ in order to
satisfy (\ref{aeq.3}). As a result, in order to satisfy (\ref{aeq.4}'), the inequality (\ref{aeq.4}) must be satisfied with
equality, i.e., this segment is fully allocated. And if $(i, j, k)$ is in an undesirable partition, it must satisfy
(\ref{aeq.3}) as a strict inequality. Hence, $\qijk = 0$ by (\ref{aeq.3}'), i.e., it is totally unallocated.

Finally, if $(i, j, k) \in Q$, there are two cases. If $\li > \sigma_i$, then in order to satisfy (\ref{aeq.3}), $\gijk > 0$.
Again, in order to satisfy (\ref{aeq.4}'), the inequality (\ref{aeq.4}) must be satisfied with equality, i.e., all segments in
partition $Q$ must be fully allocated. And if $\li = \sigma_i$, $\gijk = 0$ in order to satisfy (\ref{aeq.3}). As a result,
the only constraints on $\qijk$ are that $0 \leq \qijk \leq \lijk p_j$, i.e., the allocation on this segment is flexible. In
order to satisfy market clearing, in both cases, the total money spent on segments in $Q$ must exhaust all the money of $i$
that is remaining after all forced partitions are allocated. 

In both cases we get that $1 / \li$ is a lower bound on the bang-per-buck of the flexible partition, i.e., $1 / \sigma_i$,
as was promised. Also, by the characterization given in Section \ref{asec.charNlcp}, $i$ receives an optimal bundle of goods.
\end{proof}

Market clearing, proved in next lemma, is relatively easy.

\begin{lemma}\label{alem.clear}
Given a solution of NHAD-LCP, constraints of (\ref{aeq.1}) and (\ref{aeq.2}) ensures that market clears w.r.t. the prices at the
solution. 
\end{lemma}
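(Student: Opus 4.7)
The plan is to show that (\ref{aeq.1}) and (\ref{aeq.2}) must each hold with equality, and then translate the money-based equalities into amount-based market clearing using the positivity of prices. Positivity is immediate here since, by construction of NHAD-LCP, $p_j = \ppj + c_j \ge c_j \ge 1 > 0$, so we may freely divide by $p_j$.

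First, I would aggregate the two families of inequalities. Summing (\ref{aeq.1}) over $j \in \CG$ and noting that each firm produces exactly one good (so $\sum_j \sum_{f \in \CF(j),j',k} s^f_{j'k} = \sum_{f,j',k} s^f_{j'k}$), and then using the identity $\sum_{j',k} s^f_{j'k} = \sum_{j',k} r^f_{j'k} + \CE^f$ that follows from (\ref{aeq.03}), the firm-side and raw-material terms cancel and yield
\[
\sum_{i,j,k} q^i_{jk} \;\le\; \sum_j p_j + \sum_f \CE^f.
\]
Summing (\ref{aeq.2}) over $i \in \CA$ and using the normalizations $\sum_i w^i_j = 1$ and $\sum_i \theta^i_f = 1$ gives the reverse inequality
\[
\sum_j p_j + \sum_f \CE^f \;\le\; \sum_{i,j,k} q^i_{jk}.
\]
Hence both aggregated inequalities collapse to equalities; since every individual constraint in (\ref{aeq.1}) and (\ref{aeq.2}) has nonnegative slack, zero aggregate slack forces each constraint in (\ref{aeq.1}) and each constraint in (\ref{aeq.2}) to be tight.

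Finally, I would translate these tight money-equalities back into amounts. Fix a good $j$ and divide the tight form of (\ref{aeq.1}) by $p_j > 0$. Setting $x^i_{jk} = q^i_{jk}/p_j$, $x^f_{jk} = r^f_{jk}/p_j$, and (using $f \in \CF(j) \Rightarrow j_f = j$) $y^f_{j'k} = s^f_{j'k}/p_{j_f}$, this becomes
\[
\sum_{i,k} x^i_{jk} + \sum_{f,k} x^f_{jk} \;=\; 1 + \sum_{f \in \CF(j),\,j',k} y^f_{j'k},
\]
which is exactly demand plus raw-material usage equals endowment plus production for good $j$. The tight form of (\ref{aeq.2}) states that agent $i$'s expenditure $\sum_{j,k} q^i_{jk}$ exhausts her total income $\sum_j w^i_j p_j + \sum_f \theta^i_f \CE^f$, completing budget balance. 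By Lemmas \ref{alem.optprod} and \ref{alem.optbund}, the $x^f_{jk}$, $y^f_{j'k}$ and $x^i_{jk}$ obtained are the optimal production and optimal-bundle quantities, so the equalities above are precisely the market clearing conditions at prices $\pp = \ppp + \cc$.

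The only subtle point in this argument is showing that both summed inequalities collapse simultaneously; everything else is purely algebraic, driven by the cancellations already built into (\ref{aeq.03}) and by the normalizations $\sum_i w^i_j = \sum_i \theta^i_f = 1$. Positivity of $p_j$, which is what ultimately lets us convert money variables back to amounts, is a structural consequence of the lower bounds $c_j \ge 1$ imposed when constructing NHAD-LCP from AD-LCP, so no further work is needed there.
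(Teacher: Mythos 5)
Your proof is correct and follows essentially the same approach as the paper's: aggregate (\ref{aeq.1}) over goods and (\ref{aeq.2}) over agents, cancel the raw-material and revenue terms via (\ref{aeq.03}), use the normalizations $\sum_i w^i_j = 1$ and $\sum_i \theta^i_f = 1$ to obtain two opposing inequalities that collapse, and then invoke nonnegative slacks to conclude each constraint is tight. The final paragraph translating the tight money-equations back to amount-based clearing is a nice explicit step that the paper leaves implicit.
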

\begin{proof} 
Adding the constraints in (\ref{aeq.1}) over all goods and those in (\ref{aeq.2}) over all agents, and using (\ref{aeq.03}) we
get \[ \sum_{i,j,k} {q_{jk}^i} \leq  \sum_j {p_j} + \sum_{f,j,k} o^f_{jk}\beta^f_{jk}  \ \ \ \  \mbox{and} \ \ \ \
\sum_{i,j} w_j^i p_j +\sum_{i,f} \theta^i_{f} \CE^f \leq \sum_{i,j,k} {q_{jk}^i}, \] respectively. Since $\sum_{f,j,k}
o^f_{jk}\beta^f_{jk} = \sum_f \CE^f$, $\sum_i \theta^i_{f}=1$ and $\sum_{i,j} {w_j^i p_j} = \sum_j {p_j}$, both
these inequalities are equalities. Finally, by non-negative, all the constraints in (\ref{aeq.1}) and (\ref{aeq.2}) must hold
with equality, hence proving the lemma.  
\end{proof}

Lemmas \ref{alem.optprod}, \ref{alem.optbund} and \ref{alem.clear} establishes all three requirements of market equilibrium. They together
with Lemma \ref{alem.opt1} gives the following theorem.

\begin{theorem}\label{athm.lcp}
The solutions of NHAD-LCP capture exactly the equilibria of market $\CM$ with SPLC production and SPLC utilities, up to scaling.
\end{theorem}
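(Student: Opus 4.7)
The plan is to observe that the theorem is essentially an assembly of the four preceding lemmas, so I would prove the two directions of the equivalence separately.

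For the forward direction (NHAD-LCP solution $\Rightarrow$ equilibrium), let $(\ppp, r, s, \beta, \CE, q, \lambda, \gamma)$ be any solution of NHAD-LCP, and define prices $\pp = \ppp + \cc$. Lemma \ref{alem.optprod} shows that at these prices the firms' variables correspond to profit-maximizing production plans, Lemma \ref{alem.optbund} shows that each agent receives an optimal bundle, and Lemma \ref{alem.clear} shows that every good-side and agent-side clearing constraint holds with equality. These three properties are exactly the definition of a market equilibrium of $\CM$, so $\pp$ is an equilibrium.

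For the reverse direction (equilibrium $\Rightarrow$ NHAD-LCP solution up to scaling), suppose $\pp$ is an equilibrium. By the enough-demand assumption and Lemma \ref{ale.desire} we have $\pp > 0$. Lemma \ref{alem.opt1} already provides an assignment to the AD-LCP variables that solves AD-LCP. To convert this into a solution of NHAD-LCP I would rescale: since equilibria are scale-invariant \cite{AD}, for any $t>0$ the prices $t\pp$ are also an equilibrium, and rescaling every value-valued variable ($r^f_{jk}, s^f_{jk}, \beta^f_{jk}, \CE^f, q^i_{jk}, \lambda_i, \gamma^i_{jk}$) by the same factor $t$ preserves AD-LCP, because each of its constraints (\ref{aeq1})--(\ref{aeq7}) is homogeneous of the appropriate degree in the joint vector of prices and dual variables. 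Choose $t$ large enough that $tp_j \ge c_j$ for every good $j$, and then set $\ppj \defeq tp_j - c_j \ge 0$. Substituting $p_j = \ppj + c_j$ into the rescaled AD-LCP equations yields precisely the constraints and complementarity conditions (\ref{aeq.01})--(\ref{aeq.4}) of NHAD-LCP, with all variables non-negative, giving the required solution.

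The main step requiring a bit of care is checking homogeneity of every AD-LCP condition under simultaneous rescaling of prices and duals; in particular expressions such as $\alpha^f_{jk}p_{j_f} - p_j - \beta^f_{jk}$, $u^i_{jk}\lambda_i - p_j - \gamma^i_{jk}$, and the market-clearing equalities mixing $q^i_{jk}$, $r^f_{jk}$, $s^f_{jk}$ and $p_j$ must all scale by the same common factor $t$, which they do since $\lambda_i$ has the units of inverse bang-per-buck and $\gamma^i_{jk}$ has the units of price. Given this routine check and the four preceding lemmas, the theorem follows, with the ``up to scaling'' caveat accounting exactly for the rescaling needed to satisfy the lower bound $p_j \geq c_j$ implicit in the substitution $p_j = \ppj + c_j$. \QED
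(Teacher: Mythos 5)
Your proof is correct and follows essentially the same approach as the paper, which assembles the theorem from Lemmas \ref{alem.optprod}, \ref{alem.optbund}, \ref{alem.clear}, and \ref{alem.opt1}. You actually spell out more carefully than the paper does the reverse-direction rescaling step (scaling all money-denominated variables, including $\lambda_i$ and $\gamma^i_{jk}$, by $t$ and then setting $p'_j = tp_j - c_j$), which the paper leaves implicit in its one-sentence proof; your homogeneity check is right and makes this step rigorous.
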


Theorem \ref{athm.lcp} settles the appropriate subcase of the open problem posed by Eaves (1975) \cite{eaves1} and Garg et
al. \cite{GMSV}, of formulating an LCP to capture equilibria of markets with production.

\section{Algorithm}\label{asec.alg}
From Theorem \ref{athm.lcp} computing an equilibrium of market $\CM$ reduces to solving NHAD-LCP, which
has the same form as the formulation given in (\ref{eq.a}) in Appendix \ref{sec.LCP}; equalities (\ref{aeq.03}) can be removed by 
replacing $s^f_{jk}$'s and $\CE^f$ with corresponding expressions. 
Let $M$ and $\pq$ be the matrix and rhs vector formed by the inequalities of NHAD-LCP, and let $\py$ be the variable vector such that
NHAD-LCP can be written as $M\py \le \pq,\ \py\ge 0,\ \py^T(M\py-\pq)=0$. 

Since the rhs vector $\pq$ does have negative entries, namely in (\ref{aeq.2}), Lemke's algorithm is
applicable (Refer to Appendix \ref{sec.LCP} for detailed description of Lemke's algorithm). 
We will add the $z$ variable only in the constraints and complementarity conditions that have a negative rhs. 
Thus we need to make two changes to NHAD-LCP to obtain the augmented LCP, which we call {\bf NHAD-LCP'}. 
First, we change (\ref{aeq.2}) as follows: 

{\small
\begin{equation} \label{aeq.9} 
\begin{array}{c}
\hspace{-2cm}\forall i \in \CA:  \ \   \displaystyle\sum_j {w_j^i \ppj}+\sum_f \theta^i_f\CE^f - \sum_{j,k} {q_{jk}^i} -z
\le  -  \sum_j {w_j^i c_j}  \ \ \ \  \mbox{and} \\ 
\hspace{5cm}\li (\sum_j {w_j^i
(\ppj+\cj)}+\sum_f\theta^i_f\CE^f - \sum_{j,k} {q_{jk}^i} -z)  = 0 
\end{array}
\end{equation}
}

Second, we impose non-negativity on $z$. 
Clearly, solutions of NHAD-LCP' with $z=0$ are solutions of NHAD-LCP as well, and hence are market equilibria (Theorem \ref{athm.lcp}).
Let the polyhedron of NHAD-LCP' be denoted by $\CP'$.
\medskip

\noindent{\bf Degeneracy.}
The Lemke's scheme assumes that the polyhedron associated with the augmented LCP (NHAD-LCP' in our case) is non-degenerate (Appendix
\ref{sec.LCP}).
It turns out that polyhedron $\CP'$ has an inherent degeneracy at points with $z=0$, so we need to
clarify the non-degeneracy assumption we are making. The degeneracy comes about because of the following fact established in
the proof of Lemma \ref{alem.clear}: adding the constraints in (\ref{aeq.1}) over all goods and those in (\ref{aeq.9}) over all
agents yields two identical equations. 

Henceforth, we will say that the polyhedron $\CP'$ corresponding to NHAD-LCP' is non-degenerate if it has no other degeneracy. 
\medskip

\noindent Let $M'$ be the augmented matrix of NHAD-LCP' and $\py'$ be the corresponding variable vector $(\py,z)$.
Recall from Appendix \ref{sec.LCP} that the set of solutions of NHAD-LCP', called $S$, consists of paths and cycles. 
Our algorithm traverses one such path starting from the {\em primary ray} -- unbounded edge of $S$ where $\py=0$. 
Except for the {\em primary ray} all other unbounded edges in $S$ with $z>0$ are called {\em secondary rays}.
Clearly, $z=\max_i \sum_j w^i_jc_j$ and all other variables zero is a solution vertex of 
NHAD-LCP', call it $\py'_0$; it is also the vertex of the {\em primary ray}.

\begin{table}
\caption{Algorithm for markets with SPLC utilities and SPLC production}
\label{atab.alg}
\centering{
\begin{tabular}{|l|}\hline
Initialization: Let $\py' \leftarrow \py'_0$.\\
{\bf While} $z > 0$ in the current solution $\py'$, {\bf do}\\
\hspace{15pt}Suppose at $\py'$ we have $y'_i=0$ and $(M'\py'-\pq)_i=0$, i.e., $i$ is the double label.\\
\hspace{15pt}{\bf If} $(M'\py'-\pq)_i$ just became $0$ at the current vertex, {\bf then} pivot by relaxing $y'_i=0$.\\
\hspace{15pt}{\bf Else}, pivot by relaxing $(M'\py'-\pq)_i=0$.\\
\hspace{15pt}{\bf If} a new vertex is reached, {\bf then} reinitialize $\py'$ with it.\\
\hspace{15pt}{\bf Else} output `Secondary ray'. {\bf Exit}.\\
{\bf Endwhile}
Output solution $\py'$.\\ \hline
\end{tabular}
}
\end{table}

The algorithm can never cycle or get stuck (no double label found) as discussed in Appendix \ref{sec.LCP}. 
It terminates when either $z$ becomes zero or a {\em secondary ray} is reached. In the former case we obtain
a solution of the original NHAD-LCP and hence a market equilibrium (Theorem \ref{athm.lcp}). 
In the latter case, there is no recourse and the algorithm simply aborts without finding a solution.
Next we show that this case never occurs.

\subsection{No secondary rays in polyhedron $\CP'$}\label{asec.sec}
In this section we show that the polyhedron of NHAD-LCP' does not have {\em secondary rays}. The proof is case by case basis and some
what involved as we need to keep track of all the different variables. 
First we establish a few facts about the points of $S$, crucial for the proof. (solutions of NHAD-LCP').
W.r.t. a solution $(\py,z)$ to NHAD-LCP', define the {\em surplus of agent $i$} to be the difference of her earnings and the amount
of money she spends, i.e., $\sum_j w^i_{j} (\ppj+c_j) +\sum_f  \theta^i_f \CE^f - \sum_{j, k} {\qijk}$. 

\begin{lemma}\label{acl.1} At a solution $(\py,z)$ of NHAD-LCP', surplus of every agent is
non-negative, and is at most $z$. Further, if each good is fully sold then $z=0$.
\end{lemma}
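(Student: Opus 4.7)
The plan is to prove the three assertions in turn: upper bound on surplus, non-negativity of surplus, and the market-clearing implication for $z$.

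For the upper bound, I would simply rewrite constraint (\ref{aeq.9}) as
\[
\sum_j w^i_j(\ppj+\cj)+\sum_f\theta^i_f\CE^f-\sum_{j,k}\qijk \;\le\; z,
\]
which is exactly the statement that the surplus of agent $i$ is at most $z$.

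For non-negativity, I would split on whether $\li$ is positive. If $\li>0$, complementarity condition (\ref{aeq.9}') forces the surplus of $i$ to equal $z$, and since $z\ge 0$ by construction (the $z$-variable is non-negative in NHAD-LCP'), the surplus is $\ge 0$. If $\li=0$, then for each segment $(i,j,k)$ the expression $u^i_{jk}\li-(\ppj+\cj)-\gijk = -(\ppj+\cj)-\gijk$ is strictly negative, because $\cj>0$ by Lemma \ref{ale.lb} and $\ppj,\gijk\ge 0$. Hence complementarity (\ref{aeq.3}') forces $\qijk=0$ for every $(j,k)$, so the agent spends nothing and her surplus is $\sum_j w^i_j(\ppj+\cj)+\sum_f\theta^i_f\CE^f\ge 0$ (using $\CE^f=\sum_{j,k}o^f_{jk}\beta^f_{jk}\ge 0$ from (\ref{aeq.03})).

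For the last claim, assume every good is fully sold, i.e., every inequality in (\ref{aeq.1}) holds with equality. Summing these equalities over $j$, using that each firm produces exactly one good so that $\sum_j\sum_{f\in\CF(j),j',k}s^f_{j'k}=\sum_{f,j',k}s^f_{j'k}$, and substituting $s^f_{j'k}=r^f_{j'k}+o^f_{j'k}\beta^f_{j'k}$ and $\CE^f=\sum_{j',k}o^f_{j'k}\beta^f_{j'k}$ from (\ref{aeq.03}), the $r$-terms cancel and we obtain
\[
\sum_{i,j,k}\qijk \;=\; \sum_j(\ppj+\cj)+\sum_f\CE^f.
\]
Summing the surplus expression over agents and using $\sum_i w^i_j=1$ and $\sum_i\theta^i_f=1$ shows that the total surplus equals zero. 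Combined with the already-proved non-negativity of each individual surplus, every agent's surplus must vanish.

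It remains to deduce $z=0$. If some agent $i$ has $\li>0$, then complementarity gives surplus$_i=z$, so $z=0$. The main obstacle is ruling out the degenerate case in which $\li=0$ for every agent. But in that case the argument of the previous paragraph shows $\qijk=0$ for every $(i,j,k)$, so $\sum_{i,j,k}\qijk=0$; plugging this into the identity displayed above yields $0=\sum_j(\ppj+\cj)+\sum_f\CE^f$, which is impossible since $\cj>0$ for every $j$ and all $\CE^f\ge 0$. Thus at least one $\li>0$, forcing $z=0$, and the lemma follows.
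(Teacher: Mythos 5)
Your proof is correct and follows essentially the same route as the paper's: the upper bound comes directly from constraint (\ref{aeq.9}), non-negativity splits on whether $\li$ is zero (using (\ref{aeq.3}') to force $\qijk=0$ when $\li=0$, and (\ref{aeq.9}') to equate surplus with $z\ge 0$ when $\li>0$), and the final claim sums (\ref{aeq.1}) over goods to cancel the $r$-terms and conclude total surplus is zero, then observes that since the right-hand side $\sum_j(\ppj+\cj)+\sum_f\CE^f$ is strictly positive some $\qijk$ must be positive, which forces some $\li>0$ and hence $z=0$. The only cosmetic difference is that you phrase the last step as a contradiction ("if all $\li=0$ then the identity fails") whereas the paper argues directly from a positive $\qijk$ via (\ref{aeq.3}'); these are the same observation.
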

\begin{proof} If $\li = 0$ then for each segment $(i, j, k)$ of $i$, (\ref{aeq.3}) is satisfied with strict inequality.
Hence, by (\ref{aeq.3}'), $\qijk = 0$. Hence $i$ does not spend any money and her surplus equals her earnings which is positive, and 
by (\ref{aeq.2}) it is at most $z$.
If $\li > 0$ then by (\ref{aeq.9}'), $z = \sum_j {w_j^i (c_j+\ppj)} + \sum_f {\theta^i_f \CE^f} - \sum_{j,k} {q_{jk}^i}$, which is the
surplus of $i$; a non-negative quantity.

For the second part, 
since each good $j$ is fully sold, (\ref{aeq.1}) hold with equality for all $j$. 
Adding these over all goods we get $\sum_{i,j,k} {\qijk} = \sum_j (\ppj+c_j) + \sum_f \CE^f$. The l.h.s. and r.h.s are the total money spent
and earned by all the agents respectively. 
Therefore, total surplus is zero. However, since surplus of every agent is non-negative they all have to be zero. Further, note that
the r.h.s is strictly positive, hence at least one $\qijk$ is positive. Due to (\ref{aeq.3}') corresponding $\li$ has to be positive,
and in turn due to (\ref{aeq.2}') $z$ is the surplus of this agent, hence $z=0$.
\end{proof}

The proof is by contradiction. Suppose there is a {\em secondary ray}, say $R$, in $\CP'$. 
Recall that secondary ray is an unbounded edge of set $S$ (solutions of NHAD-LCP') other than the primary ray, with $z>0$.
Let $R$ be incident on the vertex $(\yys,\zs)$, with $\zs>0$, and has the direction vector
$(\yyb, \zb)$.  Then $R = \{ (\yys, \zs) + \dl (\yyb, \zb) \ | \ \forall \dl \geq 0 \}.$ The fact that 
every one of these points is a solution of NHAD-LCP', imposes such heavy constraints that no possibility remains assuming the
sufficiency conditions of {\em no production out of nothing} and {\em strong connectivity}.

All the contradictions uses the following simple fact: $(\yys, \zs) + \dl (\yyb, \zb)$ needs to be a
solution of NHAD-LCP' for unbounded values of $\delta$. Let us start by showing that $\yyb \geq 0$ and $\zb \geq 0$.  If not,
for sufficiently large $\dl$ we will get a point that has a negative coordinate, contradicting a non-negativity
constraint on variables.

The vector $\yy$ consists of six types of variables, i.e., $\yy = (\plambda,\pp',\pq,\pgamma,\pr,\pbeta)$ (variables $\ps$
and $\pE$ are just place holders).  Let $\pppb$ denote the price variables in the direction vector $\yyb$.

\begin{lemma} \label{alem.1} It is not possible that $\pppb > 0$.   \end{lemma}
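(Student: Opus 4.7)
The plan is a short direct argument via complementarity and Lemma~\ref{acl.1}. Suppose for contradiction that every coordinate of $\pppb$ is strictly positive. I would pick any point on the ray with $\delta>0$, i.e.\ $(\yys,\zs)+\delta(\yyb,\zb)$, and look at the resulting value of each price variable $\ppj = p'^{\,*}_j+\delta\,p'^{\,\circ}_j$. Since $\ppps \ge 0$ (non-negativity at the vertex) and $\pppb>0$ componentwise, we get $\ppj>0$ for every good $j$ whenever $\delta>0$.

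Next I would feed this into the complementarity condition (\ref{aeq.1}'): $\ppj>0$ forces the inequality in (\ref{aeq.1}) to hold with equality, which says good $j$ is fully sold (money spent on $j$ plus raw consumption of $j$ equals $\ppj+\cj$ plus revenue generated from $j$). Since this holds for every $j$ simultaneously, the hypothesis of the second part of Lemma~\ref{acl.1} is satisfied, and we conclude $z=0$ at this point of the ray.

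On the other hand, along the ray $z = \zs+\delta\,\zb$, and the earlier observation that $\zb\ge 0$ (else non-negativity of $z$ would fail for large $\delta$) gives $z \ge \zs>0$ for every $\delta\ge 0$. This contradicts $z=0$, completing the proof.

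I do not expect any real obstacle here: the entire content is the chain ``positive prices $\Rightarrow$ full market clearing $\Rightarrow$ $z=0$ by Lemma~\ref{acl.1}'' versus ``$z\ge \zs>0$ along the ray.'' The only thing to check carefully is that the complementarity (\ref{aeq.1}') is truly triggered for every good (so that Lemma~\ref{acl.1}'s hypothesis is met for all $j$ at once), which is exactly what strict positivity of $\pppb$ guarantees; no appeal to the structural properties of firms, agents, or the sufficiency conditions is needed at this step.
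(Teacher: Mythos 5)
Your proposal is correct and follows essentially the same argument as the paper: strict positivity of $\pppb$ forces $\ppj>0$ for all $j$ at every ray point with $\dl>0$, complementarity (\ref{aeq.1}') then gives full market clearing, Lemma~\ref{acl.1} yields $z=0$, contradicting $z\ge \zs>0$ along the ray. No gaps.
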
 
\begin{proof} Suppose $\pppb > 0$. Then, at every
point of $R$ with $\dl > 0$, $\pp' > 0$ and therefore by (\ref{aeq.1}') every good is fully sold. Hence, by Lemma
\ref{acl.1}, $z = 0$, Now, we have already established that $\ppps \geq 0$ and $\zb \geq 0$, and by definition of a ray,
$\zs > 0$.  Therefore, at every point of $R$ with $\dl > 0$, $z > 0$ leading to a contradiction.  
\end{proof}

Next we consider the case when $\pppb=0$.

\begin{claim}\label{acl.2} If $\pppb=0$ then $\yyb=0$. \end{claim}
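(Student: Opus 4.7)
The plan is to propagate the hypothesis $\pppb=0$ through every inequality and complementarity condition of NHAD-LCP', pinning each coordinate of the direction vector $\yyb$ to zero in succession. Throughout I use the standing fact that for any ray, the direction component of every non-negative variable must itself be non-negative (otherwise that variable becomes negative as $\dl\to\infty$). For brevity I decorate every NHAD-LCP' variable $v$ with a $\circ$-subscript to denote its component in $\yyb$.

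First I would kill the ``resource'' directions. The direction of (\ref{aeq.4}) reads $q^i_{jk,\circ}\le l^i_{jk}\,p'_{j,\circ}=0$, which combined with $q^i_{jk,\circ}\ge 0$ forces every $q^i_{jk,\circ}=0$; an identical argument applied to (\ref{aeq.02}) gives $r^f_{jk,\circ}=0$. For the profit multipliers, take the direction of complementarity (\ref{aeq.02}') and substitute $r^f_{jk,\circ}=0$, $p'_{j,\circ}=0$ to obtain $\beta^f_{jk,\circ}(r^f_{jk}-o^f_{jk}p_j)=0$. Since $p_j\ge c_j>0$ along $R$ and $o^f_{jk}>0$, either the base slack is strictly positive and $\beta^f_{jk,\circ}=0$ immediately, or $r^f_{jk}=o^f_{jk}p_j>0$ and the direction of (\ref{aeq.01}') reads $r^f_{jk}\cdot(-\beta^f_{jk,\circ})=0$, again forcing $\beta^f_{jk,\circ}=0$. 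Equation (\ref{aeq.03}) then yields $s^f_{jk,\circ}=0$ and $\CE^f_\circ=0$.

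Now every ``quantity'' variable is constant along $R$, so each agent's surplus is also constant. The direction of (\ref{aeq.4}') specializes to $\gamma^i_{jk,\circ}(q^i_{jk}-l^i_{jk}p_j)=0$, so $\gamma^i_{jk,\circ}=0$ whenever the base segment is not fully allocated; together with the direction of (\ref{aeq.3}), which gives $u^i_{jk}\lambda_{i,\circ}\le\gamma^i_{jk,\circ}$, this yields $\lambda_{i,\circ}=0$ for every agent $i$ that owns at least one non-fully-allocated positive-utility segment. The direction of (\ref{aeq.3}') then transports $\gamma^i_{jk,\circ}=u^i_{jk}\lambda_{i,\circ}=0$ onto the remaining segments of such agents.

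The main obstacle I expect is the residual configuration in which some agent $i$ has every positive-utility segment fully allocated at base, since the local direction/complementarity analysis then permits the one-parameter family $(\lambda_{i,\circ},\gamma^i_{jk,\circ})=\tau(1,u^i_{jk})$. To close it I would combine non-satiation (part of Definition \ref{adef.sc}) with the quantitative choice of $L$ from Section \ref{asec.nhlcp}: non-satiation gives $u^i_{j|U^i_j|}>0$ on some good $j$, so the ``all-forced'' hypothesis forces $i$ to purchase $\sum_k l^i_{jk}=(L+1)+\sum_{k<|U^i_j|}l^i_{jk}$ units of $j$, whereas dividing (\ref{aeq.1}) by $p_j>0$ bounds the total available amount of $j$ by $1+y_j\le 1+L$ via the choice of $L$. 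Since the earlier segments of $(i,j)$ have strictly positive length, this exceedance is strict, giving the contradiction. Hence $\lambda_{i,\circ}=0$ for every $i$, $\pgamma_\circ=0$ follows, and $\yyb=0$.
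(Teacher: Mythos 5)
Your proof is correct and follows essentially the same propagation strategy as the paper: pin $\qqb,\rrb,\bbb$ (and hence $\ps_\circ,\pE_\circ$) to zero, then use non-satiation to force $\llb=0$ and finally $\ggb=0$. Your first step is actually a bit cleaner than the paper's: since Section~\ref{asec.nhlcp} replaces all ``infinite'' segment lengths by the finite caps $o^f_{j|P^f_j|}$ and $l^i_{j|U^i_j|}$, constraints (\ref{aeq.02}) and (\ref{aeq.4}) directly bound every $r^f_{jk}$ and $q^i_{jk}$ along $R$, making the extra detour through (\ref{aeq.1}) and Claim~\ref{acl.noprodcycle} that the paper uses for ``the last ones'' unnecessary. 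You are also more explicit than the paper about the residual ``all-forced'' configuration, which the paper dispatches in one unexplained sentence (``Hence there must be a segment \dots that is undesirable or partially allocated'').

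Two small quibbles. First, when you say ``the direction of complementarity (\ref{aeq.02}')'' and similar phrases, what is actually being used is that a degree-two polynomial identity in $\dl$ forces the coefficient of $\dl$ (and of $\dl^2$) to vanish; after substituting $r^f_{jk,\circ}=p'_{j,\circ}=0$ the product degenerates to $(\beta^f_{jk,*}+\dl\beta^f_{jk,\circ})\cdot C$ with $C$ constant, which indeed gives your dichotomy, but it is worth saying this once so the reader doesn't think ``direction of a product'' is a linear operation. Second, your final contradiction asserts strictness because ``the earlier segments of $(i,j)$ have strictly positive length''; this fails when $|U^i_j|=1$, where demand exactly equals $L+1=1+L$. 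The paper glosses over the same corner case; the honest fix is either to take $l^i_{j|U^i_j|}$ strictly larger than $1+L$ or to observe that $L$ is a strict overestimate of attainable production, so $1+y_j<1+L$ in fact. With that caveat noted, the argument is sound.
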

\begin{proof}
If $\pppb = 0$ then the price of each good remains constant on ray $R$. In turn optimal production plans of
firms do not change, and by (\ref{aeq.02}) money spent on raw material cannot increase on all segments except for the last
ones.  On the last ones also they cannot increase using (\ref{aeq.1}) and Claim \ref{acl.noprodcycle}, giving $\rrb=0$ and
$\bbb=0$. Therefore, the total quantity of goods in the market remains the same. Since by (\ref{aeq.1}) no good can be
oversold, $\qqb = 0$. Furthermore, the money earned by agent $i$ through her endowment and profit from firms remains
unchanged throughout $R$. Therefore, the forced, flexible and undesirable partitions of $i$ remain unchanged and hence,
corresponding to each of her undesirable and partially allocated segments, $\gijk = 0$ throughout $R$.

A consequence of strong connectivity is that each agent $i$ must be non-satiated for some good, say $j$. Hence there must be
a segment $(i, j, k)$, with $\uijk > 0$, that is undesirable or partially allocated. Now, in order to satisfy the constraint
(\ref{aeq.3}), $\li$ cannot increase, forcing $\llb = 0$. As a result, for a forced segment $(i, j, k)$, $\gijk$ cannot
increase -- otherwise (\ref{aeq.3}') will force $\qijk = 0$. Putting this together with the assertion about undesirable and
partially allocated segments made above, we get that $\ggb = 0$. Hence, $\yyb = 0$.
\end{proof}

\begin{lemma}\label{alem.pr} If $\yyb = 0$ then $\yys = 0$, i.e., $R$ is the primary ray.  \end{lemma}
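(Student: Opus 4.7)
My plan is to peel off the six types of variables comprising $\yys = (\lls, \ppps, \qqs, \ggs, \rrs, \bbs)$ one at a time, exploiting the fact that $\yyb = 0$ freezes every coordinate of $\py$ along $R$ while $z$ must be allowed to grow without bound. To begin, $(\yys, \zs) + \delta(\zero, \zb) \in S$ for all $\delta \geq 0$ and a ray direction is nonzero, so $\zb > 0$. Since $\plambda, \ppp, \pq, \pgamma, \pr, \pbeta$ and hence $\pE$ are constant along $R$ while $z$ is unbounded, the complementarity condition (\ref{aeq.9}') reads $\li(C_i - z) = 0$ for a constant $C_i$ depending only on agent $i$; if any $\li > 0$ this fails for large $\delta$. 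Hence $\li = 0$ for every $i$ and $\lls = 0$.

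With $\li = 0$ in hand, the bracket in (\ref{aeq.3}') reduces to $-(\ppj + \cj) - \gijk \leq -\cj < 0$, using that $\cj > 0$ for every good (a consequence of $\cc \in \mathrm{int}(\CC)$, by Lemma \ref{ale.lb}). Thus $\qijk = 0$ for every $(i,j,k)$, so $\qqs = 0$. Substituting $\qijk = 0$ into (\ref{aeq.4}') leaves the strictly negative bracket $-l^i_{jk}(\ppj + \cj)$, forcing $\gijk = 0$ and $\ggs = 0$.

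The main obstacle is ruling out $\ppps \neq 0$, which I would settle by contradiction. Let $J^+ = \{j : \ppj > 0\}$ and suppose $J^+ \neq \emptyset$. For each $j \in J^+$, (\ref{aeq.1}') promotes (\ref{aeq.1}) to equality; plugging in $\qqs = 0$ and the identity $s^f_{j'k} = r^f_{j'k} + o^f_{j'k}\beta^f_{j'k}$ from (\ref{aeq.03}), then summing over $j \in J^+$ and separating terms according to whether the producing good $j_f$ lies in $J^+$, should yield
\[
\sum_{f:\, j_f \notin J^+}\ \sum_{j' \in J^+,\,k} r^f_{j'k} \;-\; \sum_{f:\, j_f \in J^+}\ \sum_{j' \notin J^+,\,k} r^f_{j'k} \;=\; \sum_{j\in J^+}(\ppj+\cj) \;+\; \sum_{f:\, j_f\in J^+}\CE^f \;>\; 0.
\]
Hence some firm $f$ with $j_f \notin J^+$ has $r^f_{j'k} > 0$ for some input $j' \in J^+$. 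For that firm $p_{j_f} = c_{j_f}$ while $p_{j'} > c_{j'}$, and (\ref{aeq.01}') forces $\beta^f_{j'k} = \alpha^f_{j'k}c_{j_f} - p_{j'} \geq 0$, giving $\alpha^f_{j'k}c_{j_f} > c_{j'}$, which contradicts $\cc \in \mathrm{int}(\CC)$.

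Once $\ppps = 0$, every $p_j = \cj$, so $\alpha^f_{jk}p_{j_f} - p_j = \alpha^f_{jk}c_{j_f} - \cj < 0$ for all $(f,j,k)$; (\ref{aeq.01}') then forces $r^f_{jk} = 0$ and (\ref{aeq.02}') subsequently forces $\beta^f_{jk} = 0$, so $\rrs = 0$ and $\bbs = 0$. All coordinates of $\yys$ vanish, making $R$ the primary ray. The only genuinely new step is the money-flow accounting on $J^+$; the remaining stages are a chain of complementarity read-offs driven by the strict positivity of the floor vector $\cc$.
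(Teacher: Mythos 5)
Your proof is correct, and it reaches the same sequence of conclusions as the paper's ($\zb>0$, then $\lls=0$, then $\qqs=0$, and ultimately $\ppps=\ggs=\rrs=\bbs=0$), but the key step showing $\ppps=0$ is argued differently. The paper proceeds combinatorially: it invokes Claim~\ref{acl.noprodcycle} to assert that production has no cycles, observes that a good not used as raw material must have $(\ppps)_j=0$ via (\ref{aeq.1}'), and then follows a production chain from any $r^f_{jk}>0$ down to a terminal good whose $\ppj=0$, cascading back through (\ref{aeq.01}), (\ref{aeq.01}') and (\ref{aeq.02}') to force all the $r$'s along the chain to vanish. Your argument instead aggregates: you sum the market-clearing equalities (\ref{aeq.1}') over $J^+=\{j:\ppj>0\}$, substitute $s^f_{j'k}=r^f_{j'k}+o^f_{j'k}\beta^f_{j'k}$ from (\ref{aeq.03}), cancel the diagonal block of $r$-terms, and obtain a strictly positive net flow from $\overline{J^+}$-producing firms into $J^+$-inputs, which then forces some $r^f_{j'k}>0$ with $j'\in J^+$, $j_f\notin J^+$, contradicting $\cc\in\mathrm{int}(\CC)$ in a single step. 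Both routes ultimately trade on the same resource (the strict slack in (\ref{aeq.lblp}) guaranteed by \emph{no production out of nothing}), but yours avoids the explicit graph-theoretic acyclicity and chain-cascading; it is a global conservation argument rather than a local propagation one. The early derivation of $\ggs=0$ (directly from $\qqs=0$ via (\ref{aeq.4}'), before knowing $\ppps=0$) is a harmless reordering since $\ppj+\cj\ge\cj>0$ regardless. One small point you should make explicit, which the paper also leaves implicit: the step forcing $\gijk=0$ needs $l^i_{jk}>0$, i.e., every utility segment has positive length.
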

\begin{proof} Since the direction vector can not be all zeros we have $\zb > 0$. 
Throughout $R$, for each agent $i$ the money spent and money earned remain unchanged; however,
$z$ increases. Therefore, $\sum_j {w_j^i \ppj} +\sum_f \theta^i_f \CE^f - \sum_{j,k} {q_{jk}^i} -z < -\sum_j c_j {w^i_j}$ at each point of
$R$ except possibly at the vertex of polyhedron $\CP'$. Hence $\li$ has to be zero on the rest of the ray, forcing $\lls = 0$.
Therefore, for each segment, (\ref{aeq.3}) is satisfied as a strict inequality, which forces $\qqs = 0$ by (\ref{aeq.3}'). 

If a good $j$ is not used as a raw material then by (\ref{aeq.1}') its $(\ppps)_j$ is zero.  There are no production cycles
by Claim \ref{acl.noprodcycle}.  If a good is used as a raw material, i.e., $r^f_{jk}>0$ for some $(f,j,k)$, consider a
production chain containing it.  In $\ppps$ the $p'_*$ of the last good in this chain is zero by (\ref{aeq.1}'), because it is
not used as a raw material.  This contradicts $r^f_{jk}>0$ by cascading using (\ref{aeq.01}), (\ref{aeq.01}'), and
(\ref{aeq.02}').  These give $\ppps = 0$ and in turn (\ref{aeq.4}') forces $\ggs = 0$. It together with (\ref{aeq.01}') also
forces $\rrs=0$ and in turn (\ref{aeq.02}') forces $\bbs=0$. Altogether we get $\yys = 0$.  
\end{proof}

Combining Claim \ref{acl.2} with Lemma \ref{alem.pr} we get,

\begin{lemma}\label{alem.2}
It can not be the case that $\pppb=0$ on $R$.
\end{lemma}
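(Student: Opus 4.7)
The plan is to derive this lemma as an immediate chained consequence of the two preceding results, so it is essentially a two-line contradiction argument rather than a new calculation.

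First I would assume, for contradiction, that on the direction vector of the secondary ray $R$ we have $\pppb = 0$. Claim \ref{acl.2} then applies verbatim to give $\yyb = 0$. Since a ray must have a nonzero direction vector, this forces $\zb > 0$. Next, Lemma \ref{alem.pr} applies to the vertex $(\yys, \zs)$ of $R$ (using $\yyb = 0$) and yields $\yys = 0$ as well. Thus every point of $R$ lies in the hyperplane $\{\py = 0\}$, and in fact $R = \{(0, \zs + \delta \zb) : \delta \geq 0\}$.

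To finish, I would invoke the definition of the primary ray from Section \ref{asec.alg}: it is the unique unbounded edge of $\CP'$ on which $\py = 0$, with vertex $\py'_0 = (\pz, \max_i \sum_j w^i_j c_j)$. Since $R$ is an unbounded edge of the one-skeleton of $\CP'$ contained in $\{\py = 0\}$, and since (under the non-degeneracy assumption on $\CP'$ stated at the beginning of Section \ref{asec.alg}) there is only one such edge, $R$ must coincide with the primary ray. This contradicts the hypothesis that $R$ is a secondary ray.

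I do not expect any real obstacle: all the heavy lifting was already done in Claim \ref{acl.2} (where the structure of optimal production, market clearing, and strong connectivity were used to pin down every component of $\yyb$) and in Lemma \ref{alem.pr} (where the cascading argument through production chains forced $\yys = 0$). The only subtle point to flag explicitly is that uniqueness of the primary ray as the edge on which $\py = 0$ relies on non-degeneracy of $\CP'$; otherwise one might worry that there could be several distinct vertices of $\CP'$ with $\py = 0$ and parallel unbounded edges in the $z$ direction. Under the stipulated non-degeneracy, this cannot occur, so the identification of $R$ with the primary ray is forced and the contradiction is complete.
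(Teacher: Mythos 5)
Your proof is correct and is exactly the paper's argument: the paper proves Lemma \ref{alem.2} by simply observing "Combining Claim \ref{acl.2} with Lemma \ref{alem.pr} we get" the statement, which is precisely your two-step chain ($\pppb = 0 \Rightarrow \yyb = 0 \Rightarrow \yys = 0 \Rightarrow R$ is the primary ray, contradiction). The extra care you take in spelling out why $R$ must coincide with the primary ray under non-degeneracy is a reasonable elaboration of what Lemma \ref{alem.pr} already concludes, not a departure from the paper's route.
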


\begin{lemma} \label{alem.3} 
Assuming {\em no production out of nothing} and {\em enough demand}, if $\pppb \not > 0$ and $\pppb \neq 0$
then $\CM$ violates {\em strong connectivity}.  
\end{lemma}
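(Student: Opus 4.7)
The plan is to prove the contrapositive: assuming strong connectivity (together with \emph{no production out of nothing} and \emph{enough demand}), we show that any $\pppb \geq 0$ with $\pppb \neq 0$ must in fact satisfy $\pppb > 0$. Partition the goods into $G_1 = \{j : (\pppb)_j > 0\}$ (nonempty) and $G_0 = \{j : (\pppb)_j = 0\}$, and suppose for contradiction that $G_0 \neq \emptyset$. Split the firms accordingly into $F_1, F_0$ by whether $j_f \in G_1$ or $G_0$.

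First I would extract the structural consequences of the ray. Since the prices of $G_1$-goods grow without bound along $R$ while those in $G_0$ stay at a positive constant, (\ref{aeq.1}') forces every $j \in G_1$ to be fully sold on $R$. Whenever $r^f_{jk} > 0$ for a raw good $j \in G_1$, (\ref{aeq.01}') gives $\alpha^f_{jk} p_{j_f} = p_j + \beta^f_{jk} \to \infty$, so $j_f \in G_1$. Therefore firms in $F_0$ use only $G_0$-goods as raw material and their profits $\CE^f$ stay bounded along $R$. Next, call an agent $i$ \emph{bounded} if $w^i_j = 0$ for every $j \in G_1$ and $\theta^i_f = 0$ for every firm with unbounded profit; denote $A_0$ the bounded agents and $A_1 = \CA \setminus A_0$. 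By (\ref{aeq.9}) bounded agents have uniformly bounded total spending, while summing the market-clearing equalities (\ref{aeq.1}) over $j \in G_1$ shows that the total agent spending on $G_1$-goods grows like $\sum_{j \in G_1}(\pppb)_j\,\delta + O(1)$; hence $A_1 \neq \emptyset$.

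The final and hardest step is to exhibit a proper, non-trivial subset $T \subseteq \CA \cup \CF$ whose cut in $G(\CM)$ is one-sided, violating strong connectivity. The natural candidate is $T = A_0 \cup F_0$: every good held or produced by a node in $T$ lies in $G_0$, so any outgoing edge $a \to b$ from $T$ forces $b$ to be non-satiated for some $G_0$-good. To close the cut one must argue either that (i) no node outside $T$ is non-satiated for any $G_0$-good actually held or produced within $T$, or (ii) no node in $T$ is non-satiated for any good held or produced outside $T$ (all of $G_1$ plus $G_0$-goods supplied by $A_1 \cup F_1$). The main obstacle is that non-satiation is intrinsic to the utility and production data, not to the ray, so it cannot be read off from $R$ directly. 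The correct route, I expect, is to combine the bang-per-buck behaviour along $R$ (the earnings of each $i \in A_1$ blow up, driving $\li \to \infty$ and fully allocating all her $G_0$-segments, saturating her demand there) with the acyclicity of production in Claim \ref{acl.noprodcycle} and the finite safe truncations of $l^i_{jk}, o^f_{jk}$ from Section \ref{asec.nhlcp}, to show that strong connectivity would propagate the growth from $G_1$ through every agent and firm to all of $G_0$, contradicting $G_0 \neq \emptyset$. Verifying this propagation across mixed production-consumption chains, while simultaneously keeping each of the six types of coordinates in $\yyb$ (namely $\llb$, $\pppb$, $\qqb$, $\ggb$, $\rrb$, $\bbb$) consistent with the complementarity and non-negativity constraints, is the trickiest piece of bookkeeping in the proof.
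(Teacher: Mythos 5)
Your first two paragraphs correctly reconstruct the structural facts the paper also establishes along the ray: the goods split into $G_1$ (direction positive, prices blowing up, fully sold) and $G_0$ (direction zero, prices fixed), firms producing $G_0$-goods use only $G_0$-raw-materials with constant production, and an agent whose earnings grow unboundedly must exhaust her $G_0$-segments because her surplus is capped by Lemma~\ref{acl.1}. Up to relabelling ($G_0, G_1, F_0, F_1$ for the paper's $S, \overline{S}, F, \overline{F}$), this is the same groundwork.

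The gap is in the third paragraph, and it is genuine. The paper's key move, which you do not make, is to partition agents by \emph{non-satiation}, not by earnings: its set $A_1$ is the agents non-satiated for some $G_0$-good. This is the right partition precisely because edges of $G(\CM)$ are defined by non-satiation, so statements about $A_1$ translate directly into statements about the cut. The paper then runs a three-way case analysis. If $A_1 = \CA$, every agent is hungry for some $G_0$-good, and any agent holding a $G_1$-good eventually demands more than the bounded supply, violating constraint~(\ref{aeq.1}). If $A_1 = \emptyset$, no agent is hungry for $G_0$-goods; strong connectivity then forces every agent to hold a $G_1$-good, so all agents saturate their $G_0$-demand, and since some $G_0$-good is a sink of the production DAG and has $\desire(j)>1$ this again breaks (\ref{aeq.1}) --- this is the \emph{only} place the \emph{enough demand} hypothesis is used, something your proposal never identifies. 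In the remaining case $\emptyset \subsetneq A_1 \subsetneq \CA$, the paper shows $A_1 \cup F_0$ has no outgoing edges in $G(\CM)$: $A_1$-agents cannot own $G_1$-goods (else their $G_0$-demand would blow up), so every good they hold or $F_0$ produces lies in $G_0$, and everything outside $A_1 \cup F_0$ is satiated for all $G_0$-goods by the case split --- a one-sided cut that separates the agent set.

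Your candidate cut $T = A_0 \cup F_0$, with $A_0$ defined by bounded earnings, does not close for the reasons you yourself flag: non-satiation is a static property of the data and cannot be read off the ray, and the definitions of $A_0, A_1$ do not align with the edge relation in $G(\CM)$. You also have not ruled out the degenerate possibilities $A_0 = \emptyset$ (your cut then contains no agent node, so it does not violate strong connectivity of the agents) or $A_1^{\text{paper}} = \emptyset$, which the paper resolves only via the enough demand condition. The ``propagation of growth from $G_1$ to $G_0$'' you sketch is not an argument; in particular your assertion that $\li \to \infty$ for unbounded-earnings agents is unsupported and is not used by the paper. In short, the preliminary ray analysis is correct and matches the paper, but the heart of the lemma --- choosing the right agent partition, the three-way case analysis, the use of the safe truncations of $l^i_{jk}$ and $o^f_{jk}$ to force over-demand, and the targeted use of \emph{enough demand} --- is missing.
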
 
\begin{proof}
Assume that $\pppb \not > 0$ and $\pppb \neq 0$. Let $S \subset G$ be the set of goods for which the vector $\pppb$ is zero
and $\overline{S}$ be the remaining goods; by assumption, both these sets are non-empty.  Let $A_1 \subseteq A$ be the set
of agents who are non-satiated by at least one good in $S$.  Clearly, the prices of goods in $S$ remain constant throughout
$R$ and those of goods in $\overline{S}$ go to infinity.  Hence eventually, the bang-per-buck of all segments corresponding
to goods from $S$ will dominate that of goods from $\overline{S}$. 

Let $F$ be the set of firms producing goods of $S$ and $\overline{F}$ be the remaining firms. Similarly, for any firm in
$\overline{F}$ all the segments corresponding to $S$ will be profitable and will dominate that of goods from $\overline{S}$.
Further, firms in $F$ cannot produce anything using goods from $\overline{S}$ and their production does not change on $R$.
Therefore, their revenue remains constant on $R$.

By (\ref{aeq.1}), each good in $\overline{S}$ is fully sold. Now, since only goods in $S$ can remain unsold and their total
amount in the market is constant, the total surplus of all agents is bounded. Since $z \geq 0$, by Claim \ref{acl.1} each
agent has a non-negative surplus and hence the surplus of each agent is bounded. 

Now, consider an agent $i$ who has a good from $\overline{S}$ in her initial endowment. Since her earnings go to infinity
and her surplus is bounded, she must eventually buy up all segments corresponding to goods in $S$ for which she has positive
utility. Similarly, consider a firm in $\overline{F}$. Since the price of the good it produces go to infinity, all its
non-zero segments corresponding to goods in $S$ will be profitable, and hence will be produced fully. We will use these
observations to derive contradictions based on what $A_1$ consist of. 

Suppose $A_1=A$, then by the observation made above, any agent having a non-zero amount of a good from $\overline{S}$
must eventually demand more than the available amount of some good in $S$, and available amount of every good is bounded by
Claim \ref{acl.noprodcycle}.  contradicting (\ref{aeq.1}).

If $A_1=\emptyset$, then consider an arbitrary agent $i$. For strong connectivity to hold, there must be some agent
$i_1$ or a firm $f \in \overline{F}$ such that $i$ has a good for which $i_1$ or $f$ is non-satiated.  Since $A_1 =
\emptyset$ and all the firms in $\overline{F}$ are satiated for all the goods in $S$, this good is from $\overline{S}$.
Hence each agent has a good from $\overline{S}$ in her initial endowment. Let $j \in S$. Now, by the observation made above,
all agents will eventually buy all segments of $j$ for which they have positive utility. Further, by Claim
\ref{acl.noprodcycle} there is at least one good $j\in S$ not getting produced, contradicting (\ref{aeq.1}), since $\desire(j)
> 1$ (due to {\em enough demand} condition).  

Finally, suppose $\emptyset \subset A_1 \subset A$. An agents of $A_1$ do not own any good from $\overline{S}$, otherwise by
observation made above, demand of some good in $S$ eventually goes to infinity, contradicting (\ref{aeq.1}). Further, for the
same reason firms of $\overline{F}$ are satiated for goods in $S$. This implies that in the graph $G(\CM)$ constructed in Definition
\ref{adef.sc} of Section
\ref{asec.strong} on firms and agents, there is no edge from $A_1$ to $A\setminus A_1$, $A_1$ to $\overline{F}$, and $F$ to
$\overline{F}$, implying that agent nodes in graph $G(\CM)$ are not connected $-$ {\em strong connectivity} not satisfied.
\end{proof}

Putting everything together, Lemmas \ref{alem.1}, \ref{alem.2} and \ref{alem.3} give:

\begin{theorem} \label{athm.sec} 
The polyhedron of NHAD-LCP', corresponding to a market $\CM$ under SPLC production and
SPLC utilities, satisfying strong connectivity, no production out of nothing and enough demand, has no secondary rays.  
\end{theorem}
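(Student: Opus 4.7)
My plan is to derive the theorem directly by combining the three preparatory lemmas through an exhaustive case analysis on the price component of the ray's direction vector.

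I would begin by assuming for contradiction that a secondary ray $R$ exists in $\CP'$, with vertex $(\yys,\zs)$ and direction vector $(\yyb,\zb)$, so that every point $(\yys,\zs)+\dl(\yyb,\zb)$ for $\dl\ge 0$ lies in $S$. A preliminary observation is that $\yyb\ge 0$ and $\zb\ge 0$: otherwise, for large enough $\dl$, some coordinate of the point would become negative, violating the non-negativity constraints of NHAD-LCP'. In particular, the price-direction sub-vector $\pppb$ is non-negative.

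The proof then splits into three exhaustive cases based on $\pppb$. First, $\pppb>0$ strictly; this is ruled out by Lemma \ref{alem.1}, because strictly positive price directions force every good to be fully sold on $R$, which by Lemma \ref{acl.1} would force $z=0$ throughout $R$, contradicting $\zs>0$ and $\zb\ge 0$. Second, $\pppb=0$; this is ruled out by Lemma \ref{alem.2} (i.e., Claim \ref{acl.2} combined with Lemma \ref{alem.pr}), since a constant price on the ray cascades through optimal production, market clearing, and the bang-per-buck structure to force $\yyb=0$ and then $\yys=0$, meaning $R$ is the primary ray and not a secondary ray. Third, $\pppb$ is non-negative, nonzero, but not strictly positive; this is ruled out by Lemma \ref{alem.3}, which shows that such a split $\pppb$ would violate \emph{strong connectivity}, contrary to our standing hypothesis on $\CM$.

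Since these three cases cover all non-negative vectors $\pppb$, each case leads to a contradiction under the assumptions of no production out of nothing, enough demand, and strong connectivity. Hence no secondary ray can exist in $\CP'$, proving the theorem. The main obstacle is not in this final assembly step, which is purely logical, but in the preparatory lemmas; in particular, Lemma \ref{alem.3} carries the bulk of the structural work, requiring a delicate sub-case analysis on the set $A_1$ of agents non-satiated for at least one good in $S=\{j:(\pppb)_j=0\}$ and careful use of the bounded-surplus argument from Lemma \ref{acl.1} together with Claim \ref{acl.noprodcycle}. Thus the write-up at this point is really a short synthesis that invokes the three lemmas and verifies the case split is exhaustive.
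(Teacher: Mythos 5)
Your proposal is correct and mirrors the paper's own argument exactly: the paper also establishes $\yyb\ge 0$ and $\zb\ge 0$ by the unboundedness of the ray, then disposes of $\pppb>0$ via Lemma \ref{alem.1}, $\pppb=0$ via Claim \ref{acl.2} and Lemma \ref{alem.pr} (packaged as Lemma \ref{alem.2}), and the mixed case via Lemma \ref{alem.3}, then concludes by noting these cases are exhaustive. Your write-up makes the case-exhaustiveness and the role of each lemma slightly more explicit, but the logical content is identical.
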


Theorem \ref{athm.sec} directly yields:

\begin{theorem} \label{athm.alg} If a market $\CM$ with SPLC production and SPLC utilities satisfies strong connectivity, no
positive cycle and enough demand, then $\CM$ admits an equilibrium and the algorithm in Table \ref{atab.alg} terminates with
one.  \end{theorem}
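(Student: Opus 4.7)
The plan is to assemble the theorem as a corollary of the structural results already in hand: Theorem \ref{athm.lcp}, which tells us that any solution of NHAD-LCP' with $z=0$ is a market equilibrium, and Theorem \ref{athm.sec}, which rules out secondary rays in $\CP'$. The algorithmic content to justify is that Lemke's complementary pivot scheme, started from the primary ray vertex $\py'_0$, cannot cycle, cannot get stuck, and therefore must terminate at a qualifying endpoint of the complementary path through $\py'_0$.

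First I would recall the standard structure of Lemke's scheme on an augmented LCP (as spelled out in Appendix \ref{sec.LCP}): under the non-degeneracy convention adopted after Lemma \ref{alem.clear}, the set $S$ of solutions of NHAD-LCP' is a disjoint union of paths and cycles in the $1$-skeleton of $\CP'$, and every relative boundary vertex of such a path is either (i) a solution of NHAD-LCP' with $z=0$ or (ii) the vertex of an unbounded ray with $z>0$. The primary ray of NHAD-LCP' is the unbounded edge with $\py=\zero$ and $z$ free, whose vertex is exactly $\py'_0$ computed above Table \ref{atab.alg}. Thus, starting at $\py'_0$, the algorithm in Table \ref{atab.alg} traces the unique complementary edge-path of $S$ that contains $\py'_0$.

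Next I would verify that the pivot rule in Table \ref{atab.alg} is well defined at every step: whenever $z>0$, the current vertex has exactly one double label, and the minimum-ratio rule produces the next adjacent vertex on the complementary path (or declares an unbounded ray if no blocking constraint exists). Non-degeneracy of $\CP'$ in the sense declared after Lemma \ref{alem.clear} guarantees a unique pivot, which in turn prevents cycling: no vertex can be revisited on a simple path of $S$. Hence the algorithm either reaches a vertex with $z=0$, in which case the $\py$-part is a solution of NHAD-LCP and, by Theorem \ref{athm.lcp}, yields an equilibrium of $\CM$, or else it walks off along a secondary ray.

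Finally, I would invoke Theorem \ref{athm.sec}: under the hypotheses of strong connectivity, no production out of nothing, and enough demand, $\CP'$ has no secondary rays, so the second termination mode is impossible. The path from $\py'_0$ is therefore finite and ends at a vertex with $z=0$, proving simultaneously that the algorithm terminates with an equilibrium and that $\CM$ admits one. The main subtlety I anticipate is careful handling of the inherent degeneracy identified after Lemma \ref{alem.clear} (the redundancy arising from summing (\ref{aeq.1}) and (\ref{aeq.9})): this must be absorbed into the non-degeneracy convention so that uniqueness of the pivot and of the path through $\py'_0$ is genuinely preserved; everything else is a direct assembly of the previously established pieces.
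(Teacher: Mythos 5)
Your proposal is correct and follows exactly the route the paper takes: the paper states Theorem \ref{athm.alg} as an immediate consequence of Theorem \ref{athm.sec} (``Theorem \ref{athm.sec} directly yields:''), relying on the Lemke-scheme structure from Appendix \ref{sec.LCP} and the non-degeneracy convention, precisely as you spell out. Your version simply makes explicit the bookkeeping the paper leaves implicit.
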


Theorem \ref{athm.alg} settles the appropriate case of the open problem, posed by Eaves (1975) \cite{eaves1}, as described in
the Introduction. Our algorithm also gives a constructive proof of the existence of equilibrium for such markets.  \medskip

\begin{remark}
If we run our algorithm on an arbitrary instance, without sufficiency conditions, then we may end up on a secondary ray,
however that does not imply anything whether equilibrium exists or not. This is expected since checking existence even in 
its restriction to exchange markets with SPLC utilities is NP-complete \cite{VY}, 
and any such implication leads to showing NP=co-NP \cite{Megiddo.1988}. 
\end{remark}

\begin{remark}
Note that the enough demand assumption is used only for the case when $A_1=\emptyset$ of Lemma \ref{alem.3}.
Given a market not satisfying this condition, our algorithm may end up on a secondary ray with the only possibility being
this case. In that case, remove goods of $S$ and firms of $F$ from the market. If firm $f \in \overline{F}$
can produce total $d$ amount from a good $j \in S$, then remove $P^f_{j}$ and change endowments of agents as $w^i_{j_f}
=w^i_{j_f} +\theta^i_f d$. Now, this reduced market is still strongly connected, and has an equilibrium. Again apply the
algorithm to find one. Now to get the equilibrium of the original market set prices of goods in $S$ to zero, and keep
production plans of $F$ as they are on the secondary ray. Distribute goods in $S$ freely to make all its segments forced for
all the agents and to ensure production at full capacity by firms of $\overline{F}$ from these goods. It is easy to see that
such a construction indeed gives a market equilibrium of the original market.  This observation implies that even if the
given market does not satisfy {\em enough demand} condition, then one can find its equilibrium by running the algorithm at most
$n$ times.
\end{remark}

\begin{theorem}\label{athm.ppad} 
Assuming strong connectivity, no positive cycle and enough demand, the problem of computing an equilibrium of a market with
SPLC utilities and SPLC production is in PPAD.  
\end{theorem}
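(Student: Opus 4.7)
The plan is to instantiate the general path-following $\Rightarrow$ PPAD framework of Todd (1976) on the augmented LCP already constructed in Section \ref{asec.alg}. By Theorem \ref{athm.lcp}, every equilibrium of $\CM$ corresponds (up to scaling) to a solution of NHAD-LCP with $z=0$; and by Theorem \ref{athm.sec}, the associated polyhedron $\CP'$ of NHAD-LCP' has no secondary rays. So it suffices to show that the complementary pivot walk of Table \ref{atab.alg} can be cast as an instance of END-OF-LINE whose known source corresponds to the primary-ray vertex $\py'_0$ and whose other endpoints correspond to solutions of NHAD-LCP'.

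First I would encode the vertices of the END-OF-LINE graph as the \emph{almost complementary bases} of NHAD-LCP': bases in which all complementary pairs $(y'_i, (M'\py' - \pq)_i)$ are tight except for at most one double label (plus the pair involving $z$). Each such basis is described by a subset of the constraints of polynomial size, so vertices are encoded in $\{0,1\}^{\mathrm{poly}(n)}$. Edges connect two almost complementary bases that differ by one pivot step, so each vertex has degree at most two, and adjacency is checkable in polynomial time by solving a linear system of polynomial size in the rational bit-complexity of the instance. The primary-ray vertex $\py'_0$ is clearly identifiable as a degree-one vertex. To orient the edges consistently I would apply Todd's local orientation rule: at an almost complementary basis, the sign of a specific subdeterminant of $M'$ records which of the two adjacent bases is the ``predecessor'' and which is the ``successor''; Todd's theorem shows this gives a coherent orientation in which each vertex has in-degree and out-degree at most one. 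Under this orientation, $\py'_0$ is a source, and by Theorem \ref{athm.sec} the unique maximal path starting from $\py'_0$ cannot escape to a secondary ray and must terminate at a vertex with $z=0$, i.e.\ at a solution of NHAD-LCP and hence (by Theorem \ref{athm.lcp}) at an equilibrium of $\CM$. Thus finding any sink of the END-OF-LINE graph other than $\py'_0$ yields an equilibrium, giving the desired reduction.

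The main obstacle is the inherent degeneracy of $\CP'$ flagged in Section \ref{asec.alg}: summing the constraints (\ref{aeq.1}) over goods and (\ref{aeq.9}) over agents gives identical equations, so bases are never unique in the strict sense required by Todd's framework. The fix is the standard lexicographic perturbation of the right-hand side vector $\pq$ by an infinitesimal $(\varepsilon, \varepsilon^2, \ldots)$: this breaks all ties while affecting neither the path structure nor the eventual $z=0$ solution (the perturbation only formally selects one representative of each degeneracy class and can be implemented symbolically in polynomial time). Once this is in place the remaining requirements of Todd's theorem are routine to check, and PPAD membership follows. I would also remark that the enough-demand and no-production-out-of-nothing conditions are used only indirectly, through Theorem \ref{athm.sec}, to guarantee the absence of secondary rays and hence totality of the reduction.
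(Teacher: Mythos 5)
Your proposal is correct and takes essentially the same approach as the paper, which itself gives only a two-sentence proof citing Theorem \ref{athm.sec} and Todd's orientability result. You have simply spelled out what that citation encapsulates: encoding almost-complementary bases as END-OF-LINE vertices, pivot steps as edges, Todd's subdeterminant sign as the local orientation, and Theorem \ref{athm.sec} to guarantee the path from the primary ray terminates at a $z=0$ vertex. Your explicit treatment of the inherent degeneracy via symbolic lexicographic perturbation is a welcome addition — the paper merely declares the polyhedron ``non-degenerate except for'' the one noted dependency and leaves the perturbation implicit — but it does not change the substance of the argument.
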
 
\begin{proof} 
By Theorem \ref{athm.sec}, the Algorithm must converge to an equilibrium. Now, by Todd's result \cite{todd} on the
orientability of the path followed by a complementary pivot algorithm, we get a proof of membership of the problem in PPAD.
\end{proof}

Recall that the polyhedron $\CP'$ corresponding to NHAD-LCP' has inherent degeneracy as explained in Section \ref{asec.alg}. 
The reason is that at any solution to NHAD-LCP with $z = 0$, the market clearing conditions are
satisfied and the dependence in the constraints established in Lemma \ref{alem.clear} holds.  
We have assumed that there are no other degeneracy except this in the polyhedron corresponding $\CP'$.
Let $v$ be a vertex solution to
NHAD-LCP' with $z=0$. Then it is easy to show that there is exactly one $j \in \CG$ with $\ppj=0$ at $v$.  Relaxing $\ppj=0$
gives an unbounded edge, starting at $v$, at which $z$ remains zero.  Therefore, every point of this edge corresponds to a
market equilibrium in which the prices at $v$ are appropriately scaled.

\begin{theorem} \label{athm.odd} 
If a market $\CM$ with SPLC production and SPLC utilities satisfies strong connectivity, no
positive cycle and enough demand, and its polyhedron $\CP'$ corresponding to NHAD-LCP' is  non-degenerate, then $\CM$ has an
odd number of equilibria,  up to scaling.  
\end{theorem}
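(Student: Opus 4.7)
The plan is to perform a parity count on the one-dimensional almost-complementary subcomplex of $\CP'$, leveraging the absence of secondary rays established in Theorem \ref{athm.sec}.

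First, I would describe the almost-complementary 1-skeleton of $\CP'$. Under the stated non-degeneracy hypothesis---no degeneracy beyond the inherent one recorded in Section \ref{asec.alg} from the linear dependence in Lemma \ref{alem.clear}---every almost-complementary vertex of $\CP'$ has exactly two almost-complementary edges incident to it (the two Lemke pivots, namely relaxing the most recently tightened variable and relaxing its complementary slack). Consequently, the almost-complementary subset of $\CP'$ is a topological 1-manifold whose connected components are simple paths (with two unbounded ends) and simple cycles (with no ends).

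Second, I would classify the unbounded ends. An unbounded almost-complementary edge of $\CP'$ is either the primary ray, a secondary ray (an unbounded edge with $z > 0$ other than the primary ray), or an unbounded edge with $z = 0$. Theorem \ref{athm.sec} rules out secondary rays. The discussion preceding the theorem characterises the $z = 0$ unbounded edges precisely: at any $z = 0$ vertex exactly one $\ppj$ vanishes, and relaxing $\ppj = 0$ yields an unbounded edge along which $z$ stays zero and the solution is merely rescaled. By Theorem \ref{athm.lcp}, these $z = 0$ unbounded edges are in bijection with the equivalence classes of equilibria of $\CM$ under scaling.

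Third, I would close by parity. Each simple cycle contributes no ends; each simple path contributes exactly two. By Theorem \ref{athm.sec} the algorithm in Table \ref{atab.alg} starts at the primary ray and terminates at some $z = 0$ unbounded edge, so one particular path has the primary ray as one end and a single equilibrium class as the other, while every remaining path has two equilibrium classes as its ends. If $P$ denotes the number of paths, the total number of equilibrium classes is $2P - 1$, which is odd.

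The main technical issue---and exactly what the stated non-degeneracy hypothesis is tailored to preempt---is the clean 1-manifold description at the $z = 0$ vertices where the inherent degeneracy is active. One must check that the sole extra local degree of freedom at such a vertex is absorbed by the scaling edge (so that the vertex still has degree exactly two in the 1-skeleton, with one side being a bounded Lemke pivot and the other being the unbounded scaling ray), ensuring that the parity count is not corrupted by a hidden higher-valency. Once this local description is verified, the theorem reduces to the standard endpoint-pairing principle underlying Lemke's scheme.
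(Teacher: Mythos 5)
Your proposal is correct and follows essentially the same route as the paper: the endpoint-pairing argument on the almost-complementary 1-skeleton of $\CP'$, using Theorem \ref{athm.sec} to exclude secondary rays and the discussion preceding the theorem to identify each $z=0$ endpoint with an unbounded scaling edge (hence one equilibrium class). You spell out the local degree-two verification at $z=0$ vertices a bit more explicitly than the paper does, but the argument is the same.
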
 

\begin{proof} 
As observed in Appendix \ref{sec.LCP}, the set of
solutions $S$ to NHAD-LCP' consists of paths and cycles.  The solutions of NHAD-LCP' satisfying $z = 0$ are precisely the
solutions to NHAD-LCP. By Theorem \ref{athm.lcp}, the latter exactly capture the equilibria of market
$\CM$, up to scaling.  Now, the solutions of NHAD-LCP' satisfying $z = 0$ occur at endpoints of such paths (under
non-degeneracy).  One of the paths starts with the primary ray and ends with an equilibrium. Since by Theorem \ref{athm.sec}
$\CP'$ has no secondary rays, the rest of the equilibria must be paired up. Hence there are an odd number of equilibria.
\end{proof}

\section{Strongly Polynomial Bound}\label{asec.constant} 
In this section we show that our algorithm is strongly polynomial 
when either the number of goods or number of agents plus number of firms is constant. 
We show a strongly polynomial bound on the number of vertices in the solution set $S$ of
NHAD-LCP', for each of the case. This in turn gives a strongly polynomial bound for our algorithm which traverses a path in $S$. 

In each case we create regions in a constant dimensional space by introducing strongly polynomially many hyperplanes.
We note that the number of non-empty regions formed by $N$ hyperplanes in $\R^d$ is at most $O(N^d)$. Thus we get strongly
polynomial bound on number of regions. After this we show that at most two vertices of $S$ can map to a region thus created.
We extend the construction of \cite{GMSV}. 
The crucial addition in both the cases is to capture the optimal
production plan for each firm and the uncertainty about the amount of good available to agents at equilibrium. 

\subsection{Constant number of goods}

For the constant number of goods consider the cell decomposition in $(p_1,\dots,p_n,z)$-space by adding hyperplanes as
follows: For each segment $(f,j,k)$ of firms add $\alpha^f_{jk}p_{j_f} -p_{j}=0$.  For each $5$-tuple $(i,j,j',k,k')$, where
$i \in \CA$, $j\neq j' \in \CG$, $k\le |U^i_j|$ and $k'\leq |U^i_{j'}|$, introduce hyperplane
${u^i_{jk}}{p_{j'}}-{u^i_{j'k'}}{p_{j}}=0$. These hyperplanes divide the space into cells and each cell has one of the signs
$<,=,>$ for each hyperplane. For firm $f$ let the $Z^f$ contains all the segments $(f,j,k)$ with $\alpha^f_{jk}p_{j_f}
-p_{j}>0$, and let a placeholder $\CE^f = \sum_{(f,j,k) \in Z^f} o^f_{jk} (\alpha^f_{jk}p_{j_f} -p_{j})$.  For each agent,
these signs give partial order on the bang-per-buck of her segments.  Using this information for a given cell, we can sort
all segments $(j,k)$ of agent $i$ by decreasing bang-per-buck, and partition them by equality into classes:
$Q^i_1,Q^i_2,\cdots$.  Let $Q^i_{< l}$ denote $Q^i_1 \cup Q^i_2 \cup \ldots \cup  Q^i_{l - 1}$.  Similarly, we define
$Q^i_{\leq l}$ and $Q^i_{> l}$.

Next we want to capture the flexible partition. To do this, we further subdivide a cell by adding hyperplane $\sum_{(j,k)
\in Q^i_{< l}}$ $l^i_{jk}p_j = \sum_{j\in\CG}w_{ij}p_j + \sum_{f \in \CF} \theta^i_f \CE^f - z$, for each agent $i$ and each
of her partitions $Q^i_l$. For any given subcell, let $Q^i_{l_i}$ be the right most partition such that $\sum_{(j,k) \in
Q^i_{< l_i}} l^i_{jk}p_j < \sum_{j\in\CG}w_{ij}p_j + \sum_{f \in \CF} \theta^i_f \CE^f - z$, then $Q^i_{l_i}$ is the
flexible partition for agent $i$.  In addition, we add hyperplanes $p_j=c_j,\ \forall j \in \CG$ and $z=0$, and consider
only those cells where $p_j\ge c_j$ and $z\ge 0$.

Given a fully-labeled vertex $(\py,z)$ of $\CP'$, there is a natural cell associated with it, namely due to projection of it
on $(\pp,z)$-space by mapping $p'_j$ to $p'_j+c_j$ and $z$ to $z$ itself. 

\begin{lemma}\label{alem.fg} 
At most two vertices of $S$ can map to a region.
Furthermore, if a region is mapped onto from two vertices, then they must be adjacent.  \end{lemma}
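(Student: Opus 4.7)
The plan is to adapt the argument from \cite{GMSV} for exchange markets to incorporate SPLC production, by showing that within a fixed cell $C$ the locus of complementary vertices of $\CP'$ is contained in a single edge (line segment) of $\CP'$. I would proceed in three phases: (i) use $C$ to eliminate almost all LCP variables, (ii) perform a dimension count that exposes a one-parameter family, and (iii) conclude from convexity of $C$.

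First, I would translate the cell $C$ into a combinatorial type that every vertex of $S$ projecting to $C$ must realize. On the firm side, the sign of $\alpha^f_{jk}p_{j_f}-p_j$ on $C$, together with the complementarity conditions (\ref{aeq.01}') and (\ref{aeq.02}') and $p_j \ge c_j > 0$, forces $r^f_{jk}=o^f_{jk}p_j$ and $\beta^f_{jk}=\alpha^f_{jk}p_{j_f}-p_j$ on the positive-profit segments and $r^f_{jk}=\beta^f_{jk}=0$ on the negative-profit ones; in particular $\CE^f$ and $s^f_{jk}$ become the linear functions of $\pp$ built into the placeholder used to define $C$. On the agent side, the bang-per-buck ordering recorded by $C$ fixes the forced, flexible and undesirable partitions. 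By the analysis in Lemma \ref{alem.optbund}, forced segments give $q^i_{jk}=l^i_{jk}p_j$ and $\gamma^i_{jk}=u^i_{jk}\lambda_i - p_j$, undesirable segments give $q^i_{jk}=\gamma^i_{jk}=0$, and every flexible segment satisfies $\gamma^i_{jk}=0$ with $q^i_{jk}\in[0,l^i_{jk}p_j]$.

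After this elimination the only remaining unknowns are $\pp'$, $z$, $\lambda_i$ and the flexible $q^i_{jk}$. The $\lambda_i$ are pinned down by the flexible-partition hyperplane of $C$, so they are linear in $\pp$. The active-constraint structure at a complementary vertex then consists of good-clearing (\ref{aeq.1}) for each $j$ with $p'_j>0$, money-conservation (\ref{aeq.9}) for each $i$ with $\lambda_i>0$, together with the rule that in each flexible partition at most one $q^i_{jk}$ is strictly interior while every other flexible segment is either $0$ or $l^i_{jk}p_j$. This reduces the count to a linear system whose variables are $\pp'$, $z$ and one scalar per agent (the mass routed into the interior flexible segment, all others being at their extremes).

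The crucial dimension step is to invoke Lemma \ref{alem.clear}: the sum of (\ref{aeq.1}) over goods equals the sum of (\ref{aeq.9}) over agents modulo $|\CA|z$, so the active-equality system has exactly one linear dependence. Coupled with the fact that the non-flexible variables are purely linear in $\pp$ and $z$, the set of vertices of $\CP'$ sitting over $C$ lies on a single affine line $L$ in the full LCP space; moving along $L$ corresponds to an almost-complementary direction and therefore to an edge of $\CP'$. Since $C$ is convex, $L\cap C$ is a line segment, so it has at most two endpoints, and each endpoint is a complementary vertex of $\CP'$. This gives the bound of two vertices per cell, and, whenever two exist, they are the endpoints of the same edge of $\CP'$, so they are adjacent — establishing both assertions of the lemma.

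The main obstacle I expect is verifying the ``exactly one degree of freedom'' claim in the boundary situations: when $C$ lies on one of the hyperplanes $\alpha^f_{jk}p_{j_f}=p_j$ (so firm $f$ has a zero-profit segment whose $r^f_{jk}$ becomes flexible), on a bang-per-buck hyperplane $u^i_{jk}p_{j'}=u^i_{j'k'}p_j$ (so a flexible partition acquires more than one element), or on $p_j=c_j$ (so some $p'_j=0$ relaxes a clearing constraint). In each such case new variables appear free, but the equations defining $C$ supply exactly the matching number of additional linear constraints; I would carry out a careful case analysis to check this balance, so the net dimension of $L$ remains one and the rest of the argument carries through unchanged.
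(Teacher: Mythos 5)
Your overall strategy is the same as the paper's: the cell's sign vector fixes, for every complementarity pair of NHAD-LCP', which of the two equalities must hold at any fully-labeled vertex mapping to that cell; these equalities cut out a set of dimension at most one; intersecting that set with $\CP'$ yields a point, an edge, or nothing, hence at most two vertices, adjacent when there are two. That is precisely what the paper does. However, two parts of your write-up are off.

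First, the ``crucial dimension step'' via Lemma~\ref{alem.clear} is both unnecessary and not correct as stated. No linear dependence is needed for the dimension count: the polyhedron $\CP'$ sits in a space where fixing one equality per complementarity pair already brings the dimension down to one (there is exactly one more degree of freedom than there are complementarity pairs, because of the Lemke variable $z$). Moreover, the dependence Lemma~\ref{alem.clear} exhibits holds only when \emph{all} good-clearing constraints and \emph{all} agent-balance constraints are simultaneously tight, which is not the case for a generic cell (in the cell one chooses either the clearing equality or $p'_j=0$ for each $j$, and either the balance equality or $\lambda_i=0$ for each $i$); and when all are tight the relation forces $z=0$ rather than merely dropping the rank by one. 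So this step should simply be deleted and replaced by the straightforward codimension count.

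Second, the ``obstacles'' you flag in the last paragraph (cells lying on $\alpha^f_{jk}p_{j_f}=p_j$, on a bang-per-buck tie, or on $p_j=c_j$) are not boundary pathologies requiring a separate case analysis. The paper sidesteps them by writing the cell-to-equality map with non-strict inequalities on one side of each test (e.g., ``if $\alpha^f_{jk}p_{j_f}-p_j\ge 0$ then \dots else \dots''), so a cell on a hyperplane simply enforces both equalities, tightening the line to a point or empty set rather than loosening it. With this convention the argument is uniform and the extra bookkeeping in your proposal is avoidable.

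Also, a small but real gap: you assert that at a complementary vertex ``in each flexible partition at most one $q^i_{jk}$ is strictly interior.'' That is a non-degeneracy-style property of the vertex, not a constraint you get to impose, and it is not used in the paper's argument. What matters is only that one equality per complementarity pair is fixed by the cell; the intersection with $\CP'$ then automatically picks out at most two vertices without any extra assumption about how the flexible money splits.

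Bottom line: same idea as the paper, and the conclusion is right, but the Lemma~\ref{alem.clear} dependence argument should be removed (it is wrong and not needed), and the boundary cases dissolve once the cell-to-equality map is defined with weak inequalities, as the paper does.
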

\begin{proof} Given a cell we specify one equality for every complementarity condition, to be satisfied by the fully-labeled
vertex mapping to it.  A fully labeled vertex $v=(\plambda,\pp',\pq,\pgamma,\pr,\pbeta,z)$, which maps onto a given cell
must satisfy the following equalities. In the cell,

\begin{itemize} 
\item If $\alpha^f_{jk}p_{j_f} -p_{j} \ge 0$ then $\alpha^f_{jk}p'_{j_f}-\ppj -\beta^f_{jk}=c_{j} -
\alpha^f_{jk}c_{j_f}$ else $r^f_{jk}=0$ at $v$.  
\item If $\alpha^f_{jk}p_{j_f} - p_{j}\le 0$ then $\beta^f_{jk}=0$ else $r^f_{jk}-o^f_{jk}\ppj = o^f_{jk}c_{j}$ at $v$.  
\item If $p_j>c_j$ then $\sum_{i,k} {q_{jk}^i} + \sum_{f,k} r^f_{jk} -
\ppj - \sum_{f\in \CF(j),j',k}s^f_{j'k}  = c_j$ else $\ppj=0$ at $v$.  
\item If $\sum_{j}w^i_jp_j + \sum_{j} \theta^i_f \CE^f -z\ge 0$ (second set of hyperplanes for the tuple $(i,1)$) 
then $\sum_{j} {w_j^i (\ppj + c_j)}+\sum_{f} \theta^i_f \CE^f -
\sum_{j,k} {q_{jk}^i} -z = 0$ else $\lambda_i=0$ at $v$.  
\item If ${u^i_{jk}}{p_{j'}}-{u^i_{j'k'}}{p_{j}}\ge 0$ for a
$(j',k') \in Q^i_{l_i}$ then $u^i_{jk} \li -  \ppj - \gijk = c_j$ else $q^i_{jk}=0$ at $v$.  
\item If ${u^i_{jk}}{p_{j'}}-{u^i_{j'k'}}{p_{j}}\le 0$ for a $(j',k') \in Q^i_{l_i}$ then $\gamma^i_{jk}=0$ else $q^i_{jk} - l^i_{jk}
\ppj = l^i_{jk}c_j$ at $v$.  \end{itemize}

Since the above conditions enforces one equality from each complementary condition of NHAD-LCP', their intersection forms a
line.  If this line does not intersect $\CP'$, no fully labeled vertex gets mapped to the given cell. If it does then
intersection can be either a fully labeled vertex, say $v$, or a fully labeled edge -- we say that an edge of the polyhedron
$\CP'$ is {\em fully labeled} if the solution represented by each point of this edge is fully labeled.  In the former case
only vertex $v$ gets mapped to the cell and in the latter case endpoints of the fully labeled edge map to the cell.
\end{proof}

\subsection{Constant number of agents and firms} 
Let $\CG(\CF)$ be the set of goods produced by the firms, and let they be numbered $\{1,\dots,a\}$ (wlog).
Let $m$ denote the number of agents.
In this section we consider segment configurations for every good. Consider a $\lambda_1,\dots,\lambda_m,p_1,\dots,p_a$-space. 

Again for every segment $(f,j,k)$, where $j\in \CG(\CF)$,
add $\alpha^f_{jk}p_{j_f}-p_{j}=0$. For every segment $(i,j,k)$, where $j \in \CG(\CF)$, add $u^i_{jk}\lambda_i-p_j=0$. These capture
segment configurations for goods in $\CG(\CF)$. Add $\lambda_i \ge 0,\forall i \in \CA$ and $p_j \ge c_j,\forall j \in \CG(\CF)$,
where $c_j$'s are the constants obtained by solving (\ref{aeq.lblp}).

Next for every good outside $\CG(\CF)$ we consider partition of segments $(i,j,k)$ and $(f,j,k)$ together using the following
observation.  Given a fully labeled vertex, for a good $j$ consider the value $u^i_{jk}\lambda_i$ for segments $(i,j,k)$ and
$\alpha^f_{jk}(p'_{j_f}+c_{j_f})$ for segments $(f,j,k)$. Sort them in decreasing order and partition them by equality. It is easy
to verify that at this vertex, good $j$ gets allocated to agents and firms (as a raw material) in the order of partitions,
starting from the first. We call the last allocated partition as flexible partition, all the partitions before it as forced
partitions and all partitions after it as undesirable partitions for good $j$.  Further, let $(i,j,k)$ or $(f,j,k)$ be a
segment in the flexible partition of good $j$, then, we have $u^i_{jk}\lambda_i=p'_j+c_j$ or $\alpha^f_{jk}(p'_{j_f}+c_{j_f})=
p'_j+c_j$ respectively. Therefore, the flexible partition of any good defines its price.

Next we capture the segment configurations for each good not in $\CG(\CF)$. Let $\overline{\CG(\CF)} = \CG\setminus \CG(\CF)$.
Introduce following three types of hyperplanes: 
\begin{itemize} 
\item $u^i_{jk}\lambda_i-u^{i'}_{jk'}\lambda_{i'}=0$ for each
$(i,i',j,k,k')$, where $i\ne i'\in\CA,\ j\in \overline{\CG(\CF)},\ k\le |U^i_{j}|$ and $k'\le |U^{i'}_{j}|$.  
\item $u^i_{jk}\lambda_i-\alpha^f_{jk'}p_{j_f}=0$ for each $(i,j,k,f,k')$, where $i \in \CA$, $j \in \overline{\CG(\CF)}$, $k \le
|U^i_{j}|$, and $k' \le |P^f_j|$.  
\item $\alpha^f_{jk}p_{j_f} - \alpha^{f'}_{jk'}p_{j_{f'}}= 0$ for each
$(f,f',j,k,k')$, where $f\ne f' \in \CF$, $j \in \overline{\CG(\CF)}$, $k \le |P^f_j|$ and $k'\le |P^{f'}_j|$.  
\end{itemize}

Given a cell, the signs of these hyperplanes in the cell give partial order of segments $(i,j,k)$ and $(f,j,k)$ for every
good $j \in \overline{\CG(\CF)}$.  For a good $j$ sort its segments in decreasing order using this partial order, and partition
them by equality in classes: $Q^j_1,Q^j_2,\cdots$.  Since $j$ is not produced its available amount is always $1$.  Next
we capture the flexible partition for good $j$.  For a fully sold good, it may be computed easily by just summing up the
segment lengths, $l^i_{jk}$ for $(i,j,k)$ and $o^f_{jk}$ for $(f,j,k)$, starting from the first partition until it becomes
$1$. However, a fully labeled vertex may have undersold goods. Since the price of such a good is fixed to $c_j$ ($p'_j$ is
zero), segments $(i,j,k)$ and $(f,j,k)$ in its flexible partition have $u^i_{jk}\lambda_i=c_j$ and $\alpha^f_{jk}p_{j_f}=c_{j_f}$.
To capture this we introduce $u^i_{jk}\lambda_i-c_j=0$ for each $(i,j,k)$ and $\alpha^f_{jk}p_{j_f}-c_{j_f}=0$ for each $(f,j,k)$.
In general the flexible partition for good $j$ is the earlier one of the two: partition when good is fully sold and the
partition with value $c_j$. This can be easily deduced for a given cell from the signs of the hyperplanes. Let $Q^j_{l_j}$
be the flexible partition of $j$. 

A fully-labeled vertex $(\py,z)$ maps naturally to $(\plambda,p_1,\dots,p_a)$-space, by mapping $\lambda_i$ to $\lambda_i$
and $p'_j$ to $p'_j+c_j$. Analogues to Lemma \ref{alem.fg} we get the following result. 

\begin{lemma}\label{alem.faf} 
At most two vertices of $S$ can map to a region.
Furthermore, if a region is mapped onto from two vertices, then they must be adjacent.  \end{lemma}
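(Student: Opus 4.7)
The plan is to transfer the template of Lemma~\ref{alem.fg} to the new cell decomposition. Given a region $C$ in the $(\lambda_1,\dots,\lambda_m,p_1,\dots,p_a)$-space, I will argue that the sign pattern of the hyperplanes defining $C$, together with the induced partitions $Q^j_1,Q^j_2,\dots$ and the distinguished flexible partition $Q^j_{l_j}$ for each $j\in\overline{\CG(\CF)}$, determines for every complementarity pair of NHAD-LCP' which of its two defining inequalities is tight at any fully-labeled vertex $v$ of $\CP'$ projecting into $C$. Coupled with the equalities (\ref{aeq.03}), the resulting system of one equation per complementarity pair cuts out (under the stated non-degeneracy) an affine line in the ambient space of NHAD-LCP'; its intersection with $\CP'$ is either empty, a single vertex, or a bounded fully-labeled edge of $\CP'$. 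In the last case the edge has two endpoints, which are adjacent vertices of $\CP'$ and both map into $C$.

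The case analysis splits by the good $j$. When $j\in\CG(\CF)$, the hyperplane $\alpha^f_{jk}p_{j_f}-p_j=0$ belongs to the decomposition, so the sign of $\alpha^f_{jk}p_{j_f}-p_j$ is fixed on $C$: a non-negative sign forces (\ref{aeq.01}) to be tight, while a non-positive sign forces $r^f_{jk}=0$. An identical dichotomy using the hyperplane $u^i_{jk}\lambda_i-p_j=0$ resolves (\ref{aeq.3})/(\ref{aeq.3}') for agent segments $(i,j,k)$ with $j\in\CG(\CF)$. For $j\in\overline{\CG(\CF)}$, the three families of hyperplanes added for $\overline{\CG(\CF)}$ linearly order the offer values $u^i_{jk}\lambda_i$ and $\alpha^f_{jk}p_{j_f}$, producing the partitions $Q^j_l$; the additional hyperplanes $u^i_{jk}\lambda_i-c_j=0$ and $\alpha^f_{jk}p_{j_f}-c_{j_f}=0$ decide whether the common value of $Q^j_{l_j}$ strictly exceeds $c_j$, and combined with the (constant) cumulative lengths of each partition this pins down $Q^j_{l_j}$ itself. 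Segments in $Q^j_{<l_j}$ are forced, so their capacity inequalities (\ref{aeq.02}) or (\ref{aeq.4}) are tight; segments in $Q^j_{>l_j}$ are undesirable, so $r^f_{jk}=0$ or $q^i_{jk}=0$; for a segment in $Q^j_{l_j}$ the common value equation $\alpha^f_{jk}p_{j_f}=p'_j+c_j$ or $u^i_{jk}\lambda_i=p'_j+c_j$ holds by definition of the flexible partition, discharging the remaining side of (\ref{aeq.01})/(\ref{aeq.01}') or (\ref{aeq.3})/(\ref{aeq.3}'). Finally, the market-clearing pair (\ref{aeq.1})/(\ref{aeq.1}') is resolved via the hyperplane $p_j=c_j$ when $j\in\CG(\CF)$ and via the flexible-partition test above when $j\in\overline{\CG(\CF)}$, and the agent-budget pair (\ref{aeq.9})/(\ref{aeq.9}') is resolved via the sign of $\lambda_i$ encoded by the hyperplane $\lambda_i=0$.

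The main obstacle is verifying that this prescription is consistent and exact, i.e.\ that the resulting list of equalities assembles into a system whose common solution set is an affine line rather than a larger flat. The remaining freedom lives precisely on the flexible partitions $Q^j_{l_j}$, whose allocations are unconstrained beyond the single market-clearing equation for good $j$; these freedoms are then coupled across agents and firms by the budget equations (\ref{aeq.9}), and as in Lemma~\ref{alem.fg} the inherent dependence captured by Lemma~\ref{alem.clear} collapses what appear to be two extra degrees of freedom into one. Under the non-degeneracy assumption of Section~\ref{asec.alg}, the line intersects $\CP'$ in at most one fully-labeled edge, whose two endpoints are the only vertices of $S$ mapping into $C$ and are adjacent in $\CP'$; otherwise the intersection is empty or a single vertex, giving at most one vertex. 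Either way, at most two vertices map to $C$, and if two then they are adjacent, as claimed.
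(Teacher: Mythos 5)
Your proof follows the same template the paper invokes for this lemma (the paper gives no independent proof, just "analogous to Lemma~\ref{alem.fg}"): read off the sign pattern of the cell to select one tight equality from each complementarity pair of NHAD-LCP', observe that these equalities cut out a line in the ambient space, and note that the line meets $\CP'$ in a vertex or a fully-labeled edge, giving at most two (adjacent) vertices. That is the intended argument, and your case split by $j\in\CG(\CF)$ versus $j\in\overline{\CG(\CF)}$ using the three families of ordering hyperplanes plus the threshold hyperplanes $u^i_{jk}\lambda_i = c_j$, $\alpha^f_{jk}p_{j_f}=c_j$ is the right adaptation of the six-bullet resolution used for Lemma~\ref{alem.fg}.

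Two small imprecisions worth fixing. First, each segment carries \emph{two} complementarity pairs -- $(\ref{aeq.01})/(\ref{aeq.01}')$ together with $(\ref{aeq.02})/(\ref{aeq.02}')$ for firm segments, and $(\ref{aeq.3})/(\ref{aeq.3}')$ together with $(\ref{aeq.4})/(\ref{aeq.4}')$ for agent segments -- and you resolve only one of them in each of the forced/flexible/undesirable cases (e.g., a forced firm segment also has $(\ref{aeq.01})$ tight, not just $(\ref{aeq.02})$; an undesirable one also needs $\beta^f_{jk}=0$; a flexible one also needs $\beta^f_{jk}=0$). These are easy to add, but they are needed to actually produce one equality per pair, matching the paper's explicit bullets in Lemma~\ref{alem.fg}. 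Second, the sentence about "two extra degrees of freedom collapsing into one" via Lemma~\ref{alem.clear} is a red herring here: the reason the chosen equalities cut out a line is simply that there are $n$ complementarity pairs in the $(n+1)$-dimensional $(\py,z)$-space, and independence is exactly what the non-degeneracy assumption (outside the $z=0$ face) supplies. Lemma~\ref{alem.clear}'s dependence is what motivates the \emph{definition} of non-degeneracy for $\CP'$, not a device used inside this counting argument.
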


It is clear that our algorithm follows a systematic path instead of a brute force enumeration of all the cells. The next
theorem follows directly from Lemmas \ref{alem.fg} and \ref{alem.faf}, since the number of hyperplane introduced is strongly
polynomial in both the cases.

\begin{theorem}\label{athm_stronglyPoly} For a market, under SPLC production and SPLC utilities, with a constant number of
goods, or agents and firms, our algorithm computes an equilibrium in strongly polynomial time.  \end{theorem}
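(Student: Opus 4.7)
The plan is to bound the number of vertices traversed by the complementary pivot path inside the polyhedron $\CP'$, separately for each of the two regimes, by exhibiting a mapping from path vertices into a constant-dimensional cell decomposition. For the constant-goods regime I would work in $(\pp,z)$-space (dimension $n+1$); for the constant agents-plus-firms regime I would work in $(\plambda,p_1,\dots,p_a)$-space (dimension $m+a$, where $a=|\CG(\CF)|\le |\CF|$). In both cases the ambient space has constant dimension $d$, and the hyperplane families introduced before Lemmas~\ref{alem.fg} and \ref{alem.faf} are of size $N$ polynomial in the total number of segments. The standard bound that $N$ hyperplanes in $\R^d$ partition the space into $O(N^d)$ non-empty cells then gives a polynomial count of regions, with no dependence on bit-lengths of the input, hence a strongly polynomial bound.

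Next I would invoke Lemmas~\ref{alem.fg} and \ref{alem.faf}, which already do the main structural work: for every such region, at most two vertices of the solution set $S$ map into it, and when there are two, they are adjacent. Since the Lemke path is a simple path in the $1$-skeleton of $\CP'$ (it never revisits a vertex, by the standard argument recalled in Appendix~\ref{sec.LCP}), distinct vertices along the path must either map to distinct regions or to pairs of adjacent vertices sharing a region. In either case the number of path vertices is at most twice the number of non-empty regions, hence $O(N^d)$, which is polynomial in the input size with no dependence on numerical magnitudes.

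To pass from a polynomial bound on path length to strong polynomiality of the whole algorithm I need two supplementary observations. First, each pivot step of Lemke's scheme is a single linear-algebra step on the tableau of NHAD-LCP$'$, and when carried out in exact arithmetic its bit-complexity is controlled by the usual Edmonds-style bound on sizes of intermediate entries for a linear system with rational input; this gives polynomial arithmetic per pivot. Second, the lower-bound vector $\cc$ obtained in Lemma~\ref{ale.lb} via Farkas on the system~(\ref{aeq.lblp}) can itself be produced in strongly polynomial time (it only requires combinatorial cycle inspection of $G_\CF(\CM)$ together with an LP of constant dimension in one of the two regimes), so the preprocessing stage does not spoil strong polynomiality.

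The main obstacle I anticipate is the second supplementary observation: showing that, although the number of pivots is $O(N^d)$, each pivot is strongly polynomial, i.e.\ the bit-sizes of tableau entries do not blow up along the path. One has to argue this carefully because NHAD-LCP$'$ is degenerate along $z=0$; my plan is to follow the line of \cite{DK08}, perturbing lexicographically so that pivots are well-defined and invoking Tardos-type bounds to decouple the arithmetic cost from the magnitudes of the input parameters $\alpha^f_{jk}$, $u^i_{jk}$, $o^f_{jk}$, $l^i_{jk}$, $w^i_j$, $\theta^i_f$. Once this is in place, combining the cell-count from Lemmas~\ref{alem.fg}/\ref{alem.faf} with the per-pivot bound yields Theorem~\ref{athm_stronglyPoly}.
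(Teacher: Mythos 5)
Your proposal follows exactly the paper's route: the cell decompositions in $(\pp,z)$-space (constant goods) and $(\plambda,p_1,\dots,p_a)$-space (constant agents plus firms), the $O(N^d)$ bound on nonempty cells cut out by polynomially many hyperplanes, and Lemmas~\ref{alem.fg} and \ref{alem.faf} showing that at most two (adjacent) vertices of $S$ map to any cell — which together bound the length of the Lemke path. Your additional remarks on per-pivot bit-complexity (Cramer/Edmonds bounds on basic feasible solutions of a rational system) and on computing $\cc$ combinatorially are standard and correct, though the paper treats them as implicit rather than as a step needing separate justification.
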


\section{Reduction: Exchange to Production}\label{asec.exctoprod} 
In this section we give a reduction from an exchange market
to an equivalent Arrow-Debreu market with production and {\em linear utility} functions. The main idea is to introduce a
firm for each agent which essentially produces utility for the
agent. A similar reduction was also given in \cite{JV} but for homogeneous utility functions.

The firms introduced by this reduction output exactly one good. For these firms, we can define the
extreme points of production set $\CY^f$ via a production function $P^f: \Rplus^{n-1} \ra \Rplus$, which takes the amount
of each input good used and outputs the amount of the produced good. 

Consider an exchange market $\CM$ with a set $\CA$ of agents and a set $\CG$ of goods, where utility function of agent $i$
is $U^i$ and $\wi$ is her endowment vector. Let $m=|\CA|$ and $n=|\CG|$. Recall that $U^i$ is a non-decreasing (non-negative),
continuous and concave function.  Now we construct an equivalent Arrow-Debreu market with linear utilities 
$\CM'$: Goods map to goods. For every agent $i$ of $\CM$, we create a good
$n+i$, a firm $i$ with production function $P^i=U^i$ producing good $n+i$, and an agent $i$ with the same initial endowment
and non-zero utility for good $n+i$ only, i.e., for a bundle $\xx=(x_1,\dots,x_{n+m})$, $U'^i(\xx)=x_{n+i}$. Further, agent
$i$ owns the firm $i$, {\em i.e.,} $\theta^i_i = 1$.

Let $\pi^i(\pp)$ denotes the maximum profit of firm $i$ at prices $\pp$. Since among all the firms and agents only agent $i$
wants good $n+i$, all the production of firm $i$ will be bought by her at equilibrium. Further, since agent $i$ is
interested in only good $n+i$ and there is no initial endowment of good $n+i$, she will put all her earnings in buying all
the output of firm $i$. Hence her total budget, $\wi\cdot\pp+\pi^i(\pp)$, should be equal to revenue of firm $i$ at
equilibrium. Let $\pr^i$ be the input bundle used by firm $i$ at an optimal production plan for prices $\pp$.  Equating
these two, we get $\wi\cdot \pp = \pr^i \cdot \pp$.

Using this and the fact that no firm uses the new goods as raw material the next lemma follows easily. For a $C\subseteq \CG$, let
$\pp_C$ denotes the vector with coordinates corresponding to $C$ only.

\begin{lemma}
\label{ale.m'vsm}
There is a one-to-one correspondence between equilibria of $\CM'$ and equilibria of $\CM$.
\end{lemma}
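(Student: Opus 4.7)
The plan is to establish the correspondence in both directions by exploiting the observation already recorded before the lemma: for firm $i$ in $\CM'$, the only buyer of its output is agent $i$, and she has no endowment of good $n+i$, so every dollar she earns (from endowment plus profit of firm $i$) is spent on good $n+i$, which forces the accounting identity $\wi\cdot\pp = \pr^i\cdot\pp$.

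\textbf{From $\CM$ to $\CM'$.} Let $(\pp,\{\xx^i\})$ be an equilibrium of $\CM$, and let $\lambda_i$ denote the inverse bang-per-buck at which agent $i$ exhausts her budget in $\CM$ (by standard concave optimization, some supporting price $\lambda_i$ exists so that $\xx^i$ maximizes $U^i$ on the budget half-space at prices $\pp$). I would define prices in $\CM'$ by keeping $\pp$ on $\CG$ and setting $p_{n+i}=1/\lambda_i$; assign firm $i$ the raw-material bundle $\xx^i$ and the output $U^i(\xx^i)$; give agent $i$ exactly this output of good $n+i$. The KKT conditions for agent $i$ in $\CM$ give $\nabla U^i(\xx^i)\le\lambda_i\pp$ (in the subgradient sense), which is equivalent to $p_{n+i}\nabla U^i(\xx^i)\le\pp$, i.e., the KKT condition for firm $i$'s profit maximization, so $\xx^i$ is an optimal production plan. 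Agent $i$'s $\CM'$-budget equals $\wi\cdot\pp + p_{n+i}U^i(\xx^i) - \xx^i\cdot\pp$; using $\xx^i\cdot\pp=\wi\cdot\pp$ (Walras for $i$ in $\CM$), this simplifies to $p_{n+i}U^i(\xx^i)$, which exactly purchases firm $i$'s output. Market clearing on $\CG$ follows from $\sum_i\xx^i=\sum_i\wi$ (raw materials now replace consumption), and clearing on $n+i$ is automatic.

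\textbf{From $\CM'$ to $\CM$.} Let $(\pp,\pr^1,\dots,\pr^m)$ be an equilibrium of $\CM'$, where $\pr^i$ is firm $i$'s raw bundle. I would take prices $\pp_\CG$ and allocations $\pr^i$ as a candidate equilibrium of $\CM$. The identity $\wi\cdot\pp_\CG=\pr^i\cdot\pp_\CG$ established above shows $\pr^i$ lies in agent $i$'s $\CM$-budget set. Optimality is by a swap argument: if some $\xx$ with $\xx\cdot\pp_\CG\le\wi\cdot\pp_\CG=\pr^i\cdot\pp_\CG$ satisfied $U^i(\xx)>U^i(\pr^i)$, then firm $i$ could switch its raw-material purchase to $\xx$, producing $P^i(\xx)=U^i(\xx)>U^i(\pr^i)$ at a no-greater cost, yielding strictly higher profit and contradicting the optimality of $\pr^i$ for firm $i$. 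Market clearing on $\CG$ in $\CM$ follows because the only consumers of original goods in $\CM'$ are the firms, and their total raw usage $\sum_i\pr^i$ equals the total endowment by $\CM'$-clearing.

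\textbf{Bijectivity.} The two constructions are inverses up to the natural scaling of $p_{n+i}$: starting from $\CM$, applying the forward map, then discarding the $n+i$ coordinates recovers $(\pp,\{\xx^i\})$; starting from $\CM'$, applying the backward map and then the forward map reproduces the same raw-material bundles and the same $p_{n+i}$ (determined by the agent's inverse bang-per-buck in $\CM$, which equals the marginal rate pinned down by firm $i$'s KKT conditions). I would conclude by noting that equilibria in either model are scale-invariant, so this gives a one-to-one correspondence in the standard sense.

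\textbf{Main obstacle.} The delicate point is the handling of non-differentiable concave $U^i$: the ``bang-per-buck'' $\lambda_i$ and the corresponding price $p_{n+i}$ must be chosen via a supporting hyperplane (subgradient) rather than a gradient, and one must check that the resulting firm-LP is bounded so that $\pi^i(\pp)$ is well defined. Once this is set up carefully, the rest of the argument is essentially a budget/profit bookkeeping exercise.
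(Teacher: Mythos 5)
Your proposal follows essentially the same route as the paper: in both directions you exploit the budget identity $\wi\cdot\pp=\pr^i\cdot\pp$, identify agent $i$'s optimal exchange-market bundle with firm $i$'s raw-material bundle, and set $p_{n+i}$ equal to the inverse bang-per-buck (the paper calls it ``inverse of the marginal utility per unit of money''). You spell out the swap argument and the subgradient/KKT bookkeeping that the paper leaves implicit, but the underlying argument is the same.
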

\begin{proof}
Let $\ppp$ be equilibrium prices of $\CM'$ and let $\pp=\ppp_{\CG}$. Let $\pr^i$ be the bundle of raw material used by firm
$i$ at equilibrium. Since $P^i=U^i$ and since $\pr^i$ maximizes profit $p'_{n+i} P^i(\pr)-\pr\cdot\ppp_{\CG}$ and satisfies
$\wi\cdot \ppp = \pr^i \cdot \ppp$, it should also be an optimal bundle of agent $i$ in $\CM$ at prices $\pp$. Market
clearing follows.

For the reverse mapping, it is easy to verify that $\ppp$, where $\ppp_{\CG}=\pp$ and $p'_{n+i}$ be the inverse of the
marginal utility per unit of money of agent $i$, gives an equilibrium for $\CM'$. 
\end{proof}

Examples of ``form $X$'' in the next theorem are linear, SPLC, Leontief, (non-separable) PLC etc.

\begin{theorem}\label{athm.reduction}
An exchange market with utility functions of the form $X$ can be reduced to an equivalent Arrow-Debreu market with linear
utilities and production functions of the form $X$.
\end{theorem}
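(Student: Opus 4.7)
The plan is to verify that the construction described immediately before the theorem statement already does all the work, so the proof reduces to three observations plus an appeal to Lemma \ref{ale.m'vsm}. First I would record that the constructed market $\CM'$ indeed has linear utilities, since by construction agent $i$ has utility function $U'^i(\xx) = x_{n+i}$, which is linear in the bundle $\xx$; this step requires no computation, only reading off the definition.

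Second I would verify that the production functions in $\CM'$ are of the same form $X$ as the utility functions in $\CM$. This is immediate from the assignment $P^i = U^i$ for each firm $i$ in $\CM'$: any structural property (linearity, separability, piecewise linearity, concavity, being Leontief, being PLC, etc.) that defines the class $X$ and is enjoyed by $U^i$ is transferred verbatim to $P^i$. A sentence noting that the class of "form $X$" functions considered in the theorem is defined by such structural properties (closed under the identity assignment $P^i \defeq U^i$) suffices.

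Third I would invoke Lemma \ref{ale.m'vsm}, which already establishes the one-to-one correspondence between equilibria of $\CM'$ and equilibria of $\CM$. Combining the three points yields the theorem: given $\CM$ with utilities of form $X$, the constructed $\CM'$ is an Arrow-Debreu market with linear utilities, production functions of form $X$, and equivalent equilibrium set. I would also remark in passing that no new goods are ever used as raw materials (since firm $i$ only takes goods in $\CG$ as inputs and produces the fresh good $n+i$ wanted solely by agent $i$), which is what makes Lemma \ref{ale.m'vsm} applicable without further hypotheses.

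There is essentially no hard step here, because the technical content has been absorbed into Lemma \ref{ale.m'vsm}. The only thing one must be slightly careful about is the informal meaning of "form $X$": the argument works for any class $X$ of non-negative, non-decreasing, concave functions on $\Rplus^n$ that is preserved under the identification $P^i = U^i$, and in particular for linear, SPLC, Leontief, and PLC functions as listed in the paper. I would close the proof by noting this uniformity, so that the same reduction simultaneously yields hardness transfer for all the relevant utility classes considered in the sequel.
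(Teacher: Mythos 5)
Your proposal is correct and matches the paper's own (implicit) argument: the paper gives the construction of $\CM'$ with $P^i = U^i$ and linear utilities $U'^i(\xx) = x_{n+i}$ just before Lemma~\ref{ale.m'vsm}, and Theorem~\ref{athm.reduction} follows immediately from that lemma together with the observations you list. There is no gap; you have simply made explicit what the paper leaves as a direct consequence.
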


The next theorem follows using hardness results for exchange markets with SPLC utilities \cite{Chen.plc,ChenTeng,VY,CSVY},
and Theorems \ref{athm.ppad} and \ref{athm.reduction}.

\begin{theorem}\label{athm.hard}
The problem of computing an equilibrium of an AD market with linear utilities and SPLC production is PPAD-complete, 
assuming the weakest known sufficiency conditions by Maxfield \cite{maxfield}.
In general checking existence of an equilibrium in these markets is NP-complete.  
\end{theorem}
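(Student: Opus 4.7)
The plan is to obtain the three parts of the statement---containment in PPAD, PPAD-hardness, and NP-completeness of the existence question---by combining Theorem \ref{athm.reduction} with tools already assembled in this paper. Containment in PPAD follows from Theorem \ref{athm.ppad}: linear utilities are a special case of SPLC utilities, and under Maxfield's sufficiency conditions one obtains strong connectivity and no production out of nothing, with \emph{enough demand} recovered (after at most $n$ applications of the algorithm) by the reduction sketched in the remark following Theorem \ref{athm.alg}, so the path-following argument for the PPAD upper bound applies.

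For PPAD-hardness, I would take an arbitrary instance $\CM$ produced by the PPAD-hardness reductions of \cite{Chen.plc,ChenTeng,VY,CSVY} for exchange markets with SPLC utilities and apply Theorem \ref{athm.reduction} to obtain an AD market $\CM'$ with linear utilities in which each firm's production function $P^i=U^i$ is SPLC by construction. Lemma \ref{ale.m'vsm} supplies a polynomial-time bijection between equilibria of $\CM$ and $\CM'$, so PPAD-hardness transfers. The step requiring care is verifying that Maxfield's sufficiency conditions survive the reduction: each new good $n+i$ is produced only by firm $i$ and desired only by agent $i$, so no new production cycle can be created (preserving \emph{no production out of nothing}), and the strong-connectivity-type conditions on $\CM'$ follow from the corresponding properties of $\CM$ together with the observation that agent $i$ is non-satiated for her private good $n+i$ via firm $i$.

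The NP-completeness claim has two sides. NP-hardness is by the same reduction applied to the NP-hardness of existence for exchange markets with SPLC utilities \cite{VY}, noting that Theorem \ref{athm.reduction} is a many-one reduction that preserves the yes/no instance distinction. For membership in NP, the argument mirrors Corollary \ref{acor.splcnpc}: Corollary \ref{acor.rational} guarantees that equilibrium prices, production plans and consumption bundles admit polynomially-sized rational certificates, and given such a certificate one verifies in polynomial time that each firm's plan is optimal using LP \eqref{alp}, that each agent's bundle is optimal via the bang-per-buck characterization of Section \ref{asec.charNlcp}, and that the market clears (a linear check). The main obstacle I foresee is the bookkeeping needed to confirm that Maxfield's conditions pass cleanly through the reduction on the hard instances of \cite{Chen.plc,VY}; the rest of the argument is a direct assembly of earlier results.
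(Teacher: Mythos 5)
Your proposal is correct and takes essentially the same route as the paper: PPAD membership from Theorem \ref{athm.ppad}, PPAD-hardness and NP-hardness from the exchange-market hardness results combined with the reduction of Theorem \ref{athm.reduction}, and NP membership from polynomial-time verification of the equilibrium characterization (with rationality from Corollary \ref{acor.rational}). Your extra care in checking that Maxfield's conditions and strong connectivity pass through the reduction is a reasonable elaboration that the paper leaves implicit, not a divergence in approach.
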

\begin{proof}
Containment in PPAD assuming the sufficiency conditions by Maxfield follows from Theorem \ref{athm.ppad}.
Given prices of goods, checking if they are equilibrium prices can be done in polynomial time using the characterization of
Section \ref{asec.charNlcp}, hence containment in NP follows in general.

Both the hardness follows from the hardness results for exchange markets with SPLC utilities
\cite{Chen.plc,ChenTeng,VY,CSVY} and Theorem \ref{athm.reduction}.  
\end{proof}

\section{Experimental Results}\label{asec.exp} 
We coded our algorithm in Matlab and ran it on randomly generated instances of
markets with SPLC production and SPLC utility functions.  Number of segments are kept the same in all the utility and
production functions, let it be denoted by $\#seg$. An instance is created by picking values uniformly at random -- $w^i_j$s
from $[0,\ 1]$, $\theta^i_f$'s from $[0,\ 1]$, $u^i_{jk}$'s from $[0,\ 1]$, $l^i_{jk}$'s from $[0,\ 10/\#seg]$,
$\alpha^f_{jk}$ from $[0,\ 1]$ (in order to avoid positive cycles in production) and $o^f_{jk}$'s from $[0,\ 10/\#seg]$.
For simplicity we assume that firm $a$ produces good $a$.
For every firm $f$, $\theta^i_f$'s are scaled so that they sum up to one. 
Similarly, for every good $j$, $w^i_j$'s are scaled so that they sum up to one. 
For each pair of agent $i$ and good $j$, $u^i_{jk}$'s are
sorted in decreasing order to get PLC $U^i_j$, and similarly for each pair of firm $f$ and good $j$, $\alpha^f_{jk}$'s are sorted in
decreasing order to get PLC $P^f_j$. The experimental results are given in Table \ref{atab.exp}. Note that, even in the worst
case the number of iterations is always linear in the total number of segments in all the input functions. Total number of
segments in a market with $n$ goods, $m$ agents and $l$ firms is $(mn+l(n-1))\#seg$.

\begin{table} \begin{center} \caption{Experimental Results}\label{atab.exp} \vspace{0.2cm} \begin{tabular}{|c|c|c|c|c|} \hline
\#agents, \#goods, \#firms, \#seg & \#Instances & Min & Avg & Max \\ \hline \hline 
2, 2, 2, 2 & 100   &    8 &    13.61& 18\\ \hline 
5, 5, 5, 2 & 100   &   34 &    68.85&    99   \\ \hline 
5, 5, 5, 5 & 100   &  163 &   227.58&   291   \\ \hline
10, 5, 5, 2& 100   &   70 &   119.95&   148   \\ \hline 
10, 5, 5, 5 & 100  &  104 &   396.98&   472   \\ \hline 
10, 10, 10, 2 & 100 &  118 &   224.67&   275  \\ \hline 
10, 10, 10, 5 & 10 &  260 &   714.5 &   905    \\ \hline 
10, 10, 10, 10 & 10 & 326 &  1486.3 &  2210  \\ \hline 
15, 5, 5, 2 & 100  &  141 &   173.24&   214    \\ \hline 
15, 5, 5, 5 & 100  & 500 & 581.42&   684   \\ \hline 
15, 10, 10, 2 & 100 &  219 &   295.84&   374  \\ \hline 
15, 10, 10, 5 & 10  & 1186 &  1934.3 & 2833  \\ \hline 
15, 10, 10, 10 & 10 & 2678 & 2853.2 & 3190   \\ \hline 
\end{tabular}
\end{center} \end{table}

\section{Acknowledgment}

When our first attempts at deriving an LCP for markets with production failed, 
it was Prof. Kenneth Arrow's statement -- if the exchange economy has worked out (in \cite{GMSV}), then so will production -- that kept us going.
We wish to thank him profusely.

\bibliographystyle{abbrv}
\bibliography{kelly}

\appendix
\section{The Linear Complementarity Problem and Lemke's Algorithm}\label{sec.LCP}

Given an $n \times n$ matrix $\MM$, and a vector $\pq$, the linear complementarity problem asks for a vector $\yy$
satisfying the following conditions:

\begin{equation} \label{eq.a} \MM \yy \leq \pq,  \ \ \ \   \yy \geq 0,  \ \ \ \ \pq - \MM \yy \geq 0 \ \ \ \ \mbox{and} \ \
\ \  \yy \cdot (\pq - \MM \yy) = 0.  \end{equation}

The problem is interesting only when $\pq \not \geq 0$, since otherwise $\yy = 0$ is a trivial solution. Let us introduce
slack variables $\pv$ to obtain the equivalent formulation

\begin{equation} \label{eq.b} \MM \yy  + \pv = \pq, \ \ \ \  \yy \geq 0, \ \ \ \ \pv \geq 0 \ \ \ \ \mbox{and} \ \ \ \ \yy
\cdot \pv = 0 .  \end{equation}

Let $\CP$ be the polyhedron in $2n$ dimensional space defined by the first three conditions; we will assume that $\CP$ is
non-degenerate.  Under this condition, any solution to (\ref{eq.b}) will be a vertex of $\CP$, since it must satisfy $2n$
equalities.  Note that the set of solutions may be disconnected.

An ingenious idea of Lemke was to introduce a new variable and consider the system:

\begin{equation} \label{eq.c} \MM \yy  + \pv -z \one  = \pq, \ \ \ \  \yy \geq 0, \ \ \ \ \pv \geq 0, \ \ \ \  z \geq 0  \ \
\ \ \mbox{and} \ \ \ \ \yy \cdot \pv = 0 .  \end{equation}

Let $\CP'$ be the polyhedron in $2n + 1$ dimensional space defined by the first four conditions; again we will assume that
$\CP'$ is non-degenerate.  Since any solution to (\ref{eq.c}) must still satisfy $2n$ equalities, the set of solutions, say
$S$, will be a subset of the one-skeleton of $\CP'$, i.e., it will consist of edges and vertices of $\CP'$.  Any solution to
the original system must satisfy the additional condition $z = 0$ and hence will be a vertex of $\CP'$.

Now $S$ turns out to have some nice properties. Any point of $S$ is {\em fully labeled} in the sense that for each $i$, $y_i
= 0$ or $v_i = 0$.  We will say that a point of $S$ {\em has double label i} if $y_i = 0$ and $v_i = 0$ are both satisfied
at this point. Clearly, such a point will be a vertex of $\CP'$ and it will have only one double label.  Since there are
exactly two ways of relaxing this double label, this vertex must have exactly two edges of $S$ incident at it.  Clearly, a
solution to the original system (i.e., satisfying $z = 0$) will be a vertex of $\CP'$ that does not have a double label.  On
relaxing $z=0$, we get the unique edge of $S$ incident at this vertex.

As a result of these observations, we can conclude that $S$ consists of paths and cycles.  Of these paths, Lemke's algorithm
explores a special one.  An unbounded edge of $S$ such that the vertex of $\CP'$ it is incident on has $z > 0$ is called a
{\em ray}.  Among the rays, one is special -- the one on which $\yy = 0$. This is called the {\em primary ray} and the rest
are called {\em secondary rays}. Now Lemke's algorithm explores, via pivoting, the path starting with the primary ray. This
path must end either in a vertex satisfying $z = 0$, i.e., a solution to the original system, or a secondary ray. In the
latter case, the algorithm is unsuccessful in finding a solution to the original system; in particular, the original system
may not have a solution.  \medskip

\noindent{\bf Remark:}  Observe that $z \one$ can be replaced by $z \pa$, where vector $\pa$ has a 1 in each row in which
$\pq$ is negative and has either a 0 or a 1 in the remaining rows, without changing its role; in our algorithm, we will set
a row of $\pa$ to 1 if and only if the corresponding row of $\pq$ is negative.  As mentioned above, if $\pq$ has no negative
components, (\ref{eq.a}) has the trivial solution $\yy = 0$. Additionally, in this case Lemke's algorithm cannot be used for
finding a non-trivial solution, since it is simply not applicable.

\end{document}